\documentclass[11pt]{article}

\usepackage{tikz}
\usepackage{ktmacros}
\usepackage[utf8]{inputenc}
\usepackage{amsmath}
\usepackage{amsthm}
\usepackage{amssymb}
\usepackage{color}
\usepackage{boxedminipage}
\usepackage{fullpage} 
\usepackage[utf8]{inputenc}
\DeclareMathOperator*{\E}{\mathbb{E}}

\usepackage{booktabs}  

\ifdefined\IsDraft
    \newcommand{\authnote}[2]{{\textbf [{\color{red} #1's Note:} {\color{blue} #2}]}}
		\newcommand{\changed}[1]{{\color{blue} #1}}
		\newcommand{\deleted}[1]{{\color{blue} ~Deleted:~{\color{red} #1}}}
		\newcommand{\TODO}[1]{{\color{red} TODO:} {\color{blue} #1}}
\else
    \newcommand{\authnote}[2]{}
		\newcommand{\changed}[1]{#1}
		\newcommand{\deleted}[1]{}
		\newcommand{\TODO}[1]{}
\fi

\newcommand{\edclose}{\thickapprox_{(\varepsilon,\delta)}}
\newcommand{\edcloseTwo}{\thickapprox_{(2\varepsilon,2\delta)}}
\newcommand{\myvec}[1]{{\bf #1}}

\newtheorem{theorem}{Theorem}[section] 
\newtheorem{lemma}[theorem]{Lemma}
\newtheorem{corollary}[theorem]{Corollary}
\newtheorem{proposition}[theorem]{Proposition}
\newtheorem{definition}[theorem]{Definition}
\newtheorem{claim}[theorem]{Claim}
\newtheorem{remark}[theorem]{Remark}
\newcommand{\abs}[1]{\left\lvert #1 \right\rvert}
\newcommand{\sta}{\stackrel{\mbox{\tiny S}}{{\equiv}}}

\newtheorem*{theorem*}{Theorem}
\newtheorem*{lemma*}{Lemma}
\newtheorem*{corollary*}{Corollary}
\newtheorem*{proposition*}{Proposition}
\newtheorem*{definition*}{Definition}
\newcommand{\N}{{\mathbb{N}}}

\newcommand{\Nt}{\mathbb{N}}

\newcommand{\GF}{\mathbb{GF}}
\newcommand{\bitset}{\{0,1\}}
\newcommand{\SD}{\operatorname{SD}}

\newcommand{\set}[1]{\left\{#1\right\}}
\newcommand{\Sim}{\mathsf{S}}

\newcommand{\Pc}{\mathbf{P}}
\newcommand{\Vc}{\mathbf{V}}

\usepackage[backend=biber, style=alphabetic, maxbibnames=99, maxalphanames=10]{biblatex}
\setlength{\biblabelsep}{2\labelsep}
\defbibenvironment{bibliography}
  {\list
     {\printtext[labelalphawidth]{%
        \printfield{prefixnumber}%
        \printfield{labelalpha}%
        \printfield{extraalpha}}}
     {\setlength{\labelsep}{\biblabelsep}%
      \setlength{\leftmargin}{\labelsep}%
      \setlength{\itemsep}{\bibitemsep}%
      \setlength{\parsep}{\bibparsep}}%
      }
  {\endlist}
  {\item}
  
\addbibresource{refs.bib}

\newif\ifconf
\conffalse

\title{PINE: 
Efficient Norm-Bound Verification for Secret-Shared Vectors}
\date{}
\author{
Guy N. Rothblum \\ Apple
\and
Eran Omri\thanks{Part of this work was performed while EO was at Apple.}\\
Ariel University\\ Ariel Cyber Innovation Center
\and
Junye Chen \\ Apple
\and
Kunal Talwar \\ Apple
}

\date{}
\begin{document}
\maketitle

\begin{abstract}
Secure aggregation of high-dimensional vectors is a fundamental primitive in federated statistics and learning. A two-server system such as PRIO allows for scalable aggregation of secret-shared vectors. Adversarial clients might try to manipulate the aggregate, so it is important to ensure that each (secret-shared) contribution is well-formed. In this work, we focus on the important and well-studied goal of ensuring that each contribution vector has bounded Euclidean norm. Existing protocols for ensuring bounded-norm contributions either incur a large communication overhead, or only allow for approximate verification of the norm bound. We propose {\bf P}rivate {\bf I}nexpensive {\bf N}orm {\bf E}nforcement (PINE): a new protocol that allows exact norm verification with little communication overhead. For high-dimensional vectors, our approach has a communication overhead of a few percent, compared to the 16-32x overhead of previous approaches.
\end{abstract}

\section{Introduction}
\label{sec:intro}

Data analyses and machine learning on user data have the potential to improve various applications, but 
might raise concerns about users' privacy. Fortunately, these analyses can be performed in a federated setting~\cite{kairouz2021advances, LiSTS20} while ensuring strong formal privacy guarantees. This has sparked significant interest in developing techniques and infrastructures to support private federated data analyses. In this work, we focus on a fundamental primitive: estimating the mean (or equivalently, the sum) of a set of high-dimensional vectors. This primitive is natural and important in its own right, and is also a basic building block for several tasks (e.g. \cite{Honeycrisp,Orchard}). Most notably, it is fundamental for federated optimization of machine learning models.
Furthermore, several other ML tasks such as PCA, $k$-means and EM can be reduced to (multiple applications of) aggregation.

Secure Aggregation techniques enable strong privacy guarantees without requiring trust in a single server. Various approaches have been studied for designing such systems. One approach is to have client devices run a secure multiparty computation to compute the aggregate, which has been explored in~\cite{ODOpaper,Bonawitz17,SoGA20,turboaggregate,BellBGLR20}. Another approach relies on two or more servers to perform the aggregation, using protocols that ensure that no single server can learn anything beyond the final aggregate (security holds so long as the two servers do not collaborate).  A common approach is for each client to secret-share its contribution between the two servers, and for the servers to use a secure protocol that allows them to learn the aggregate, but nothing more about any individual client's contribution. We refer to this approach as {\em distributed aggregation}. The PRIO system~\cite{Corrigan-GibbsB17} proposed a protocol for computing arbitrary functions over secret-shared data in the presence of two or more servers. 
This approach has been further developed in subsequent research~\cite{BonehBCGI19, BonehBCGI21, BellGGKMRS22, CastroP22, BonehBCGI23, TalwarEtAl23}, applied in practice~\cite{firefox, enpa} and forms the basis of an IETF~\cite{ppm} draft standard for aggregation. In this work, we focus on distributed aggregation of high-dimensional integer\footnote{While one is often interested in vectors of floating point values, standard techniques can translate the problem to fixed-precision vectors. Scaling then turns those into integer vectors from a bounded range. See the discussion in Sections \ref{subsec:our-work} and \ref{sec:performance}.} vectors.

Real-world implementations of secure aggregation must deal with adversarial clients attempting to manipulate the results. An important
advantage of the PRIO approach is that it allows the servers to validate certain properties of the client contributions, while preserving a zero-knowledge (ZK) property, meaning that the servers learn nothing beyond the aggregate and the fact that the client shares were valid. A rich line of research has focused on developing efficient methods to verify various properties of client contributions. In the context of summing up vectors in $\Re^d$, a natural objective is verifying that each client contribution has bounded Euclidean norm.  The Euclidean norm constraint arises naturally in many machine learning settings~\cite{dosovitskiy2021an, Dehgani+23, ZhaiLLBRZGS23}. 
Many model poisoning attacks (e.g. ~\cite{BaruchBG19,FangCJG20,ShejwalkarH21}) rely on clients submitting vectors of large norm (so called ``boosted gradients''). Recent works~\cite{ShejwalkarHKR22,SunKSM19} have shown that ensuring a bounded $\ell_2$ norm is effective against a large family of realistic poisoning attacks in the federated learning setting.\footnote{No method can completely prevent model poisoning~\cite{ShejwalkarHKR22}  (for a theoretical perspective, see e.g. \cite{MahloujifarM17}); bounding the $\ell_2$ norm limits the impact of adversarial clients.}

In this work, we focus on zero-knowledge protocols for verifying the bounded Euclidean norm property. Such protocols require the client and the servers to perform additional computations and communication, where different works obtain different overheads.  One important measure is the communication cost, especially in settings where the vectors involved can be of high dimensions (e.g. ~\cite{ramaswamy2019federated, paulik2021federated, Xu-icassp23, xu-etal-2023-federated, pelikan2023federated}). We review the works most related to our contributions.
 
One approach taken by prior work \cite{Corrigan-GibbsB17, BonehBCGI19} and in proposed implementations in the IETF standard~\cite{libprio,RohrigU23} has the client secret-share each bit of each coordinate of its vector as an element in a finite field. For $d$-dimensional vectors with $b$ bits of precision, this requires the client to send at least $bd$ {\em field elements} to each server. The field size itself must be rather large to keep the soundness error small. In high-dimensional settings, this translates to a substantial communication overhead. For example, in the implementation in \cite{RohrigU23}, the communication overhead is at least 16x (compared to secret sharing without validity proofs). We note that the servers also need to verify that the secret shares are indeed of bits, but the communication needed for this latter task can be smaller \cite{BonehBCGI19}.

In PRIO+ \cite{AddankiGJOP22}, the communication from the client to each server is only $bd$ {\em bits} (rather than field elements), but this requires an offline setup phase that involves expensive cryptographic operations (it also gives a weaker norm bound guarantee). In ELSA \cite{RatheeSWP22}, the setup avoids expensive cryptographic operations, but requires more communication from the clients. We elaborate on the comparison with these works below, but the main distinction is that our work focuses on a setting where there is no expensive setup, and we aim to minimize the communication from the clients to the servers.

With the goal of minimizing the client-to-server communication, the work of ~\cite{Talwar22} relaxed the zero-knowledge guarantee to differential ZK and showed a more communication-efficient protocol. However, that protocol only ensured approximate norm verification, limiting its usefulness in some settings.  The protocol allows for a trade-off between the soundness and completeness errors and the approximation guarantee, but even for relatively permissive error  thresholds (say, $0.01$ soundness and completeness errors), the protocol might accept vectors of norm 50x the target bound. This reduces its effectiveness in dealing with malicious clients.

\subsection{Our Work}
\label{subsec:our-work}

Our main contribution is PINE (for {\bf P}rivate {\bf I}nexpensive {\bf N}orm {\bf E}nforcement): a communication-efficient protocol for Euclidean norm verification that can verify the exact norm bound with a strong (statistical) zero knowledge guarantee and with no offline setup. Zero knowledge holds even against malicious behavior by a server (we always assume the servers do not collaborate, i.e. at least one of them is honest). Theoretically, our verification protocol requires the client to communicate  only $\tilde{O}(\sqrt{d})$ additional field elements. In practice, for typical parameter settings, this overhead is a small fraction of the communication needed to secret share $d$ field elements.

\begin{theorem}
[Informal version of~\cref{thm:norm-bound}] Let $X \in \mathbb{Z}^d$ be a secret-shared vector and $B \geq 0$. Fix $\rho >0$, and set $r = \ceil{32\ln \frac 1 \rho}$. For any field of size $q \geq \Omega(\max \{ B, 3r \})$ there is a distributed verification protocol with the following properties:


    \begin{enumerate}
        \item \textbf{Completeness:} If  $\sum_i X_i^2 \leq B$, the verifiers accept with probability at least $1-\rho$.

        \item \textbf{Soundness:} If $\sum X_i^2 > B$ (over the integers), the probability that the verifiers accept is at most $\rho + O(\sqrt{d}/q)$.

        \item \textbf{Zero-Knowledge}: The protocol  satisfies distributed statistical zero-knowledge: the view of each verifier can be simulated up to statistical distance $\rho$. This guarantee holds even for a single malicious verifier (so long as the other verifier follows the protocol).
    \end{enumerate}
    The proof system is in the common reference string model, and consists of a single message of length $O\left(\sqrt{d} \log q + r \log^2 q \right)$ sent by the prover to each verifier.
\end{theorem}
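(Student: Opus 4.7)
My plan is to construct the PINE proof system as the composition of an efficient equality check for a claimed norm value and a standard range proof, with additive blinding for zero-knowledge. The prover announces $v$ for $\sum_j X_j^2$ and proves separately that $v \le B$ and that $\sum_j X_j^2 = v$ over $\mathbb{F}_q$; because $q = \Omega(B)$ and the honest $X$ has bounded coordinates, a field identity translates to the corresponding integer inequality, with a cheating prover who wraps the modulus caught by the range check on $v$.

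For the efficient check of $\sum_j X_j^2 = v$, I would reshape the $d$ coordinates of $X$ as a $\sqrt d \times \sqrt d$ matrix $\tilde X$ and have the prover send the $\sqrt d$ column-wise squared norms $c_j := \sum_i \tilde X_{ij}^2$. The verifiers first check $\sum_j c_j = v$. They then use a single random challenge $\alpha$ from the CRS to compress the remaining $\sqrt d$ equations $c_j = \sum_i \tilde X_{ij}^2$ into one polynomial identity of total degree $O(\sqrt d)$; Schwartz--Zippel bounds the soundness error of this compression by $O(\sqrt d / q)$. Because every public linear combination of $X$ is locally computable by each verifier from its shares, the only explicit communication for this block is the $\sqrt d$ claimed column norms. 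For the separate check $v \le B$, a PRIO-style bit-decomposition range proof, amplified $r$ times, gives soundness $\rho$ at cost $O(r \log^2 q)$ bits.

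For zero-knowledge, I would have the honest prover additively mask every revealed field element with fresh uniform blinders; the simulator, given one verifier's share of $X$, substitutes uniform field elements for the masked values. The only distributional deviation from the real transcript is the acceptance event of the amplified range check, contributing at most $\rho$ to the simulation distance. Completeness is then dominated by the range-proof failure probability ($\rho$); soundness by the union of the $O(\sqrt d/q)$ Schwartz--Zippel error and the $\rho$ range-proof error; communication sums to $O(\sqrt d \log q + r \log^2 q)$ bits over the two sub-protocols.

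The main obstacle I anticipate is the efficient equality check. The natural compression $\sum_j \alpha^{j-1} c_j \stackrel{?}{=} \sum_{i,j} \alpha^{j-1} \tilde X_{ij}^2$ reduces verification of the right-hand side to an inner product of two secret-shared vectors, which is not directly computable from local shares. Making the verifiers' job purely linear while still catching cheating with $O(\sqrt d / q)$ error, and simultaneously admitting single-party simulation of the transcript within a $\sqrt d$-element communication budget, is the technically delicate step and will dictate precisely which $\sqrt d$ auxiliary quantities the prover sends and how they are blinded.
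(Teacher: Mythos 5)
Your proposal has a fundamental soundness gap, and it misses the paper's central technical contribution. You write that a cheating prover ``who wraps the modulus [is] caught by the range check on $v$,'' but this is precisely what a range check \emph{cannot} do. Consider a malicious prover holding $X$ with $\sum_i X_i^2 = q + 5$ (over the integers), where $5 \leq B$. The prover sets $v = 5$, the range check $v \leq B$ passes, and the field identity $\sum_i X_i^2 = v \pmod q$ is genuinely true (since $q + 5 \equiv 5 \pmod q$). Every check you propose succeeds, yet the true integer squared norm grossly violates the bound. Your phrase ``the honest $X$ has bounded coordinates'' begs the question: the whole difficulty is that the verifiers hold only secret shares of $X$ and have no way to confirm the coordinates are small. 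Prior work handled this by secret-sharing every \emph{bit} of every coordinate, which is exactly the $bd$-field-element overhead the paper sets out to eliminate.

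The paper's resolution, which your proposal does not contemplate, is a dedicated \emph{wraparound-detection} subprotocol: the verifiers jointly sample a random $Z \in \{-1,0,1\}^d$ (each coordinate $\pm 1$ with probability $1/4$, $0$ with probability $1/2$), and the prover is forced to demonstrate that $\langle Z, X \rangle$ lands in a window $[-\alpha\sqrt{B}, \alpha\sqrt{B}]$ modulo $q$. When $\|X\|_2^2 \leq B$ this holds with high probability by subgaussian concentration; when $\|X\|_2^2 \geq q$ (i.e., wraparound occurred), a case analysis on $\|X\|_\infty$ --- using the Berry--Esseen theorem in the small-$\|X\|_\infty$ case --- shows this holds with probability at most $1/2$. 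Repeating $r$ times with a tolerance threshold $\tau$ (so the prover reveals only how \emph{many} repetitions succeeded, not which ones) yields both exponentially small soundness and statistical zero-knowledge. The $\sqrt{d}$ communication term comes from the Boneh et al.\ quadratic-constraint checker applied to the bit-validity of the auxiliary shares, not from a bespoke matrix reshaping, and the paper gets it essentially for free as a black box (your acknowledged uncertainty about how to make the reshaped-matrix check work is well founded, since the inner-product obstacle you identify is real). Without a mechanism that distinguishes ``$\sum X_i^2$ small'' from ``$\sum X_i^2 \geq q$,'' no amount of additive blinding or Schwartz--Zippel compression can give soundness over the integers.
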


The protocol is derived from the interactive 4-message protocol of~\cref{thm:norm-bound} by using a variant of the Fiat-Shamir heuristic suited for the distributed verification setting ( \ifconf
see the full version \cite{RothblumOCT23}
\else
see \cref{sec:fiat-shamir}
\fi
).\footnote{The soundness error is proved for the interactive protocol, and should be set to be sufficiently small to account for the Fiat-Shamir transform.} While we state our results for the case of two verifiers, our approach is modular and can be used with multiple verifiers. Concrete performance evaluation is below and in Section \ref{sec:performance}.

\medskip\noindent{\bf Statistical PINE: Overview.}
A central difficulty in efficient verification of Euclidean norms is that PRIO relies on arithmetic over a finite field, whereas our property of interest deals with arithmetic over integers/reals. One can use known techniques~\cite{BonehBCGI19} to efficiently verify that the sum of of squared entries {\em modulo the field size} satisfies a certain bound. If the coordinates are all small enough, then the squared norm modulo the field sizes equals the squared norm over the integers, so the preceding check suffices. However, it is challenging to validate smallness of coordinates in their natural representation as field elements. Existing approaches achieve this by encoding the coordinates in their binary representation, which can enforce the requisite smallness.

We take a different approach. Instead of encoding the coordinates of a vector in their binary representation, we devise a randomized test that can detect whether there is a ``wraparound'' when computing the sum of squares over a finite field. We first verify that the sum of squared entries modulo the field size $q$ is in the range $[0,B]$. For large enough $q$, this gives us a promise problem: either the sum of squared entries (over integers) is at most $B$, or is at least $q$. We test for this {\em wraparound} by taking a random dot product with a ${-1,0,1}$ vector, and we demonstrate that if there is wraparound, the dot product is likely to be large. On the other hand, if the sum of the squared entries was small to begin with, then the dot product will be small with high probability. These bounds on the dot product, which are our main technical contribution, are proved using a delicate case analysis on the vector's infinity norm. The challenging case (small infinity norm) is analyzed using the Berry-Esseen theorem. 

Thus, we reduce the problem of proving a bound on the squared norm to proving boundedness of a few scalars (the outcomes of independent dot products), which can be done at a small overhead. Our communication overhead is dominated by the communication needed to verify the sum of squared entries over the finite field; the $\sqrt{d}$ term can be further reduced to $d^{1/c}$ by using an $c+O(1)$-round protocol (the additional interaction can be eliminated using the Fiat-Shamir transform).

\medskip\noindent{\bf Differential ZK.} We also show a simpler scheme that relaxes ZK to Differential ZK: intuitively, the secrecy of the client's contribution is protected via a {\em differential privacy} \cite{DworkMNS06} guarantee (rather than the perfect or statistical guarantees that are more common in the literature). Beyond its simplicity, the protocol also achieves smaller communication in some parameter regimes (especially when the number of dimensions is not too large). We note that the relaxation to differential ZK can be quite reasonable, since differential privacy is often all that is guaranteed given that the (approximate) results of the entire aggregation are to be made public to the servers.

\begin{theorem}
[Informal Version of~\cref{thm:dzk-norm-bound}] Let $X \in \mathbb{Z}^d$ be a secret-shared vector. Let $\eps,\delta \in (0,1)$ and $B \geq 1$. For a field of size $q > 4\left(\sqrt{B} + \sqrt{d} + \sqrt{dB\tfrac{2\ln 2.5/\delta}{\eps}\cdot(1+\frac{2\sqrt{\log 8e/\delta}}{\sqrt{d}} + \frac{2\log 8e/\delta}{d})}\right)^2$, there is a distributed verification protocol with:

  \begin{enumerate}
   \item \textbf{Completeness:} If  $\sum_{i=1}^d X_i^2 \leq B$, then the verifiers accept with probability $1$.

    \item \textbf{Soundness:} If $\sum X_i^2 > B$ (over the integers), the probability that the verifiers accept is at most $O(\sqrt{d}/q)$.

    \item \textbf{Zero-Knowledge}: The protocol  satisfies $(\eps,\delta)$-differential zero knowledge: the view of each verifier can be efficiently simulated up to $(\eps,\delta)$-closeness. This guarantee holds even for a single malicious verifier (so long as the other verifier follows the protocol).
\end{enumerate}
In addition to the secret shares of $x$, the client sends a proof of length $(O(\sqrt{d}) + O(\log_2 q))\cdot \lceil\log_2 q\rceil$.
\end{theorem}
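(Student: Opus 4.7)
I would combine the standard PRIO-style SNIP for the modular sum-of-squares with a differentially private secret-sharing of $X$ in which the no-wraparound condition is enforced by purely local range checks on each share, rather than by the randomized dot-product test used in the statistical PINE variant. The differential privacy of the view comes directly from the noise built into the shares, via a Gaussian-mechanism analysis.

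\textbf{Protocol sketch.} The client samples a (discrete) Gaussian noise vector $Z \in \mathbb{Z}^d$ with per-coordinate standard deviation $\sigma = \sqrt{B \cdot 2\ln(2.5/\delta)/\eps}$, rejection-sampled to enforce a deterministic bound $\|Z\|_2 \leq \sigma\bigl(\sqrt{d} + \sqrt{2\ln(8e/\delta)}\bigr) + \sqrt{d}/2$; the rejection probability is within a constant fraction of $\delta$ by standard Gaussian norm concentration, and the last summand absorbs per-coordinate rounding. Rather than using uniform additive shares of $X$, the client sends $X_1 := X + Z \bmod q$ to verifier~$1$ and $X_2 := -Z \bmod q$ to verifier~$2$; the pair still additively reconstructs $X$ modulo $q$. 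In addition, the client sends a standard PRIO SNIP (length $O(\sqrt{d}\log q + \log^2 q)$ per verifier) certifying that $\sum_i X_i^2 \bmod q$ lies in $[0,B]$. Each verifier interprets its share as a signed-integer vector in $[-q/2,q/2)^d$ and locally checks that its $\ell_2$-norm is at most $\sqrt{q}/2$; the hypothesis on $q$ in the theorem is chosen exactly so that both local checks are simultaneously satisfied by the honest prover.

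\textbf{Completeness, Soundness, and Differential ZK.} For honest $X$ with $\|X\|_2^2 \leq B$, the triangle inequality gives $\|X_1\|_2 \leq \sqrt{B} + \|Z\|_2 \leq \sqrt{q}/2$ and $\|X_2\|_2 = \|Z\|_2 \leq \sqrt{q}/2$ by the rejection bound, and $\sum X_i^2 \leq B < q$ means no wraparound inside the SNIP, yielding probability-$1$ acceptance. For soundness, if both local checks pass then $\|X_j\|_\infty \leq \|X_j\|_2 < \sqrt{q}/2 < q/2$, so $X_1 + X_2$ reconstructs over the integers with no per-coordinate modular wraparound, and $\|X\|_2 \leq \|X_1\|_2 + \|X_2\|_2 < \sqrt{q}$, giving $\sum X_i^2 < q$ over $\mathbb{Z}$; the modular sum-of-squares verified inside the SNIP therefore equals the integer sum, forcing the SNIP to reject whenever $\sum X_i^2 > B$ (up to its inherent $O(\sqrt{d}/q)$ error from the random SNIP evaluation point). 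For differential ZK, verifier~$2$'s view is $-Z$, whose distribution is independent of $X$ and is simulated by a fresh draw. Verifier~$1$'s view is the output of the Gaussian mechanism on $X$ with $\ell_2$-sensitivity $\|X\|_2 \leq \sqrt{B}$; the calibration of $\sigma$ makes it $(\eps,\delta_1)$-close to the $X=0$ distribution output by the simulator, and rejection sampling contributes the remaining $\delta - \delta_1$ via the standard condition-on-high-probability-event lemma. The SNIP messages seen by each verifier are uniform shares of the standard ZK polynomial commitments and are simulated perfectly.

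\textbf{Main obstacle.} The delicate part is choreographing the three interdependent inequalities---the Gaussian-mechanism noise level for $(\eps,\delta)$-DP, the rejection-sampled tail bound on $\|Z\|_2$, and the per-share $\sqrt{q}/2$ no-wraparound threshold---so that they simultaneously yield the precise form of the $q$ bound in the theorem, with the correct $\sqrt{d}$ rounding slack and the $\sqrt{2\ln(8e/\delta)}$ tail factor inside the square root. One must also carefully handle the discrete-versus-continuous Gaussian adjustment and charge the rejection loss to the $\delta$ parameter in a way that does not degrade the overall $(\eps,\delta)$ guarantee.
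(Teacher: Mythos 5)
Your proposal is correct and mirrors the paper's own proof of Theorem~\ref{thm:dzk-norm-bound}: the client forms non-uniform secret shares $(-\lceil R\rceil,\, X+\lceil R\rceil)$ with $R$ drawn from a truncated Gaussian, each verifier locally rejects a share of large $\ell_2$ norm, those per-share norm bounds preclude modular wraparound so the quadratic-constraints SNIP certifies the integer sum of squares, and differential ZK follows from the Gaussian mechanism at $\ell_2$-sensitivity $\sqrt{B}$ together with a $\chi^2$ tail bound charging the truncation loss to $\delta$. The only cosmetic differences are your use of a discrete Gaussian with explicit rounding slack (the paper ceilings a continuous Gaussian but notes the discrete variant as an alternative) and your looser per-share threshold $\sqrt{q}/2$ in place of the paper's $\sqrt{B}+\sqrt{\Delta}+\sqrt{d}$, both of which are compatible with the stated lower bound on $q$ and yield the same soundness argument.
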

\ifconf
This protocol is derived from an interactive 3-message protocol a distributed-verification variant of the Fiat-Shamir heuristic, see the full version \cite{RothblumOCT23}.
\else
The protocol is derived from the interactive 3-message protocol of~\cref{thm:dzk-norm-bound} by using a distributed-verification variant of the Fiat-Shamir heuristic (see \cref{sec:fiat-shamir}).
\fi

\medskip\noindent{\bf Performance analysis.} We analyze the performance of our protocols in terms of the communication overhead, beyond the communication that is needed to simply send secret shares for distributed aggregation (without any robustness to poisoning attacks). In Section \ref{sec:performance}, we provide analyses for several choices of parameters. Here, in Table \ref{table:intro-performance}, we highlight the performance for a typical choice of parameters, where we aggregate $d$-dimensional integer vectors of $\ell_2$ norm at most $2^{15}$ with $d \in \{10^4, 10^5, 10^6, 10^7\}$. We work over a field of size $\approx 2^{64}$ and use soundness and zero-knowledge error $2^{-50}$. 

\ifconf
\begin{table*}[!t]
\else
\begin{table}[!t]
\fi
\centering

\begin{tabular}{|c || c | c |c | c|} 
 \hline
  & $d=10^4$ & $d=10^5$ & $d=10^6$ & $d=10^7$ \\ [0.5ex] 
 \hline\hline
 no robustness, $\#$ bits sent & $64 \cdot 10^4$ &  $64 \cdot 10^5$ & $64 \cdot 10^6$ & $64 \cdot 10^7$ \\ 
 \hline
 prior work, overhead \cite{BonehBCGI19,libprio} & $> 1500\%$ & $> 1500\%$ & $> 1500\%$ & $> 1500\%$ \\
 \hline
 PINE, Statistical ZK, overhead & $22\%$ & $3.18\%$ & $0.49\%$ & $0.13\%$ \\
 \hline
 PINE, Differential ZK, overhead & $4.77\%$  & $1.46\%$ & $0.32\%$ & $12.63\%$ \\ [1ex] 
 \hline
\end{tabular}

\caption{Communication analysis: our protocols and prior work. Parameters: field size $q \approx 2^{64}$ for aggregation, $d$-dimensional data, soundness error $2^{-50}$, zero-knowledge error $\delta=2^{-50}$. For differential ZK $\varepsilon=0.1$.}
\label{table:intro-performance}

\ifconf 
\end{table*} 
\else 
\end{table}
\fi

As seen in Table \ref{table:intro-performance}, our statistical zero-knowledge protocol achieves small overhead even when the number of dimensions is as small as $10^4$, and the communication overhead becomes negligible as the number of dimensions grows. Comparing with prior work \cite{BonehBCGI19,libprio}, the overhead is reduced by a multiplicative factor of between 70x (when the number of dimensions is only 10,000) to $10^4$x (when the number of dimensions is as high as $10^7$). 

Our differential ZK protocol achieves even smaller overheads so long as the number of dimensions is not huge (albeit, the secrecy guarantee is more relaxed). Once the number of dimensions grows to $10^7$, PINE with Differential ZK requires a larger field size (9 bytes instead of 8 bytes), this incurs an initial overhead of $~\frac{1}{8}$ for secret-sharing the data over a larger field. Previous work with differential ZK~\cite{Talwar22}, would have similar increased field size requirements for these parameters. While their protocol would not have any additional overhead, it only gives a weaker robustness guarantee that vectors that are 50x the norm bound are rejected with probability $0.99$.

We briefly elaborate on our choice of the norm bound: a typical setting when aggregating gradients is that we aggregate floating point vectors of Euclidean norm at most 1. We can convert these floating point vectors to integer vectors by multiplying by $2^b$ and rounding. With $b=15$ bits of precision per co-ordinate, this translates the problem to verifying $\ell_2$ norm bound $\sqrt{B} = 2^{15}$. This is typically sufficient for high dimensional vectors. Working over a field of size about $\approx 2^{64}$ suffices to allow aggregating millions of such vectors (whereas field size $\approx 2^{32}$ would not be sufficient for 100K vectors). We refer the reader to \cref{sec:performance} for further elaboration and for performance evaluation in other parameter regimes.

\medskip\noindent{\bf Further comparison to PRIO+ and ELSA.} In Table \ref{table:intro-qualitative-comparison}, we provide a qualitative comparison to the most closely related works to ours. In particular, we consider works in the distributed aggregation setting, with two non-colluding servers and privacy against (one out of two) malicious servers. The PRIO+ system \cite{AddankiGJOP22} reduces the communication cost of sharing the client's vector to $bd$ bits (where $2^b$ bounds the magnitude of each entry). This is significantly smaller than our protocol, which requires sending $d$ field elements (e.g. in Table \ref{table:intro-performance} each field element is 64 bits, whereas $b=15$). However, PRIO+ requires an expensive offline setup between the servers, which perform cryptographic oblivious transfer operations and exchange more communication than $d$ field elements per client. Moreover, verifying a bound on the Euclidean norm would also require a subsequent online cryptographic protocol (and there is further overhead for malicious-server security, see \cite{RatheeSWP22}). Other works, such as \cite{HaoLXC021,HeKJ20}, also make use of more advanced cryptographic operations. PINE avoids an expensive setup and cryptographic operations of this type. The ELSA system \cite{RatheeSWP22} also avoids the use of expensive cryptographic operations, but replaces them with communication from the client, which sends more than $d$ field elements to the servers. The expensive communication from the client to the servers can be performed offline, but it is a large communication compared to our work. Further, in many PRIO-like settings, anonymous clients engage in a one-shot interaction with the servers, so an offline setup is not appropriate. 

\ifconf
\begin{table}[!t]
\else
\begin{table}[!t]
\fi
\centering

\begin{tabular}{|c || c | c |c |} 
 \hline
  Protocol & 
  \begin{tabular}{@{}c@{}} Linear \\ online \end{tabular} & 
  \begin{tabular}{@{}c@{}} Linear \\ offline \end{tabular} & 
  \begin{tabular}{@{}c@{}} Expensive \\ crypto \end{tabular} 
  \\ [0.5ex]
  \hline\hline
  Prio3 \cite{BonehBCGI19,libprio} & N &  Y & N \\
  \hline
  Prio+ \cite{AddankiGJOP22} & Y & N & Y  \\
  \hline
  ELSA \cite{RatheeSWP22} & Y & N & N  \\
  \hline
  PINE (our work) & Y & Y & N  \\
  [1ex] 
 \hline
\end{tabular}

\caption{Qualitative comparison to representative prior works. All works are in the distributed two-server trust model. {\em Linear offline and online} mean that the communication is dominated by the cost of communicating at most $d$ field elements for the field where aggregation should occur, either in an offline stage (before the client's contribution is specified), or in the online stage (respectively). {\em Expensive crypto} means that the protocol uses operations such as oblivious transfer. 
}
\label{table:intro-qualitative-comparison}

\ifconf 
\end{table} 
\else 
\end{table}
\fi

\medskip\noindent{\bf Further related work.} In a very recent work Boneh {\em et al.} \cite{BonehBCGI23} show how {\em arithmetic sketching schemes} can be used to design efficient protocols for verifying properties of secret-shared data. They show, however, that the linear sketches at the heart of their technique cannot be used to verify $L_2$-norm constraints (or any $L_p$ norm for $p > 1$).

\section{Model, Definitions  and Preliminaries}
\label{sec:prelims}
\begin{definition}\label{def:statDist}
	The statistical distance between two finite random variables $X$ and $Y$ is
	$$\SD(X,Y)=\frac12\sum_{a}\abs{\Pr[X=a]-\Pr[Y=a]}.$$

	Two ensembles $X=\set{X_{n}}_{n\in\N}$ and $Y=\set{Y_{n}}_{n\in\N}$ are
 said to be statistically close, denoted $X\sta Y$,
 if there exists a negligible function $\mu(\cdot)$, such that for all $n\in\N$, it holds that
 \ifconf
    $\SD\left({X_{n},Y_{n}}\right)\leq\mu(n).$
 \else
	$$\SD\left({X_{n},Y_{n}}\right)\leq\mu(n).$$
 \fi
\end{definition}

\medskip
\noindent{\bf Runtimes and field operations over $\GF[q]$.} We measure the prover's and the verifiers' runtimes by the number of field operations we perform: we usually count addition and multiplication as a single field operation, and also allow other basic atomic operations such as translating field elements to their natural representation as bit vectors and vice versa. In performance evaluations, we also measure the number of multiplications as a primary complexity measure (as done in prior work, since these are significantly more expensive than other field operations).

\medskip\noindent{\bf Secret sharing over $\GF[q]$.} In the distributed verification setting we study, a client (or prover) secret-shares data between two or more servers (or verifiers). Each secret-shared value is an element $\alpha \in \GF[q]$ from a field, and the secret shares are random field elements whose sum is $\alpha$ (``arithmetic shares''). In particular, each server's share is, on its own, a uniformly random field element.

\newcommand{\pubParam}{\bf\bar{T}}
\newcommand{\privInp}{\bf\bar{I}}
\newcommand{\privOut}{\bf\bar{O}}

\subsection{Distributed Verification Protocols}
\label{subsec:distributed-verification}

As discussed above, we study a distributed model, in which a single prover (or client) $\Pc$ interacts with two verifiers (or servers) $\Vc_0, \Vc_1$ over a complete network with secure point-to-point channels. We design several protocols in this setting, and then compose these protocols to obtain the PINE proof systems. Our distributed verification protocols have inputs and outputs of the following type:

\begin{description}
    \item[Common (Public) Inputs:]
     All three parties hold the same common and publicly known  parameter vector $\pubParam$. Such parameters may include, field size, scalars, protocol parameters.
     
     \item[Private Inputs:]
     The prover holds some vector of input values $\privInp$. Each such value $I$ is secret-shared by the two verifiers such that $\Vc_j$ holds the share $[I^{(j)}]$.

     \item[Common (Public) Output:]
     The common output of the two verifiers includes a single bit, indicating whether they accepted or rejected the proof. The verifiers can also output additional common outputs.
     
     \item[Private Outputs:]
     At the end of the interaction, the prover outputs new private values.  Each such value $\privOut$ is secret-shared by the two verifiers, and these are called the shared outputs of the verifiers.
\end{description}

We define distributed interactive zero-knowledge proof protocols (dZKIPs) for this setting. These protocols guarantee  completeness and soundness based on a condition on the common and private inputs (this ``input condition'' is formalized by requiring that the combined input is in a (pair) language). The zero knowledge property is also guaranteed for inputs that satisfy this input condition (are in the language). Some of the protocols also have an ``output condition'' on the common and private outputs (also formalized as membership in an output pair language, this condition can be empty). Completeness means that if the inputs satisfy the input condition and all parties follow the protocol, then  w.h.p. the verifiers should accept and the common and private outputs (if any) should be in the output language. Soundness means that if the input condition is violated, then w.h.p. either the verifiers reject (in their common output), or the output condition is violated. Note that the verifiers might not ``know'' that the output condition is violated (since they only see secret shares of the private outputs): in our work this will be detected by a subsequent protocol. Allowing for private inputs and outputs, and for general input and output conditions is helpful for designing modular sub-protocols that can later be composed.

Our protocols also guarantee a strong zero-knowledge property: so long as the input is in the input language, and the verifiers are honest, both verifiers learn nothing from executing the protocol. Formally, each verifier's view in a protocol execution with the honest prover and a second verifier who follows the protocol can be simulated efficiently (the view includes the common inputs, its secret shares of the private inputs, random coins, messages received, and its secret shares of private outputs). We remark that in the classical setting for (non-distributed) interactive zero-knowledge proofs, a single prover $\Pc$ interacts with a single verifier $\Vc$ over a common input $X$. In such an interaction, the aim of the prover is to convince the verifier that $X \in L$ for some language $L$, where the verifier knows what $X$ is, and zero-knowledge means that the verifier should not learn anything beyond $X$ and the fact of its membership in $L$.  In contrast, in our (distributed) setting, the statement $X$ is not known to any single verifier (since it is secret-shared), and the zero-knowledge requirement means that any single verifier should learn nothing beyond $X$'s membership in $L$.

\begin{definition}[Distributed ZK Interactive Proof (dZKIP)]
\label{def:dZKIP}

We say that a $2$-verifier interactive proof protocol $\Pi = (\Pc; \Vc_0; \Vc_1)$ is a distributed (strong) zero-knowledge proof for an input (pair) language $\Linp$ and an output language $\Lout$ 
if $\Pi$ satisfies the following:

\begin{itemize}

\item{\textbf{$\alpha$-Completeness.}} If the common and private inputs are in the input language, i.e. $(\pubParam,\privInp) \in \Linp$ and the prover and the verifiers follow the protocol, then with all but $\alpha$ probability the verifiers accept and the private outputs satisfy the output condition, i.e. $\privOut \in \Lout$ (the probability is over all coins tossed by all parties in the protocol).

\item{\textbf{$\beta$-Soundness.}} If the common and private inputs are {\em not}  in the language, i.e. $(\pubParam,\privInp) \notin \Linp$, then for any adversarial cheating prover strategy, with all but $\beta$ probability, either the verifiers reject, or the private outputs violate the output condition, i.e. $\privOut \notin \Lout$ (the probability is over the verifiers' coin tosses. The cheating prover is deterministic w.l.o.g).

\item{\textbf{$\gamma$-Strong Distributed Honest-Verifier Zero-Knowledge (dZK)} (see \cite{BonehBCGI19}).} There exists an efficient simulator $\Sim$, such that for every input pair $(\pubParam,\privInp) \in \Linp$, for every $j \in \bitset$
and every 2-out-of-2 sharing   $[\privInp]=\left(\privInp^{(0)},\privInp^{(1)}\right)$, the view of $\Vc_j$ in an execution with the honest prover and $\Vc_{1-j}$ is $\gamma$-statistically close to the output of the simulator $\Sim$ on input $\left(j, \pubParam, \privInp^{(j)}\right)$.

\end{itemize}

By default we take $\alpha=0$ (perfect completeness) unless we explicitly note otherwise. Similarly, by default $\gamma=0$ (perfect distributed zero-knowledge).
We say the protocol is {\em public-coins} if all the messages sent from the verifiers to the prover are random coin tosses. We also require that in a public-coins protocol, the communication between the two verifiers consists solely of a single simultaneous message exchange: after the interaction with the prover is complete, $\Vc_0$ sends a single message to $\Vc_1$, and (at the same time) $\Vc_1$ sends a single message to $\Vc_0$. These messages should not depend on each other (this property is important for zero-knowledge of the Fiat-Shamir transform, see 
\ifconf
\cite{RothblumOCT23}).
\else
\cref{sec:fiat-shamir}).
\fi
\end{definition}

We sometimes consider dZKIPs for {\em promise problems}, where the completeness condition and the soundness condition apply to disjoint sets (rather than to the language $\Linp$ and its complement). See, for example, the ``wraparound protocol'' of Section \ref{subsec:wraparound-protocol}.




\begin{remark}[indistinguishability under varying private inputs]
\label{remark:varying-private-inputs}

The dZK property implies that, for a fixed public input $\pubParam$, we can consider different private inputs that are in the pair language, and each verifier's views will be $\beta$-statistically close under these varying inputs, so long as its share remains unchanged. For example, fixing $\Vc_0$'s share to be $\privInp^{(0)}$, we can consider an execution where $V_1$'s share is $\privInp^{(1)}$ and another execution where $V_1$'s share is ${\privInp}^{' (1)}$, and $\Vc_0$'s views in these executions will be statistically close, so long as the underlying private inputs $\privInp,\privInp'$ are both in the pair language (w.r.t. the fixed public input $\pubParam$).
\end{remark}

\begin{remark}[Non-interactive and malicious ZK via Fiat-Shamir]
\label{remark:fiat-shamir}

We construct and analyze our protocols as interactive proof systems with honest-verifier zero-knowledge guarantees, as formalized in Definition \ref{def:dZKIP}. These protocols can be transformed to non-interactive protocols that guarantee zero-knowledge against (one out of two) malicious verifiers using the Fiat-Shamir transform for the distributed setting.
\ifconf
\else
(See \cref{sec:fiat-shamir} for further details).
\fi
\end{remark}

\subsection{Composition of dZKIPs}
\label{subsec:composition}

Distributed ZKIP protocols maintain their zero-knowledge properties under (sequential) composition, so long as the protocols have the property that for every common input, for every input share there exists a completion of the share to an input on which the verifiers accept. 

\begin{lemma}[ZK composition]
\label{lemma:composition}

Let $\Pi,\Pi'$ be dZKIP protocols (see Definition \ref{def:dZKIP}) that are run sequentially,
where the secret shares of the private input to $\Pi'$ can be efficiently computed from the shares of the private inputs and private outputs of $\Pi$ (each verifier can compute its share of the private input to $\Pi'$ on its own), and where the public inputs to $\Pi'$ can be efficiently computed from the public inputs and outputs of $\Pi$.  Suppose that:
\begin{enumerate}
    \item $\Pi$ is $\gamma$-dZK and $\alpha$-complete. $\Pi'$ is $\gamma'$-dZK.
    
    \item If $\Pi$'s inputs and outputs are in that protocol's input and output languages ($\Linp$ and $\Lout$), respectively, then they specify inputs for $\Pi'$ that are in $\Pi$'s input language $\Linp'$.
\end{enumerate}

Consider the composed protocol $(\Pi \circ \Pi')$, which runs $\Pi$, and then uses the resulting private output to specify inputs to an execution of $\Pi'$. Then the composed protocol $(\Pi \circ \Pi')$ satisfies $(\alpha + \gamma+\gamma')$-dZK.

\end{lemma}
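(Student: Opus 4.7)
The plan is to construct a simulator $\Sim^{\circ}$ for the composed protocol from the simulators $\Sim$ and $\Sim'$ for $\Pi$ and $\Pi'$, and then bound the statistical distance between its output and $V_j$'s real view in $(\Pi \circ \Pi')$ by a two-hybrid argument. On input $(j,\pubParam,\privInp^{(j)})$, $\Sim^{\circ}$ first runs $\Sim(j,\pubParam,\privInp^{(j)})$ to obtain a simulated view $\tilde v_\Pi$ of $V_j$ in $\Pi$. From $\tilde v_\Pi$ (which by definition includes $V_j$'s share of $\Pi$'s private output) and the given $\privInp^{(j)}$, it derives $V_j$'s share $\tilde\privInp'^{(j)}$ of $\Pi'$'s private input as prescribed by the hypothesis, and similarly computes the public input $\pubParam'$. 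It then runs $\Sim'(j,\pubParam',\tilde\privInp'^{(j)})$ to obtain a simulated view $\tilde v_{\Pi'}$ and outputs $(\tilde v_\Pi, \tilde v_{\Pi'})$.

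For the analysis, let $H_0$ denote $V_j$'s real view $(v_\Pi, v_{\Pi'})$ in an execution of $(\Pi \circ \Pi')$ with honest parties, let $H_2$ denote the output of $\Sim^{\circ}$, and define the intermediate hybrid $H_1 = (v_\Pi,\Sim'(j,\pubParam',\privInp'^{(j)}))$ where $v_\Pi$ is still the real view in $\Pi$ and $\privInp'^{(j)}$ is the share of $\Pi'$'s private input derived from $v_\Pi$ and $\privInp^{(j)}$. To bound $\SD(H_0,H_1)$, I condition on the event $E$ that $\Pi$'s inputs and outputs lie in $\Linp$ and $\Lout$ respectively: by $\alpha$-completeness, $\Pr[\neg E]\le \alpha$, while on $E$ assumption~(2) guarantees that the induced inputs for $\Pi'$ lie in $\Linp'$, so the $\gamma'$-dZK property of $\Pi'$ (applied conditionally on $v_\Pi$) bounds the contribution on $E$ by $\gamma'$. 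Hence $\SD(H_0,H_1)\le \alpha + \gamma'$.

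To bound $\SD(H_1,H_2)$, observe that both hybrids apply the same (randomized) postprocessing $h(v) := \bigl(v,\Sim'(j,\pubParam'(v),\privInp'^{(j)}(v))\bigr)$, differing only in whether the input to $h$ is the real view $v_\Pi$ or the simulated view $\Sim(j,\pubParam,\privInp^{(j)})$. Since statistical distance is nonincreasing under (randomized) functions, $\SD(H_1,H_2) \le \SD\bigl(v_\Pi,\Sim(j,\pubParam,\privInp^{(j)})\bigr) \le \gamma$ by the $\gamma$-dZK property of $\Pi$. Combining via the triangle inequality gives $\SD(H_0,H_2) \le \alpha + \gamma + \gamma'$, which is the claimed dZK bound.

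The main conceptual obstacle is the second hybrid step: one must be careful that $\Sim'$ is applied to an input distribution which, in $H_2$, need not correspond to any real execution of $\Pi$ (the simulated share from $\Sim$ may not be jointly consistent with any prover/other-verifier states). This is why the bound flows through the data-processing-style inequality for statistical distance rather than through a fresh application of $\Pi'$'s dZK guarantee at $H_2$, and why the $\alpha$ loss must be absorbed at the $H_0$-to-$H_1$ step, where $\Pi$ is still executed honestly and so we have a well-defined event $E$ to condition on.
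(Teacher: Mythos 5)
Your proof is correct and follows essentially the same approach as the paper: the same two-stage simulator (run $\Sim$, derive the inputs for $\Pi'$ from $\Sim$'s output and the given share, then run $\Sim'$), and the same two-hybrid decomposition, with the $\alpha+\gamma'$ loss charged when replacing the real $\Pi'$-view by $\Sim'$ on real $\Pi$-outputs, and the $\gamma$ loss charged via data-processing when replacing the real $\Pi$-view feeding $\Sim'$ by the output of $\Sim$. Your closing remark about why the $\alpha$ loss must be absorbed at the first hybrid step, where the event $E$ is well-defined, is a nice explicit articulation of what the paper's proof sketch leaves implicit.
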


\ifconf
\else
\begin{proof}[Proof sketch.]
We show the simulator for $\Vc_0$ (the simulator for $\Vc_1$ is symmetric). We begin by running the simulator $\Sim$ for $\Pi$, obtaining a view that is statistically close to the view in an actual execution. The simulated view from $\Pi$'s execution specifies the public input for $\Pi'$, and also  $\privInp^{(0)}$: $\Vc_0$'s share of the private input. We then run the simulator $\Sim'$ for $\Pi'$ on these inputs, to produce a simulation for $\Vc_0$'s view in $\Pi'$, and output the composed simulated views.

To analyze this simulator, consider first running $\Sim'$ on outputs of a real execution of $\Pi$. By $\Pi$'s completeness, with all but $\alpha$ probability the inputs to $\Pi'$ are in its input language. When this is the case, if we run the simulator for $\Pi'$ on the outcomes of the real execution, the resulting simulated view is $\gamma'$-close to the real view in $\Pi'$ (conditioned on the preceding run of $\Pi$). Thus the experiment of running $\Sim'$ on a real view generated by $\Pi$ gives a view that is $(\alpha + \gamma')$-close to the view in the real execution. Now we can replace the input to the $\Sim'$ by a simulated execution of $\Pi$. By $\Pi$'s dZK guarantee, this ``input'' simulated view is $\gamma$-close to the view in a real $\Pi$-execution. By a hybrid argument, running the simulator $\Sim'$ on the view that was output by $\Sim$ will give a complete view of the composed protocol that is $(\gamma+\alpha+\gamma')$ close to the real view.
\end{proof}
\fi

\subsection{An Example: Verifying Linear Equalities}
\label{subsec:dZKIP-example}

We next provide an example for a distributed zero-knowledge protocol, which allows a prover to prove to two verifiers that some linear equality holds with respect to the shared inputs that the verifiers hold. The protocol is very simple and requires no interaction with the prover. Nevertheless, this protocol will be useful as a sub-protocol in our construction. 

\begin{example}[A protocol for linear equality]\label{example:linearEquality}
\textbf{Common inputs:} Field size $q \in \Nt$, Dimension  $d \in \Nt$, coefficients $\alpha_1,\ldots,\alpha_d, 
    \in \GF[q]$, a field element $z\in \GF[q]$.\\

   \noindent\textbf{Secret-shared inputs:} A vector $X \in \GF[q]^d$, where each $X_i$ is secret-shared as $[X_i] = (X_i^{(0)}, X_{i}^{(1)})$.
    The prover knows the secret-shared values $X$ and the shares $[X]$.
    The verifiers each know their own shares (respectively $X^{(0)}$ and $X^{(1)}$). \\

     \noindent\textbf{Claim (to be verified):} $\sum_{i=1}^d \alpha_i \cdot X_i = z$. I.e. the input language is $\Linp = \{((q,d,\{\alpha_i\},z),X) : \sum_i \alpha_i \cdot X_i = z\}$, there are no private outputs (or output language). \\


    \noindent\textbf{The Protocol:}
    For $j\in\{0,1\}$, verifier $\Vc_j$ (locally) computes $z_j = \sum_{i=1}^d \alpha_i \cdot X_i^{(j)}$
    and sends $z_j$ to $\Vc_{1-j}$. Both verifiers accept if $z=z_0+z_1 \pmod{q}$ and reject otherwise.

\noindent\textbf{Properties of the protocol:}
This protocol offers perfect completeness and perfect soundness in the sense that the verifiers accept if and only if $$z = \sum_{i=1}^d \alpha_i \cdot (X_i^{(0)}+X_i^{(1)}) \pmod{q}.$$
To see that the protocol is perfect zero-knowledge, observe that for every input share $X^{(j)}$ for verifier $\Vc_j$, the simulator can compute $z_j = \sum_{i=1}^d \alpha_i \cdot X_i^{(j)}  \pmod{q}$, and send to $\Vc_j$ the value $z-z_j \pmod{q}$ as its only message from $\Vc_{1-j}$.

\end{example}

\section{Norm Verification}
\label{sec:norm-protocol}

We build our protocol in multiple steps. Some of our sub-protocols output secret shares of values that are then checked in subsequent sub-protocols, e.g. checking that the secret-shared values are bits (i.e. 0 or 1).

In~\cref{subsec:inequality} we show a protocol that reduces checking that a secret-shared value is in some range to: $(i)$ checking a linear equality over secret-shared values, and $(ii)$ checking that shares generated in the protocol are secret shares of bits. In~\cref{subsec:wraparound-protocol}, we build on this to construct our main contribution: a protocol for verifying that the sum of secret-shared values is not larger than the field size (the sum is taken over the integers, not over the field). In particular, this protocol lets us reduce certifying a norm bound of secret-shared values to certifying quadratic constraints.
We recall the appropriate form for quadratic constraint validation in~\cref{subsec:quadratic} and combine these ingredients to derive our main result in~\cref{subsec:putting-together}.

\subsection{Range-Check Subprotocol}
\label{subsec:inequality}

We present a simple 1-message protocol that is helpful in verifying inequalities of the form $\sum_i \alpha_i Q_i \in [\beta_1,\beta_2] \pmod{q}$,
where the coefficients $\alpha_i \in \GF[q]$ and the lower and upper bounds $\beta_1,\beta_2 \in \GF[q]$ are known and public (we view elements of $\GF[q]$ as integers in the set $\{-\lfloor q/2 \rfloor,\ldots, \lfloor q/2 \rfloor \}$), but the $Q_i$'s are known only to the prover, not to the verifiers. In particular, the $Q_i$'s are either secret-shared between the verifiers, or they are functions of secret-shared values (e.g. when we are verifying the inequality $\sum_i X_i^2 \leq B \pmod{q}$ over the secret-shared values $X_i$). The protocol reduces this verification task to: $(i)$ the verification of an {\em equality} modulo $q$, an easier claim to deal with, and $(ii)$ verifying that new secret-shared values (generated in the course of the protocol) are secret shares of bits.

\paragraph{Protocol overview.} 
The prover computes $V = ((\sum_i \alpha_i Q_i) - \beta_1)$, and secret-shares the bits $\{v_j\}$ of $V$'s binary representation. It also computes $U = (\beta_2 - (\sum_i \alpha_i Q_i))$ and secret-shares the bits $\{u_j\}$ of $U$'s binary representation. 
Observe that $U,V \in [0,(\beta_2 - \beta_1)]$, and thus $U$ and $V$ can be represented using $b = \lceil \log (\beta_2 - \beta_1 +1) \rceil$ bits. The verifiers get secret shares of these bits and verify that the sum of the values they represent is correct, i.e. that $(\sum_j v_j \cdot 2^j)+ (\sum_j u_j \cdot 2^j)= (\beta_2 - \beta_1) \pmod{q}$ (recall that linear equalities over secret-shared values are easy for the verifiers to check). If this holds, and also: $(i)$ the $v_j$'s and the $u_j$'s are all secret shares of bits, $(ii)$  $\sum_j v_j \cdot 2^j = (\sum_i \alpha_i Q_i) - \beta_1$, and $(iii)$ $q > 3(\beta_2 - \beta_1)+2$, then indeed it must be the case that $\sum_i \alpha_i Q_i \in [\beta_1,\beta_2]$. Note that if $Q_i$'s are themselves secret-shared values, then the verifiers can verify that condition $(ii)$ holds, and all that remains is to verify that the secret-shared values are indeed of bits (but we will also use this protocol in situations where the verifiers do not have secret shares of the $Q_i$ values).

The protocol is in Figure \ref{fig:inequality-mod-q-protocol}, its complexity and guarantee are in Lemma \ref{lemma:range-protocol}.

\ifconf\begin{figure*}[t]
  \thisfloatpagestyle{empty}
\else \begin{figure*}[t]
\fi
  \begin{boxedminipage}{\textwidth}
    \small \medskip \noindent

    \underline{\textbf{Protocol: Range Check $\pmod{q}$}}\\

    \textbf{Common inputs:} Field size $q \in \Nt$, number of variables $n \in \Nt$, coefficients $\alpha_1,\ldots,\alpha_n \in \GF[q]$ and claimed lower and upper bounds $\beta_1,\beta_2 \in \GF[q]$, viewed as integers in $\{-\lfloor q/2 \rfloor,\ldots, \lfloor q/2 \rfloor \}$, s.t. $\beta_1 \leq \beta_2$ and $q > 3(\beta_2 - \beta_1)+2$.\\

    \textbf{Other inputs:} The prover knows $Q_1,\ldots,Q_n \in \GF[q]$. We do not assume the verifiers have access to these $Q_i$'s. \\

    \textbf{Secret-shared outputs:} Shares $\{[v_j],[u_j]\}_{j=0}^{b-1}$, where $b = \lceil \log (\beta_2-\beta_1+1) \rceil$ (for each $j$, each verifier outputs its respective shares, $(v_j^{(0)},u_j^{(0)})$ or $(v_j^{(1)},u_j^{(1)})$). \\



    \textbf{The Protocol:}

    The prover secret shares:
    \begin{enumerate}

    \item The $b$ bits $(v_j)_{j \in [0,\ldots,b-1]}$ of $V = (\sum_i \alpha_i Q_i) - \beta_1 \pmod{q}$,

    \item The $b$ bits $(u_j)_{j \in [0,\ldots,b-1]}$ of $U = \beta_2 - (\sum_i \alpha_i Q_i) \pmod{q}$.

    \end{enumerate}

    The verifiers verify the linear equality $(\sum_{j=0}^{b-1} v_j \cdot 2^j)+ ( \sum_{j=0}^{b-1} u_j \cdot 2^j) = \beta_2 - \beta_1 \pmod{q}$  (rejecting otherwise).

  \end{boxedminipage}

  \caption{Range Check $\pmod{q}$ Protocol}
  \label{fig:inequality-mod-q-protocol}
\end{figure*}


\begin{lemma}
\label{lemma:range-protocol}

    The protocol of Figure \ref{fig:inequality-mod-q-protocol} satisfies:

    \begin{enumerate}
        \item \textbf{Completeness:} If $\sum \alpha_i Q_i \in [\beta_1, \beta_2] \pmod{q}$ (this is the input condition), then the verifiers accept and it holds that: $(i)$ $(\sum_{i=1}^n \alpha_i Q_i) - \beta_1 = \sum_{j=0}^{b-1} v_j \cdot 2^j \pmod{q}$, and $(ii)$ the secret-shared values are bits: $\forall j: \, v_j,u_j \in \bitset$ ($(i)$ and $(ii)$ are the output conditions).

        \item \textbf{Soundness:} If $\sum_{i=1} \alpha_i Q_i \notin [\beta_1,\beta_2] \pmod{q}$, if the verifiers do not reject, then either: $(i)$ $(\sum_{i=1}^n \alpha_i Q_i) - \beta_1 \neq \sum_{j=0}^{b-1} v_j \cdot 2^j \pmod{q}$, or $(ii)$ for some $j$, either $v_j$ or $u_j$ is not in $\bitset$.

        \item \textbf{Zero-Knowledge}:The protocol is strong zero-knowledge as per Definition~\ref{def:dZKIP}.\footnote{ Technically, this protocol does not fall into our distributed verification model, as the verifiers do not hold secret shares of the $Q_i$'s. 
        Indeed, here the simulator can create its view without {\em any} access to shares of the $Q_i$'s.}
    \end{enumerate}

    The proof consists of a single message of length $(2\lceil \log(\beta_2 - \beta_1 +1) \rceil \cdot \log (q)$  from the prover to each verifier (this message contains the secret shares of the bits $\{v_j,u_j\}$). The prover performs $O(n + \log(\beta_2-\beta_1 + 1))$ field operations. The verifiers each perform $O(\log(\beta_2-\beta_1 + 1))$ field operations and communicate $\log(q)$ bits between themselves.

\end{lemma}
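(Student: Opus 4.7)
The plan is to verify each of the three properties in turn, with soundness being the only one that requires a nontrivial argument; the rest follow from the structure of the protocol together with the ZK guarantees of the linear-equality subprotocol from Example~\ref{example:linearEquality}. Throughout, let $b = \lceil \log(\beta_2-\beta_1+1)\rceil$ and let $V_{\mathrm{int}} = \sum_{j=0}^{b-1} v_j 2^j$ and $U_{\mathrm{int}} = \sum_{j=0}^{b-1} u_j 2^j$ denote the integer values represented by the shared bit vectors.

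For completeness, if $S := \sum_i \alpha_i Q_i \in [\beta_1,\beta_2] \pmod q$, then the honest prover can compute $V = S-\beta_1$ and $U = \beta_2 - S$ as integers in $[0,\beta_2-\beta_1]$, both of which fit in $b$ bits. Consequently the $v_j$'s and $u_j$'s are indeed bits, condition (i) of the output language holds by construction, and the linear equality $V_{\mathrm{int}} + U_{\mathrm{int}} = \beta_2-\beta_1 \pmod q$ is satisfied, so the linear-equality check (Example~\ref{example:linearEquality}) accepts with probability $1$.

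Soundness is the main content. I will argue the contrapositive: assume the verifiers accept, and that both output conditions hold, i.e.\ (a) all shared $v_j, u_j$ are in $\bitset$, and (b) $S-\beta_1 \equiv V_{\mathrm{int}} \pmod q$. I must conclude $S \in [\beta_1,\beta_2] \pmod q$. From (a) we get $V_{\mathrm{int}}, U_{\mathrm{int}} \in [0,2^b-1]$, so their integer sum lies in $[0, 2(2^b-1)]$. Since $b = \lceil \log(\beta_2-\beta_1+1)\rceil$ implies $2^b \leq 2(\beta_2-\beta_1+1)$, the integer sum is at most $4(\beta_2-\beta_1)+2$. The verifiers' acceptance gives $V_{\mathrm{int}} + U_{\mathrm{int}} \equiv \beta_2-\beta_1 \pmod q$, so the integer sum equals $\beta_2-\beta_1 + kq$ for some $k \geq 0$. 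Using $q > 3(\beta_2-\beta_1)+2$, the choice $k \geq 1$ would force the integer sum to exceed $4(\beta_2-\beta_1)+2$, a contradiction; hence $k=0$ and $V_{\mathrm{int}} + U_{\mathrm{int}} = \beta_2-\beta_1$ exactly over the integers. In particular $V_{\mathrm{int}} \in [0, \beta_2-\beta_1]$, so by (b) $S \equiv \beta_1 + V_{\mathrm{int}} \pmod q$ with $\beta_1 + V_{\mathrm{int}} \in [\beta_1,\beta_2]$, as needed. The slightly delicate point here is choosing the modular reduction carefully so that ``$\in [\beta_1,\beta_2] \pmod q$'' is well-defined; the condition $q > 3(\beta_2-\beta_1)+2$ gives the requisite separation and is the main obstacle.

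For zero-knowledge, note that the prover's only message is a set of $2b$ arithmetic shares, each of which is a uniformly random field element when viewed in isolation; the simulator for $\Vc_j$ samples $2b$ uniform elements of $\GF[q]$ to play the role of its shares. The subsequent linear-equality check is simulated using the simulator from Example~\ref{example:linearEquality}, which requires only the verifier's own shares (of $\{v_j,u_j\}$) and the public coefficients. This gives a perfect simulation. Finally, the stated complexity bounds follow immediately by inspection: the prover sends $2b$ secret shares of bits (each a $\log q$-bit field element), the bulk of the prover's work is the $O(n)$ operations to compute $\sum_i \alpha_i Q_i$ plus $O(b)$ operations for the bit decompositions and sharings, and the verifiers' work and $\log q$-bit inter-verifier message are inherited from the linear-equality subprotocol.
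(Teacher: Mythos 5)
Your proof is correct and fills in exactly the details the paper leaves implicit (the paper states the proof "follows from the construction" and only sketches the simulator). Completeness and zero-knowledge are routine and you handle them as the paper intends: honest $V,U \in [0,\beta_2-\beta_1]$, dummy uniform shares for the simulator, delegation to the simulator of Example~\ref{example:linearEquality}. Your soundness argument via the contrapositive is the key substantive step and it is correct: assuming all $v_j,u_j$ are bits bounds the integer sum $V_{\mathrm{int}}+U_{\mathrm{int}}$ by $2(2^b-1)\le 4(\beta_2-\beta_1)+2$, and then the condition $q>3(\beta_2-\beta_1)+2$ rules out any nonzero multiple of $q$ slipping into the accepted congruence $V_{\mathrm{int}}+U_{\mathrm{int}}\equiv\beta_2-\beta_1 \pmod q$, forcing exact integer equality and hence $V_{\mathrm{int}}\in[0,\beta_2-\beta_1]$, which combined with output condition $(i)$ gives $S\in[\beta_1,\beta_2]\pmod q$. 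This is precisely the role the parameter constraint $q>3(\beta_2-\beta_1)+2$ plays, as the paper's protocol overview alludes to, so your proof matches the intended argument.
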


The proof of Lemma \ref{lemma:range-protocol} is omitted: it follows from the construction. The simulator generates dummy shares of the bits $v_j$ and $u_j$, and simulates receiving a message in the protocol that verifies the linear equality, as in Example \ref{example:linearEquality}.

\begin{remark} \label{remark:inequality-efficiency}
  We note that if $(\beta_2-\beta_1+1)$ is a power of two, we can simplify this protocol by skipping sending $U$ and skipping the equality check. Indeed the fact that $V = (\sum_i \alpha_i Q_i) - \beta_1$ can be represented as $\sum_{j=0}^{b-1}v_j 2^j$, for bits $v_j$, is equivalent in this case to $V \in [0,2^b-1] = [0, \beta_2-\beta_1]$.
\end{remark}

\subsection{Detecting Wraparound}
\label{subsec:wraparound-protocol}

Our main technical contribution is a protocol that allows the servers to verify that the sum of the squares of the secret-shared values is not larger than the field size (where the sum is taken over the integers), i.e. that there is no ``wraparound'' when we take the sum of squares modulo the field size. The guarantees and complexity of the protocol are in Lemma \ref{lemma:wraparound-protocol}. The protocol itself is in Figure \ref{fig:wraparound-protocol}.

In more detail: let $q$ be the size of the field, constraints on which will be determined below. We view each $X_i$ as an integer whose value is in the range $\{0, \ldots, q-1 \}$. In what follows, all equalities and inequalities are over the integers unless we explicitly note otherwise (using and abusing the $\pmod{q}$ notation). The protocol considers a {\em promise problem}: in the YES case we have $\|X\|_2^2 \leq B$, whereas in the NO case we have $\|X\|_2^2 \geq q$ (i.e., there is wraparound). The verifiers accept or reject, and they also output secret shares $\{[w_j]\}$, which should be shares of bits (i.e. it should be true that $w_j \in \bitset$, and this always holds in the YES case assuming the prover follows the protocol). Soundness guarantees that in the NO case, the probability that the verifiers accept {\em and} the output shares are all of bits is small. The verifiers will run a subsequent sub-protocol to verify that all $w_j$'s are shares of bits.

\paragraph{Protocol overview.} The verifiers pick a random vector $Z \in \{-1,0,1\}^d$, where each $Z_i$ is  $-1$ w.p. $1/4$, $1$ w.p. $1/4$, and $0$ w.p. $1/2$.
The verifiers send
$Z$ to the prover, and verify that for a constant $\alpha > 0$ (set below, where we assume $q > 2\alpha \sqrt{B}$):
\begin{align} \label{eq:wraparound-test}
    \sum_i Z_i X_i & \in \left[ -\alpha \sqrt{B} , \alpha \sqrt{B} \right] \pmod{q}.
\end{align}
This verification is performed using the protocol of Figure \ref{fig:inequality-mod-q-protocol}. We show that for an appropriate choice of $\alpha$ there is a small constant $\eta$ s.t.~in the YES case ($\|X\|^2_2 \leq B$) Equation \eqref{eq:wraparound-test} holds with probability at least $1-\eta$ over the choice of $Z$. In the NO case, the probability is at most $1/2$.  We can repeat the test in parallel to make the soundness error negligible. This effectively reduces our original problem to range-checks (and, via the protocol of Figure \ref{fig:inequality-mod-q-protocol}, to bit-checks). There is, however, a zero-knowledge issue: there's a non-negligible failure probability in the YES case, and the verifiers see whether the test failed or not (and the values $Z$ that led to failure), which leaks information about $X$.

We could resolve this issue by increasing the field size to the point where the failure probability in the YES case becomes negligible, but this would entail a significant cost in the communication complexity required for sending field elements (we remark that increasing the field size doesn't reduce the soundness error). Instead, we take advantage of the fact that the protocol will be repeated many times (for soundness). The verifiers will not learn whether any of the individual repetitions succeeded, but only whether ``many'' of them succeeded. The threshold for ``many'' is set by a parameter $\tau$ indicating the fraction of repetitions that should succeed (by default we set $\tau=3/4$, so the verifiers accept if and only if at least three quarters of the repetitions succeed). In the YES case this will happen with all but negligible probability (so we get statistical zero-knowledge). To accomplish this, we repeat the test $r$ times, where in the $k$-th repetition the prover secret-shares a bit $g_k \in \bitset$, indicating whether or not Equation \eqref{eq:wraparound-test} holds with respect to the $k$'th test. The verifiers multiply both sides of the equation by $g_k$, so that if $g_k=0$ the test passes even though the equation doesn't hold. Later, the verifiers also verify that $\sum_k g_k \geq \tau \cdot r$ (the inequality can be verified in zero-knowledge using the protocol of Figure \ref{fig:inequality-mod-q-protocol}). Checking that Equation \eqref{eq:wraparound-test} holds after multiplying by $g_k$ is done using the protocol of Section \ref{subsec:quadratic} for checking quadratic constraints over secret-shared values.

We make two minor modifications to improve the protocol's efficiency: First,  the cost of the quadratic constraint protocol grows with the (square root of the) number of summands. Thus, for each iteration, each verifier computes (on its own) its secret share for the sum $\sum_i Z_i X_i$ (a linear function of the secret-shared $X_i$'s). This gives the verifiers a single secret-shared field element that should be multiplied by $g_k$ (instead of $d$). Indeed, in the full protocol, we fold further summands into this secret-shared sum. The second minor efficiency improvement we use is having the honest prover set {\em exactly} a $(1-\tau)$-fraction of the bits $g_k$ to 0 (w.h.p. this means that some repetitions where Equation \eqref{eq:wraparound-test} holds will not be checked). This allows the verifiers to replace checking $\sum_k g_k \geq \tau \cdot r$ (an inequality) with a simpler equality check $\sum_k g_k = \tau \cdot r$. 

\begin{figure*}[!t]
 \ifconf \thisfloatpagestyle{empty}\else\fi
  \begin{boxedminipage}{\textwidth}
    \ifconf
    \else
    \small 
    \fi
    \medskip \noindent

    \underline{\textbf{Wraparound Detection Protocol}}\\

    \textbf{Common inputs:} Dimension $d \in \Nt$, claimed bound $B \in \Nt$, field size $q \in \Nt$, number of repetitions $r \in \Nt$, completeness error per repetition $\eta \in [0,1]$, threshold $\tau \in (1/2,1]$ for successful repetitions, s.t. $\tau \cdot r$ is an integer. \\

    \textbf{Secret-shared inputs:} A vector $X \in \GF[q]^d$, where each $X_i$ is secret-shared as $[X_i] = (X_i^{(0)}, X_{i}^{(1)})$.

    The client (prover) knows the secret-shared values $X$ and the shares $[X]$.

    The servers (verifiers) each know their own shares (respectively $X^{(0)}$ and $X^{(1)}$). \\

    \textbf{Secret-shared outputs:} $\{[g_k],[S_k],[v_{k,j}], [u_{k,j}]\}_{k \in [r], j \in [b]}$ where $b= \left( \left \lceil \log(2\alpha \sqrt{B} + 1) \right \rceil \right)$. \\

    \textbf{The Protocol:}

    Fix $\alpha = \sqrt{ \ln(2/\eta)}$. Let $D_Z$ be the distribution that samples a $d$-dimensional vector where for each $i \in [1,\ldots, d]$, $Z_i$ is drawn independently to be $-1$ w.p. $1/4$, 0 w.p. $1/2$ and $1$ w.p. $1/4$

    \begin{enumerate}

    \item The following test is repeated in parallel $r$ times, where in the $k$-th repetition:

    \begin{enumerate}

    \item  The verifiers choose a random string $Z_k \sim D_Z$ and send it to the prover.

    The prover computes $Y_k = \sum_{i=1}^d Z_{k,i} X_i$.

    \item If  $Y_k \in [-\alpha \sqrt{B}, \alpha\sqrt{B}]$, the prover sets $g_k=1$
    and uses the protocol of Figure \ref{fig:inequality-mod-q-protocol} to prove that $(\sum_i Z_{k,i} X_i) \in [-\alpha \sqrt{B}, \alpha\sqrt{B}]$. Let $\{v_{k,j}, u_{k,j}\}_{j \in [b]}$ be the secret-shared outputs of that protocol.

    Otherwise ($Y_k \notin [-\alpha \sqrt{B}, \alpha\sqrt{B}]$), the prover sets $g_k = 0$, sets the bits $\{v_{k,j}\}$ to be the bit representation of the value $2\alpha \sqrt{B}$, and sets the bits $\{u_{k,j}\}_{j \in [b]}$ to all be 0.
    Note that this setting ensures that the linear check in the protocol of Figure \ref{fig:inequality-mod-q-protocol} (the last step of that protocol) succeeds.

    \end{enumerate}

    \item If there are more than $(1-\tau) \cdot r$ repetitions in which $g_k=0$ then the prover aborts (and the verifiers reject).

    Otherwise, the prover sets $g_k=0$ for as many of the repetitions as needed to ensure there are {\em exactly} $(1-\tau) \cdot r$ such repetitions, and sends secret shares of $\{g_k\}_{k \in [r]}$.

    \item For each $k \in [r]$, the verifiers will check that either $g_k=0$ or the $v_j$'s satisfy soundness condition $(i)$ of the protocol of Figure \ref{fig:inequality-mod-q-protocol} (see Lemma \ref{lemma:range-protocol}). Towards this, 
    each verifier computes its a share of:
    \begin{align*}
     S_k = \left( \sum_{i=1}^d Z_{k,i} X_i \right)  + \alpha\sqrt{B}  - \left( \sum_{j \in [b]} 2^j \cdot v_{k,j} \right) \pmod{q},
    \end{align*}
     
    Note that the sum $S_k$ is a linear function of secret-shared values, so each verifier can indeed compute its share on its own. The protocol of Section \ref{subsec:quadratic} will later be used to check the quadratic equality $g_k \cdot S_k = 0 \pmod{q}$ (see the protocol of Figure~\ref{fig:L2-protocol}).

    \item The verifiers verify the linear equality $\sum_{k \in [r]} g_k = \tau \cdot r \pmod{q}$ (otherwise they reject).

    \end{enumerate}
  \end{boxedminipage}

  \caption{Wraparound Detection Protocol}
  \label{fig:wraparound-protocol}
\end{figure*}

\begin{lemma}
\label{lemma:wraparound-protocol}

    Fix a bound $B$, a number of repetitions $r$, a desired completeness error for each repetition $\eta \in [0,1]$ and a threshold $\tau \in (1/2,1]$ s.t. $\tau \cdot r$ is an integer. Let the field size $q$ be at least $\max \{  \frac{B \ln(2/\eta) }{4000}, 2600 \sqrt{B \ln(2/\eta)}, 2r\}$.

    Let $\Bin(\ell;r,p)$ denote the probability that the Binomial distribution with parameters $r$ and $p$ has outcome (number of successes) at least $\ell$. The protocol of Figure \ref{fig:wraparound-protocol} has the following properties:

    \begin{enumerate}
        \item \textbf{Completeness:} If $\sum X_i^2 \leq B$, then the probability that the prover aborts is at most:
        \begin{align} \label{eq:wraparound-errc}
            \errc = 1 - \Bin((\tau\cdot r); r , 1-\eta).
        \end{align}
        Thus, for $\tau \in (1/2,1-\eta)$, we get that $\errc \leq \exp\left(-2\left(1-\eta - \tau \right)^2 \cdot r \right)$. For $\tau \in [1-\eta,1]$, it is still the case that $\errc \leq r \cdot \eta$.
        
        If the prover doesn't abort, then the verifiers accept and it holds that: $(i)$ for every $k \in [r]$,  $g_k \cdot S_k =0$, and $(ii)$ the output shares $\{[g_k],[v_{k,j}],[u_{k,j}]\}$ are shares of bits.

        \item \textbf{Soundness:} If $\sum X_i^2 \geq q$, the probability that the verifiers accept, and that $(i)$ for every $k$ it holds that $g_k \cdot S_k =0$, and $(ii)$ the shares $\{[g_k], [v_{k,j}, [u_{k,j}]\}$ are all shares of bits, is at most
        \begin{align}
        \label{eq:wraparound-errs}
        \errs = \Bin((\tau\cdot r); r, \frac{1}{2}) \leq
        \exp\left(-2\left(\tau - \frac{1}{2}\right)^2 \cdot r \right).
        \end{align}

        \item \textbf{Zero-Knowledge}: The protocol  satisfies statistical zero-knowledge: the view of each verifier can be simulated up to statistical distance  $\errc$ (see Equation \eqref{eq:wraparound-errc}) .
    \end{enumerate}

    The protocol is public-coins, with 2 messages. The verifier's message is $(2 \cdot d \cdot r)$ bits, the prover's response is of length $\left(  \left ( \left \lceil \log \left( \sqrt{2 B \cdot \ln(2 / \eta)} + 1 \right) \right\rceil + 2 \right ) \cdot r \cdot \log(q) \right)$. The prover performs $O\left((d + \log(B \cdot \ln(2/\eta))) \cdot r \right)$ field operations. The verifiers each perform $O\left((d + \log(B \cdot \ln(2/\eta))) \cdot r \right)$ field operations, and communicate $O(r \cdot \log(q))$ bits between themselves.
\end{lemma}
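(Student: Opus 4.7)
The plan is to reduce the three claims to a single-repetition analysis of the random test in Step~1, and then to combine the $r$ independent repetitions via binomial concentration; zero-knowledge follows by gluing the simulator of Lemma~\ref{lemma:range-protocol} with uniform random shares for the remaining values. Let $Y_k = \sum_i Z_{k,i} X_i$ over the integers. Each $Y_k$ is a sum of independent mean-zero random variables with $\mathrm{Var}(Z_{k,i} X_i) = X_i^2/2$ and is therefore sub-Gaussian with parameter $\|X\|_2^2/2$. For single-repetition completeness in the YES case ($\|X\|_2^2 \leq B$), a sub-Gaussian tail bound gives $\Pr[|Y_k| > \alpha\sqrt{B}] \leq 2\exp(-\alpha^2) \leq \eta$ for the chosen $\alpha = \sqrt{\ln(2/\eta)}$. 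When $|Y_k| \leq \alpha\sqrt{B}$ the honest prover sets $g_k = 1$ and runs Figure~\ref{fig:inequality-mod-q-protocol} correctly, so Lemma~\ref{lemma:range-protocol} guarantees $S_k = 0$ and bit-valued shares; when $g_k = 0$ the identity $g_k S_k = 0$ is immediate and the dummy bits pass the linear check of Figure~\ref{fig:inequality-mod-q-protocol} by construction.

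The main obstacle is the single-repetition soundness claim: for every $X \in \mathbb{Z}^d$ with $\sum X_i^2 \geq q$, $\Pr_{Z_k \sim D_Z}\bigl[(\sum_i Z_{k,i} X_i) \bmod q \in [-\alpha\sqrt{B}, \alpha\sqrt{B}]\bigr] \leq 1/2$. I would split by $\|X\|_\infty$. In the heavy-coordinate regime (where $\|X\|_\infty$ is above a threshold calibrated against $\sqrt{B}$ and against $q$), I would condition on all $Z_{k,i}$ with $i \neq i^*$, where $i^*$ is the index of the heaviest coordinate; then $Y_k$ takes three values separated by $X_{i^*}$ modulo $q$, and an appropriate choice of the window width forces at most one of them inside $[-\alpha\sqrt{B}, \alpha\sqrt{B}]$, so the marginal on $Z_{k, i^*}$ assigns the accepting event probability at most $1/2$. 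In the light-coordinate regime I would invoke Berry--Esseen: the CDF of $Y_k$ over $\mathbb{Z}$ differs from that of a Gaussian with variance $\sigma^2 = \|X\|_2^2/2 \geq q/2$ by at most $O(\|X\|_3^3/\|X\|_2^3) \leq O(\|X\|_\infty/\|X\|_2)$. Summing the Gaussian density over integer wraps, the total mass inside a width-$2\alpha\sqrt{B}$ window modulo $q$ is bounded by $O(\alpha\sqrt{B}/\sigma) = O(\alpha\sqrt{B/q})$. The field-size conditions $q \geq B\ln(2/\eta)/4000$ and $q \geq 2600\sqrt{B\ln(2/\eta)}$ are calibrated precisely so that the heavy-coordinate window bound, the Gaussian window mass, and the Berry--Esseen slack at the crossover all fit under $1/2$; pinning down the constants at this crossover is where I expect the proof to be technically most delicate.

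Given the single-repetition bounds, the $r$ repetitions are independent conditioned on $X$, so the number of successful tests stochastically dominates $\Bin(r, 1-\eta)$ in the YES case and is stochastically dominated by $\Bin(r, 1/2)$ in the NO case; the stated $\errc$ and $\errs$ then follow, and their Hoeffding corollaries are immediate (with a simple union bound giving $r\eta$ to cover the $\tau \geq 1-\eta$ regime of completeness). For zero-knowledge, the simulator would sample fresh $Z_1, \ldots, Z_r \sim D_Z$, choose a uniformly random $(1-\tau)r$-subset of indices to play the role of $g_k = 0$, invoke the simulator of Lemma~\ref{lemma:range-protocol} on the remaining indices (its output is independent of the underlying $X$), and fill in the remaining share values uniformly at random (consistent with each individual verifier seeing only independent uniform shares). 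The only deviation from the real view is the event that the honest prover would have aborted, contributing statistical distance at most $\errc$. The communication and computation bounds follow by summing the per-repetition cost of Lemma~\ref{lemma:range-protocol} across the $r$ parallel invocations.
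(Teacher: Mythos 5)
Your proposal matches the paper's proof essentially step for step: sub-Gaussian concentration of $Y_k$ for per-repetition completeness; a case split on $\|X\|_\infty$ with a heavy-coordinate conditioning argument and a Berry--Esseen/wrapped-Gaussian mass argument for per-repetition soundness (with the field-size lower bounds engineered so the constants close at $1/2$); binomial stochastic domination across the $r$ independent repetitions; and a simulator that emits fresh $Z_k$'s and uniformly random shares, invoking the range-check sub-protocol's simulator to produce the rest of the transcript up to statistical distance $\errc$. The one cosmetic difference is that the paper's simulator zeroes a fixed set of $(1-\tau)r$ flags $g_k$ rather than a uniformly random subset, but as you observe this is immaterial because each verifier sees only independent uniformly random shares of the $g_k$'s.
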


\begin{proof} The protocol is in Figure \ref{fig:wraparound-protocol}, and the claimed complexity bounds follow by construction. We proceed to prove completeness for a single repetition (Proposition \ref{prop:wraparound-completeness}) and soundness for a single repetition (Proposition \ref{prop:wraparound-soundness}). For clarity, we omit the subscript $k$ when we consider a single iteration. Let $Y$ be the vector where $Y_i = Z_i X_i$. Recall that we set the parameter $\alpha = \sqrt{\ln (2/\eta)}$. 
\ifconf The proof of the following completeness claim follows from standard subgaussian concentration, and deferred to the full version \cite{RothblumOCT23}. \fi

\ifconf
\begin{restatable}[Completeness]{proposition}{completeness}
\else
\begin{proposition}[Completeness]
\fi
\label{prop:wraparound-completeness}

Let $\sum_i X_i^2 \leq B$. For every $\alpha > 0$ and every field size $q \geq 2\alpha\sqrt{B}$:
\begin{align*}
    & \Pr_{Z_1,\ldots,Z_d} \left[ Y \in [-\alpha \sqrt{B}, \alpha \sqrt{B}] \right] \geq 1 - 2e^{-\alpha^2}.
\end{align*}

\ifconf \end{restatable}
\else \end{proposition}
\fi

\ifconf
\else
\begin{proof} Recall that
$Y_i = Z_i X_i$. Thus, $Y_i$ has expectation 0 and is subgaussian\footnote{For background on subgaussian random variables, see e.g. \cite{Rig15}} 
 with parameter $(X_i/\sqrt{2})$:
\begin{align*}
    \forall \lambda \in \mathbb{R}: \E [e^{\lambda Y_i}] & = \frac{1}{2} + \frac{1}{4} \cdot \left( e^{\lambda X_i} + e^{-\lambda X_i}\right) \\
    & \leq \frac{1}{2} + \frac{1}{2} \cdot \left( e^{\lambda^2 X_i^2 / 2}  \right) \\
    & \leq e^{\lambda^2 X_i^2 / 4}.
\end{align*}

Thus, $Y$, which is a sum of independent subgaussians, is itself subgaussian, with parameter $\sqrt{\sum_i X_i^2/2} \leq \sqrt{B/2}$. By tail bounds for subgaussian RVs:
\begin{align*}
    \Pr[|Y| > \alpha\sqrt{B}] & \leq 2\exp \left( - \frac{2\alpha^2 B}{2B}
    \right) \\
    & \leq 2 \exp(-\alpha^2).
\end{align*}
\end{proof}
\fi


\begin{proposition}[Soundness]
\label{prop:wraparound-soundness}

For every $B$, every $\alpha \geq 1$ and every field size $q \geq \max \{  \frac{\alpha^2 B}{4000}, 2600 \alpha \sqrt{B}\}$, if it is the case that $\sum_i X_i^2 > q$, then:
\begin{align*}
    & \Pr_{Z_1,\ldots,Z_d}[Y \in [-\alpha \sqrt{B}, \alpha \sqrt{B}] \pmod{q}] \leq \frac{1}{2}.
\end{align*}

\end{proposition}

\begin{proof}  If any of the secret-shared $w_j$ values are not bits, then the soundness condition is automatically satisfied. Thus, we only need to bound the probability that the verifiers accept conditioned on the event that all $w_j$'s are bits. Under this conditioning, $(\sum_j w_j \cdot 2^j) \in [0,2\alpha \sqrt{B}]$. The verifiers check that $(\sum_j w_j \cdot 2^j) = Y + \alpha \sqrt{B} \pmod{q}$, and thus they will reject unless $Y \in [-\alpha \sqrt{B}, \alpha \sqrt{B}] \pmod{q}$. We bound the probability that $Y \pmod{q}$ is in this range using a case analysis (soundness follows), where we view the value of each $X_i$ modulo $q$ as an integer between $-q/2$ and $q/2$.

\paragraph{Soundness, case $I$ (large max).}
If:
\begin{align*}
    \max_i \left| X_i \right| > 2\alpha\sqrt{B} \pmod{q},
\end{align*}
then let $i^*$ be the argmax, where $|X_{i^*}| > 2\alpha \sqrt{B} \pmod{q}$ (note that this can only happen if $q > 4\alpha\sqrt{B}$).
The probability, over the choice of $Z$, that $\sum_i Z_i X_i \pmod{q}$ lands in the interval $[-\alpha \sqrt{B},\alpha \sqrt{B}]$ is at most $1/2$. To see this, fix all of $Z$'s entries except the $i^*$-th entry. There are two cases:

\begin{itemize}
    \item Let $S$ be the sum of all $Y_i$'s except the $i^*$-th $\pmod{q}$. If $S \in [-\alpha \sqrt{B}, \alpha \sqrt{B}]$, then when we add or subtract $X_{i^*}$, we end up outside the interval. This is because if $Z_{i^*} \neq 0$, then
    \begin{align*}
    |Y_{i^*}| \in (  2\alpha \sqrt{B} ,  q/2] \pmod{q}.
    \end{align*}
   Suppose w.l.o.g that $Y_{i^*}$ is positive (modulo $q$), then:
    \begin{align*}
    Y = S + Y_{i^*} \in ( \alpha \sqrt{B}, q/2 + \alpha \sqrt{B}] \pmod{q},
    \end{align*}
    and thus $Y$ is outside the interval (recall that $q > 4\alpha\sqrt{B}$). A similar statement holds for the case that $Y_{i^*}$ is negative (modulo $q$).

    \item If the sum of all $Y_i$'s except the $i^*$-th is outside the interval, then w.p. $1/2$ we have that $Z_{i^*}=0$ and the total sum is also outside the interval.
\end{itemize}

\paragraph{Soundness, case $II$ (bounded max).} Suppose it is the case that:
\begin{align*}
\max_i \left| X_i \right| \leq 2\alpha\sqrt{B}.
\end{align*}

In this case, we use the Berry-Esseen theorem to show that the distribution of $Y$ is sufficiently close to a Normal distribution, and this allows us to bound the probability that its magnitude is small modulo $q$.

\begin{theorem}[Berry-Esseen for our setting, see \cite{FellerV2, Shevstova2010}.]
\label{thm:berry-esseen-Y}

Suppose $\max_i |X_i| \leq 2\alpha\sqrt{B}$, and take $\sigma^2 = \sum_i X_i^2/2$. Then for any $y \in \mathbb{R}$:
\begin{align*}
\left| \Pr [Y \leq y] - \Pr[\mathcal{N}(0,\sigma^2) \leq y] \right| \leq \frac{0.56 \cdot  \alpha\sqrt{B}}{\sigma}.
\end{align*}
\end{theorem}

A symmetric claim follows for bounding the probability that $Y$ is above a value $y'$. For an interval $[y',y]$, the probability of $Y$ lying in the interval can be bounded via a Union Bound on the probability that it is below $y'$ and the probability that it is above $y$.

We use the Berry-Essen theorem, together with the concentration (and anti-concentration) properties of the Normal distribution, to bound the probability that $Y$'s magnitude is small modulo $q$. In Claim \ref{claim:bound-for-close} we bound the probability that $Y$ lies in any single interval where its magnitude modulo $q$ is small. In Claim \ref{claim:bound-for-far} we bound the probability that $Y$ is far from its expectation. To prove the soundness lemma, we combine these claims: bounding the probability that $Y$ is not too far from its expectation and has small magnitude modulo $q$, and also that it is far from its expectation. 
\ifconf We prove the following claims in the full version \cite{RothblumOCT23}.
\else The details follow. 
\fi

\ifconf \begin{restatable}{claim}{boundclose}
\else
\begin{claim}
\fi
\label{claim:bound-for-close}
   Suppose that $\alpha \geq 1$, the field size is $q \geq 2\alpha \sqrt{B}$, and that $\max_i |X_i| \leq 2\alpha\sqrt{B}$, and take $\sigma^2 = \sum_i X_i^2/2$. Then for every integer $u \geq 0$:
\ifconf \begin{align*}
        \sum_{t=-u}^u \Pr \left[ Y \in [-\alpha\sqrt{B} + t \cdot q, \alpha\sqrt{B} + t \cdot q]  \right] \\ \,\,\,\,\,\, \leq \left( 2u + 1 \right) \cdot \left( 1.12 + \sqrt{\frac{2}{\pi}} \right) \cdot \frac{ \alpha \sqrt{B}}{\sigma}
    \end{align*}
\else   \begin{align*}
        \sum_{t=-u}^u \Pr \left[ Y \in [-\alpha\sqrt{B} + t \cdot q, \alpha\sqrt{B} + t \cdot q]  \right] \leq \left( 2u + 1 \right) \cdot \left( 1.12 + \sqrt{\frac{2}{\pi}} \right) \cdot \frac{ \alpha \sqrt{B}}{\sigma}
    \end{align*}
\fi
\ifconf \end{restatable}
\else \end{claim}
\fi

\ifconf
\else
\begin{proof}
For every integer $t$, by Theorem \ref{thm:berry-esseen-Y}:
\begin{align*}
    \Pr \left[ (Y - t \cdot q) \in [-\alpha\sqrt{B}, \alpha\sqrt{B}]  \right]
    & \leq \Pr \left[ \left( \mathcal{N}(0,\sigma^2) - t \cdot q \right) \in  [-\alpha\sqrt{B}, \alpha\sqrt{B}]  \right] +  \frac{1.12 \cdot  \alpha\sqrt{B}}{\sigma} \\
    & \leq \Pr \left[ \mathcal{N}(0,\sigma^2) \in  [-\alpha\sqrt{B}, \alpha\sqrt{B}]  \right] +  \frac{1.12 \cdot  \alpha\sqrt{B}}{\sigma} \\
    &\leq \left( 1.12 + \sqrt{\frac{2}{\pi}} \right) \frac{  \alpha\sqrt{B}}{\sigma}.
\end{align*}
The claim follows by taking a union bound over the possible values of $t$.
\end{proof}
\fi

\ifconf \begin{restatable}{claim}{boundfar}
\else \begin{claim}
\fi
\label{claim:bound-for-far}
    Suppose that $\alpha \geq 1$, the field size is $q \geq 2\alpha \sqrt{B}$, and that $\max_i |X_i| \leq 2\alpha\sqrt{B}$. Take $\sigma^2 = \sum_i X_i^2/2$. Then for every integer $t \geq (\sigma + \alpha \sqrt{B}) / q$ it holds that:
    \ifconf
    \begin{align*}
        \Pr &\left[ \left| Y \right| \geq  -\alpha\sqrt{B} + t \cdot q \right] \\
        &\leq \sqrt{\frac{2}{\pi}} \exp \left(-\frac{\left( -\alpha\sqrt{B} + t \cdot q \right)^2}{2\sigma^2} \right) + \frac{1.12 \alpha \sqrt{B}}{\sigma}
      \end{align*}
      \else
          \begin{align*}
        \Pr \left[ \left| Y \right| \geq  -\alpha\sqrt{B} + t \cdot q \right] \leq \sqrt{\frac{2}{\pi}} \exp \left(-\frac{\left( -\alpha\sqrt{B} + t \cdot q \right)^2}{2\sigma^2} \right) + \frac{1.12 \alpha \sqrt{B}}{\sigma}
      \end{align*}
      \fi
\ifconf \end{restatable}
\else \end{claim}
\fi
\ifconf
\else
\begin{proof}

We use a union bound over the probability that $Y$ is too large and the probability that it is too small. To bound the former, Theorem \ref{thm:berry-esseen-Y}, implies:
\begin{align*}
    \Pr \left[ Y \geq -\alpha\sqrt{B} + t \cdot q \right] & \leq \Pr \left[  \mathcal{N}(0,\sigma^2)  \geq -\alpha\sqrt{B} + t \cdot q \right] + \frac{0.56 \alpha \sqrt{B}}{\sigma} \\
    & \leq \frac{\sigma}{ \sqrt{2 \pi} \cdot (-\alpha\sqrt{B} + t \cdot q) } \cdot \exp \left(-\frac{\left( -\alpha\sqrt{B} + t \cdot q \right)^2}{2\sigma^2} \right) + \frac{0.56 \alpha \sqrt{B}}{\sigma} \\
    & \leq \frac{1}{\sqrt{2\pi}} \exp \left(-\frac{\left( -\alpha\sqrt{B} + t \cdot q \right)^2}{2\sigma^2} \right) + \frac{0.56 \alpha \sqrt{B}}{\sigma},
\end{align*}
where the last inequality holds because by the conditions of the claim $t \geq (\sigma + \alpha \sqrt{B}) / q$. Thus we have that $-\alpha \sqrt{B} + t \cdot q \geq \sigma$. The symmetric case (bounding the probability that $Y$ is too small) follows similarly.
\end{proof}
\fi
As described above, putting these claims together, we can bound both the probability that $Y \pmod{q}$ has small magnitude but $|Y|$ isn't too large, and the remaining case, where $|Y|$ is large. Towards this, for a parameter $\delta > 0$ to be specified below, we set:
\begin{align*}
u = \left \lfloor \frac{(\sigma \sqrt{2\ln(1/\delta)}) + (\alpha \sqrt{B})}{q}  \right \rfloor.
\end{align*}
\ifconf
By a union bound over Claims \ref{claim:bound-for-close} and \ref{claim:bound-for-far}, an easy computation (see the full verison \cite{RothblumOCT23}) establishes that
\begin{restatable}{claim}{claimunionbound}
Under the notation above,
\begin{align*}
    & \Pr[Y \in [-\alpha \sqrt{B}, \alpha \sqrt{B}] \pmod{q}] \\
        & \leq \left( \alpha \sqrt{B} \cdot \left(1.12 + \sqrt{\frac{2}{\pi}} \right) \cdot \left(  \frac{2 \sqrt{ 2\ln(1/\delta)}  + 0.04 \sqrt{q}}{q} \right) + \delta \right).
        \end{align*}    
\end{restatable}

\else
By a union bound over Claims \ref{claim:bound-for-close} and \ref{claim:bound-for-far}  (whose conditions all hold):
\begin{align*}
    & \Pr[Y \in [-\alpha \sqrt{B}, \alpha \sqrt{B}] \pmod{q}] \\
    & \leq \sum_{t=-(u-1)}^{u-1} \Pr \left[ Y \in [-\alpha\sqrt{B} + t \cdot q, \alpha\sqrt{B} + t \cdot q] \right] + \Pr \left[ \left| Y \right| \geq  -\alpha\sqrt{B} + u \cdot q \right] \\
    & \leq \left( 2u \cdot 1.12 + (2u -1) \cdot \sqrt{\frac{2}{\pi}} \right) \cdot \frac{ \alpha \sqrt{B}}{\sigma} + \sqrt{\frac{2}{\pi}} \cdot \exp \left(-\frac{\left( -\alpha\sqrt{B} + u \cdot q \right)^2}{2\sigma^2} \right) \\
    & \leq \left( 2u \cdot \left(1.12 + \sqrt{\frac{2}{\pi}} \right) \right) \cdot \frac{ \alpha \sqrt{B}}{\sigma} + \sqrt{\frac{2}{\pi}} \cdot \exp \left(-\frac{\left( -\alpha\sqrt{B} + u \cdot q \right)^2}{2\sigma^2} \right) \\
    &\leq \left( \alpha \sqrt{B} \cdot \left(1.12 + \sqrt{\frac{2}{\pi}} \right) \cdot \left( \frac{2 \sqrt{2 \ln(1/\delta)}}{q} +  \frac{2 \alpha \sqrt{B}}{q \cdot \sigma} \right) \right) + \delta  \\
    &\leq \left( \alpha \sqrt{B} \cdot \left(1.12 + \sqrt{\frac{2}{\pi}} \right) \cdot  \left( \frac{2 \sqrt{2 \ln(1/\delta)}}{q} +  \frac{0.002}{\sigma} \right) + \delta \right) \\
    & \leq \left( \alpha \sqrt{B} \cdot \left(1.12 + \sqrt{\frac{2}{\pi}} \right) \cdot \left(  \frac{2 \sqrt{ 2\ln(1/\delta)}  + 0.004 \sqrt{q}}{q} \right) + \delta \right),
\end{align*}
where above we used the fact that $\sigma > \sqrt{q/2}$ (which holds if there is wraparound) and also assumed that $q \geq 1000 \alpha\sqrt{B}$ (this latter condition will be guaranteed by the parameters set below).
\fi

Taking $\delta=0.005$ and working through the aritmetic, 
we have that:
\ifconf
\begin{align*}
    \Pr&[Y \in [-\alpha \sqrt{B}, \alpha \sqrt{B}] \pmod{q}] \\&\leq
    \left( 2.1 \cdot \left(1.12 + \sqrt{\frac{2}{\pi}} \right) \right) \frac{\alpha \sqrt{B}}{\sqrt{q}}  + e^{-3}.
\end{align*}
\else
\begin{align*}
    \Pr[Y \in [-\alpha \sqrt{B}, \alpha \sqrt{B}] \pmod{q}] \leq
    \frac{13\alpha\sqrt{B}}{q} + \frac{\left(1.12 + \sqrt{\frac{2}{\pi}} \right) \cdot 0.004 \alpha\sqrt{B}}{\sqrt{q}}  + 0.005.
\end{align*}
\fi
\ifconf This probability can be bounded by $1/2$ for $q \geq \max \{  81 \alpha^2 B, 100\}$.\end{proof}
\else
Bounding the first summand by $0.005$ and the second summand by $0.49$, the total probability is bounded by $1/2$ (as claimed in the lemma statement). These bounds hold so long as
\begin{align*}
    q \geq \max \{ 2600 \alpha \sqrt{B}, \frac{\alpha^2 B}{4000}\}.
\end{align*}
\end{proof}
\fi
In each repetition $k$, the prover can satisfy the conditions $(i)$ $g_k \cdot S_k=0$, and $(ii)$  the secret-shared values $\{g_k, v_{k,j}, u_{k,j}\}$ are all shares of bits, if and only if either it holds that $Y_k \in [-\alpha \sqrt{B}, \alpha \sqrt{B}] \pmod{q}$, or the prover sets $g_k=0$. Over all $r$ repetitions, the prover can get the verifiers to accept while satisfying conditions $(i)$ and $(ii)$ above, if and only if in at least $\tau \cdot r$ of the repetitions, $Y_k \in [-\alpha \sqrt{B}, \alpha \sqrt{B}] \pmod{q}$.

Completeness and soundness over the $r$ repetitions are thus bounded by claimed Binomial terms (see Equations \eqref{eq:wraparound-errc} and \eqref{eq:wraparound-errs}). The binomial tail terms can be further bounded by taking a Chernoff Bound. For completeness, in each repetition the probability of success is at least $1-\eta$. The probability that in $r$ repetitions we have fewer than $\tau \cdot r$ successes is bounded by:
\begin{align}
    \exp \left( -\KL{\tau}{1-\eta} \cdot r \right) \leq \exp\left(-2\left(1-\eta - \tau \right)^2 \cdot r \right),
\end{align}
where $\KL{p}{q}$ is the KL divergence between the Bernoulli distribution with mean $p$ and the distribution with mean $q$.
For soundness, the probability of success in each repetition is at most $1/2$. The probability that in $r$ repetitions we have at least $\tau \cdot r$ successes is bounded by:
\begin{align}
    \exp \left( -\KL{\tau}{1/2} \cdot r \right) \leq \exp\left(-2\left(\tau - \frac{1}{2}\right)^2 \cdot r \right).
\end{align}

\begin{remark}
We note that while the protocol is described as sending the vectors $Z_1,\ldots,Z_k$, standard techniques~\cite{SchmidtSS95,AlonS00} of using limited independence ($O(\log 1/\rho)$-wise independence suffices for the concentration bounds) can be used to reduce the communication to $O(r\log d \log 1/\rho)$ bits.
\end{remark}
\begin{remark}
The analytic Chernoff bound expressions here can be rather loose for particular values of parameters (especially when $\eta$ is close to zero). In practice, one can use better analytic estimates, or numerical estimates for concentration on binomial random variables 
We directly state the bounds in terms of the CDF of the Binomial distribution in Equations \eqref{eq:wraparound-errc} and \eqref{eq:wraparound-errs}.
\end{remark}

\begin{remark}
  \label{remark:wraparound-efficiency}
  Suppose that for some $L,H$ satisfying $-\alpha\sqrt{B} \leq L < 0 < H \leq \alpha \sqrt{B}$, we test for $Y \in [L,H]$ instead of $Y \in [-\alpha \sqrt{B}, \alpha \sqrt{B}]$. Then the soundness argument continues to hold as $Y \in [L,H] \Rightarrow Y \in [-\alpha \sqrt{B}, \alpha\sqrt{B}]$.
  Moreover, the completeness argument now holds with $\tau' = 2\exp(-\hat{\alpha}^2)$ where $\hat{\alpha} = \min(\frac{-L}{\sqrt{B}}, \frac{H}{\sqrt{B}})$. The protocol's practical efficiency can be improved by a careful choice of $L,H$, see
  ~\cref{remark:inequality-efficiency}.
\end{remark}

\medskip \noindent \textbf{Honest-verifier zero-knowledge.}
To prove zero-knowledge, we need to show that for
a single verifier $V_j$, there exists a simulator $\Sim$ that does not know the private input $X$, but receives some share $X^{(j)}$ and generates a view that is statistically closely distributed to the view of the verifier $V_j$ in a real interaction with the prover and the remaining verifier $V_{1-j}$. 
We sketch the construction of such a simulator. Let $\set{X_1^{(j)},\ldots X_d^{(j)}}$ be the shares, given to $V_j$ as input.


\medskip \noindent \textbf{The Simulation:}
The simulator $\Sim$ starts by setting $\hat{X} = 0^d$ to be the (fake) input for the simulation and setting (fictitious) shares for $V_{1-j}$ accordingly. Specifically, $\Sim$  sets the ``shares'' for the second verifier $V_{1-j}$ to be $\hat{X}_i^{(1-j)}= -{X}_i^{(j)}$ and emulates an interaction between the two verifiers as follows.
    \begin{enumerate}

    \item The simulator $\Sim$ repeats the following $r$ times in parallel, where in the $k$-th repetition:



    It simulates the verifier $V_{1-j}$ in choosing a random string $Z_k \sim D_Z$ and receives from $V_j$ its messages to the prover to determine $Z_k$.

    $\Sim$ simulates the prover, setting $Y_k = \sum_{i=1}^d Z_{k,i} \hat{X}_i$ to be $0$. If $1\le k\le \tau\cdot r$ the simulator sets  $g_k=1$, and otherwise sets $g_k = 0$. 
    
    The simulator then
    sets the bits $\{v_{k,j}\}$ to be the Binary representation of the value $2\alpha \sqrt{B}$, and sets the bits $\{u_{k,j}\}_{j \in [b]}$ to all be $0$.
    The simulator $\Sim$ simulates the view of $\Vc_j$ in the appropriate $k$ executions of the protocol in Figure~\ref{fig:inequality-mod-q-protocol}, with $\alpha_i = Z_{k,i}$.

    \item
    The simulator selects shares for $\{g_k\}_{k \in [r]}$ and sends $\Vc_j$ its shares.

    \item For each $k \in [r]$, the simulator computes $\Vc_j$'s share of $S_k$ by taking the appropriate linear combination of existing shares:.
    \begin{align*}
     (s_k)^{(j)} = \left(\sum_{i=1}^d Z_{k,i} (X_i)^{(j)} \right)  + \alpha\sqrt{B}  - \left( \sum_{j \in [b]} 2^j \cdot (v_{k,j})^{(j)} \right) \pmod{q}
    \end{align*}

    \item The simulator $\Sim$ emulates $\Vc_{1-j}$ in its interaction with $\Vc_{j}$ verifying the linear equality $\sum_{k \in [r]} g_k = \tau \cdot r \pmod{q}$ (see Example \ref{example:linearEquality}).

    \end{enumerate}
    We first note that all choices of $Z_k$ for all $r$ repetitions are identically distributed in the simulation as in a real execution of the protocol. Condition on the event that for these choices, in the real execution of the protocol, it holds that  there are at least $\tau r$ values of $k$, for which $Y_k \in [-\alpha \sqrt{B}, \alpha\sqrt{B}]$. This means that the prover does not abort. Under the aforementioned conditioning, the view of $\Vc_j$ in the real execution is distributed identically to the view generated by the simulator. This is true since all that $\Vc_j$ ever sees are random shares and the views it sees in the simulation of the sub-protocol, which are perfectly simulated (since they are on accepting inputs). Finally, since in the real world, the probability that the prover aborts is at most $\errc$, we conclude that even without the conditioning, the two views are of statistical distance $\errc$.
\end{proof}

\subsection{Verifying Quadratic Constraints}
\label{subsec:quadratic}

In this section we recall a protocol from the work of Boneh {\em et al.} \cite{BonehBCGI19} for verifying a conjunction of low-degree constraints over secret-shared values. We focus on quadratic constraints. Let $\{X'_i\}_{i \in [n]}$ be a collection of secret-shared values. A quadratic constraint $C_k$ is specified by public coefficients $\{c_{k,i,j} \in \GF[q] \}_{i,j \in \{0, \ldots, n]}$ and a target value $a_k \in \GF[q]$, and the claim (to be verified) is that:
\begin{align*}
\sum_{i,j \in [0,n]} c_{k,i,j} \cdot X'_{i} \cdot X'_{j} = a_k,
\end{align*}
where we use the convention that $X'_0=1$ (this allows us to include linear terms in the constraint). 

Given a collection of $m$ quadratic constraints as above, 
there is a 2-message protocol for verifying that all the constraints hold, where the communication complexity scales with the {\em square-root} of the number of variables. More generally, the communication complexity can be reduced to (roughly) $n^{1/r}$ by increasing the amount of interaction to $(2r-2)$ messages.

Verifying that a secret-shared value $v$ is a bit can be expressed as the quadratic constraint $v^2-v=0$. Thus, we use this protocol to verify that (multiple) secret shares produced in other protocols are composed of bits.

\begin{lemma}[Boneh {\em et al.} {\cite[Corollary 4.7 and Remark 4.8]{BonehBCGI19}}]
\label{lemma:quadratic}

   For a field of size $q$, an integer $n$ that is a perfect square, a collection of secret-shared values $\{X'_i\}_{i=1}^n$, and quadratic constraints $C = \{c_{k,i,j},a_k\}_{k \in \{1,\ldots,m\}, i,j \in \{0,\ldots,n\}}$, there is a protocol for verifying the constraints as follows: all parties get as input the field size $q$ and the constraints. The prover gets both shares of each value $X'_i$. Each verifier gets a single share of each secret-shared value, where:

    \begin{description}
        \item[Completeness:] If all constraints hold, the verifiers accept.

        \item[Soundness:] If at least one of the constraints is violated, then (no matter what strategy the prover follows) the probability that the verifiers accept is at most $\left( \frac{2\sqrt{n}}{q - \sqrt{n}} + \frac{m}{q} \right)$.

        \item[Zero-Knowledge]: The protocol  is perfect strong distributed honest-verifier zero-knowledge (see Definition \ref{def:dZKIP}). 
    \end{description}
        

    The protocol is public-coins, with 2 messages. The verifiers' message is $\log(q)$ bits and the  prover's message is $((4\sqrt{n}+1) \cdot \log(q))$ bits. Let $\nnz(C)$ be the total number of non-zero coordinates in the constraints $\{c_{k,i,j}\}$. The prover performs $O(\nnz(C) + n\log(n) + m\log(m))$ field operations. In the course of verification, the verifiers each perform $O(\nnz(C) + n\log(n) + m\log(m))$ field operations, and they exchange $(2\sqrt{n}+2) \cdot \log(q)$ bits with each other.

\end{lemma}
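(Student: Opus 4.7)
The plan is to reduce the problem in two stages: first collapse the $m$ quadratic constraints into a single one, and then apply a polynomial-based sum-check gadget whose communication scales like $\sqrt{n}$. For the first stage, the verifiers jointly sample a random $\lambda \in \GF[q]$ and both sides rewrite the claim as the single constraint
\[
   \sum_{k=1}^{m} \lambda^{k} \cdot \left( \sum_{i,j} c_{k,i,j} X'_{i} X'_{j} - a_{k} \right) \;=\; 0.
\]
If even one original constraint is violated, the left-hand side is a nonzero polynomial of degree at most $m$ in $\lambda$, so Schwartz--Zippel bounds the collapse error by $m/q$. This costs each verifier $O(\nnz(C) + m \log m)$ field operations (the $\log m$ coming from computing the powers of $\lambda$), and exchanges $\log q$ bits.

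For the second stage, write the single quadratic form as $\sum_{\ell=1}^{N} L_{\ell}(X') \cdot R_{\ell}(X') = a$, where $L_\ell, R_\ell$ are public linear combinations of $\{X'_0,\ldots,X'_n\}$ and $N \le n$ after standard bilinear decomposition. Arrange the index $\ell \in [N]$ as a $\sqrt{n} \times \sqrt{n}$ grid (padding if needed), and for each row $s \in [\sqrt{n}]$ let the prover build two degree-$\sqrt{n}$ polynomials $f_{s}, g_{s}$ over $\GF[q]$ that interpolate $L_{(s,t)}(X')$ and $R_{(s,t)}(X')$ at $\sqrt{n}$ fixed evaluation points $\omega_1,\ldots,\omega_{\sqrt{n}}$. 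Define the degree-$2\sqrt{n}$ polynomial $p(z) = \sum_{s} f_{s}(z) \cdot g_{s}(z)$; by construction $\sum_{t} p(\omega_t) = a$. The prover secret-shares the $2\sqrt{n}+1$ coefficients of $p$. This is the only ``large'' message, accounting for the $O(\sqrt{n} \log q)$ communication.

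The verifiers then do two local checks. First, they check the linear claim $\sum_{t} p(\omega_t) = a$ using secret-shared coefficients of $p$ (an instance of Example~\ref{example:linearEquality}). Second, they sample a fresh random challenge $\zeta \in \GF[q]$, locally evaluate their shares of $f_{s}(\zeta), g_{s}(\zeta), p(\zeta)$ (each $f_s(\zeta), g_s(\zeta)$ is a public linear combination of the shared $X'_i$'s plus a share sent by the prover, while $p(\zeta)$ is a linear combination of the shared coefficients of $p$), and check the quadratic equality $p(\zeta) = \sum_{s} f_{s}(\zeta) \cdot g_{s}(\zeta)$ by having each server open its share of $\left( p(\zeta) - \sum_{s} f_{s}(\zeta) g_{s}(\zeta) \right)$. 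Soundness of this step follows because if $p \neq \sum_s f_s g_s$ as polynomials, the degree is at most $2\sqrt{n}$, so the probability of agreement is at most $2\sqrt{n}/(q - \sqrt{n})$; a union bound with the collapse error gives the claimed $\tfrac{2\sqrt{n}}{q - \sqrt{n}} + \tfrac{m}{q}$.

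The main obstacle is achieving \emph{perfect} zero-knowledge rather than merely statistical: evaluating $f_s(\zeta), g_s(\zeta)$ at a random point could leak about $X'$ since these are linear in the shared inputs. The standard fix is for the prover to append one extra random ``masking'' evaluation to each $f_s, g_s$ (raising their degrees to $\sqrt{n}$ but with an unconstrained coefficient), and to include a uniform random constant in the shares of $p$ outside of $\{\omega_t\}$, so that any single verifier's view of $(f_s(\zeta), g_s(\zeta), p(\zeta))$ together with its share of $X'$ is distributed uniformly subject to the accepting check. The simulator for $V_j$ then samples $\zeta$, picks uniform ``evaluation'' shares for $f_s(\zeta), g_s(\zeta), p(\zeta)$ subject to satisfying the quadratic identity, and applies the simulator of Example~\ref{example:linearEquality} for the linear check. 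Correctness of the simulation is immediate from the uniformity of the masks; combined with the soundness and communication counts above, this yields all three properties claimed.
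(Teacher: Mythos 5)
The paper does not actually prove this lemma: it is imported wholesale from Boneh et al.\ \cite{BonehBCGI19} (Corollary 4.7 and Remark 4.8), and the text following the lemma only sketches the protocol to justify the runtime counts (random linear combination of the $m$ constraints via powers of a random field element, then reduction to an inner product $\langle X', A X'\rangle=b$, which is verified with a $\sqrt{n}$-folding polynomial gadget). Your reconstruction follows the same route, and in more detail than the paper itself, so you are essentially in agreement with what the paper relies on.

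One concrete slip worth flagging: you write that the servers ``open [their] share of $\bigl(p(\zeta) - \sum_{s} f_{s}(\zeta) g_{s}(\zeta)\bigr)$,'' but a server cannot compute a share of $\sum_s f_s(\zeta) g_s(\zeta)$ locally, since this is a product of two secret-shared quantities with cross terms between the two servers' shares. In the actual protocol, each server reveals its shares of the scalars $f_s(\zeta)$, $g_s(\zeta)$, and $p(\zeta)$ (this is what accounts for the $(2\sqrt{n}+2)\log q$ bits of server-to-server communication quoted in the lemma), after which both parties reconstruct the openings in the clear and check the quadratic identity; the masking randomness is precisely what makes these in-the-clear openings safe. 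Your simulator paragraph, which ``picks uniform evaluation shares for $f_s(\zeta), g_s(\zeta), p(\zeta)$ subject to satisfying the quadratic identity,'' is consistent with this corrected check and with the $(2\sqrt{n}+2)$ count, so this reads as a wording slip rather than a conceptual gap. You should also say explicitly that $\zeta$ is drawn from $\GF[q]$ minus the $\sqrt{n}$ interpolation points $\{\omega_t\}$ (both for soundness, which is where the $q-\sqrt{n}$ denominator comes from, and for zero-knowledge, since querying at an $\omega_t$ would leak a real linear combination of $X'$).
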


\ifconf
\else
The completeness, soundness and dZK properties follow the proof in \cite{BonehBCGI19}, as do the communication complexities. We briefly recall the protocol to explain the runtimes of the prover and the verifier. The $k$ quadratic constraints are combined into one as follows: the verifiers choose a random field element $r$, and the protocol proceeds to prove the single constraint:
\begin{align*}
\sum_{k \in [1,m]} r^{k-1} \cdot \left( \left( \sum_{i,j} c_{k,i,j}\cdot X'_i \cdot X'_j \right) - a_k \right) = 0.
\end{align*}
The (public) coefficients of the resulting constraint can be arranged in an $n \times n$ matrix $A$ with $\nnz(C)$ non-zero coefficients. Viewing $X'$ as an $n \times 1$ vector, we need to show that:
\begin{align*}
    {X'}^T \cdot A \cdot X' = b,
\end{align*}
where $b$ is also a public element in the field (known to the prover and the verifiers). The verifiers can use their secret shares of $X'$ to compute secret shares of $Z = A \cdot X' \in \GF{q}^n$ (a linear function). The prover then needs to show that the inner product $\langle X', Z \rangle = b$, and this latter task is accomplished using a one-message protocol where the main steps involve polynomial interpolation, requiring time $O(n \log n)$ from the prover and the verifiers. Other than this protocol, there is also the question of computing the secret shares of $Z = A \cdot X'$. This can be done in time that is linear in the number of non-zero coordinates in $A$.
\fi

\paragraph{Soundness amplification.} The soundness error can be reduced by parallel repetition \cite{BabaiM88,Goldreich2001}:

\begin{corollary}[Parallel repetition of Lemma \ref{lemma:quadratic}]
\label{corollary:quadratic-repeated}

   For the same inputs considered in the protocol of lemma \ref{lemma:quadratic}, repeating that protocol in parallel $t$ times, where the verifiers accept iff all repetitions accept, gives a 2-message public-coins protocol with perfect completeness and zero-knowledge, where:
   \begin{enumerate}
       \item The soundness error is reduced to $\left( \frac{2\sqrt{n}}{q - \sqrt{n}} + \frac{m}{q} \right)^t$.

       \item The communication and runtime complexities are $t$ times larger than those in Lemma \ref{lemma:quadratic}. 
   \end{enumerate}
\end{corollary}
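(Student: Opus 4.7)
The plan is to derive each of the three properties directly from the base protocol of Lemma \ref{lemma:quadratic}. Perfect completeness and the complexity bounds are immediate: each of the $t$ independent executions accepts with probability $1$ when all constraints hold, and each incurs the communication and runtime cost stated in Lemma \ref{lemma:quadratic}, so the overall costs scale by a factor of $t$. Zero-knowledge follows by running the base simulator $\Sim$ independently $t$ times on the verifier's input share with fresh random coins; because the base protocol is public-coin and the $t$ verifier challenges are drawn independently in the real execution, the verifier's joint view factors as a product of $t$ independent single-execution views, and the perfect dZK guarantee lifts directly.

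The substantive step is amplifying soundness. Fix an arbitrary deterministic cheating prover $P^\ast$ for the repeated protocol. Since the only verifier-to-prover message in each repetition is a single uniform challenge $r^{(k)} \in \GF[q]$, $P^\ast$ can be viewed as a function mapping a challenge vector $(r^{(1)}, \ldots, r^{(t)})$ to a response vector $(\pi^{(1)}, \ldots, \pi^{(t)})$, and the $k$-th verification predicate $V_k$ depends only on $(r^{(k)}, \pi^{(k)})$ together with the fixed inputs and shares. I would compare $P^\ast$ to the decoupled strategy $\widehat{P}$ that, in coordinate $k$, plays the response $\widehat{\pi}^{(k)}(r^{(k)})$ maximizing the single-shot acceptance probability conditioned on $r^{(k)}$ alone. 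Pointwise in the challenge vector, the indicator that the $k$-th verification accepts under $P^\ast$ is at most the corresponding indicator under $\widehat{P}$; taking the product over $k$ and then the expectation, and using independence of the $r^{(k)}$ together with the fact that $\widehat{\pi}^{(k)}$ depends only on $r^{(k)}$, the acceptance probability of $P^\ast$ is at most $\prod_{k=1}^{t} \E_{r^{(k)}}[\mathbf{1}[V_k \text{ accepts}]] \leq \epsilon^t$, where $\epsilon = \frac{2\sqrt{n}}{q - \sqrt{n}} + \frac{m}{q}$ is the base soundness error from Lemma \ref{lemma:quadratic}.

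The step requiring the most care is this pointwise domination: it relies crucially on the fact that each verification predicate $V_k$ reads only the $k$-th challenge and response (and fixed inputs), so that replacing the coupled response $\pi^{(k)}$ by the single-shot optimum for $r^{(k)}$ can never decrease the $k$-th acceptance indicator. Once this is established, independence of the challenges across repetitions immediately yields the claimed $\epsilon^t$ soundness bound.
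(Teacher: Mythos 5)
Your proof is correct, and it supplies the standard argument that the paper only gestures at by citing \cite{BabaiM88,Goldreich2001}. Perfect completeness, the $t$-factor complexity blowup, and zero-knowledge by $t$ independent runs of the base simulator are all handled correctly; for the zero-knowledge claim, the crucial point you rely on (that the honest prover's response and the other verifier's message in repetition $k$ depend only on $r^{(k)}$ and the fixed data, so the per-repetition view triples are conditionally i.i.d.\ given the shares) is indeed what makes the product simulation go through.

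On soundness, the pointwise-domination argument is exactly the right move and is valid here because the base protocol is a two-message public-coin \emph{proof}: the verifiers' decision in a single repetition is a deterministic predicate of the challenge $r^{(k)}$, the prover response $\pi^{(k)}$, and the fixed shares, so for each fixed $r^{(k)}$ the single-shot acceptance indicator is either identically $0$ or achieved by some response, and the single-shot optimal response $\widehat{\pi}^{(k)}(r^{(k)})$ dominates the coupled prover's response pointwise over the full challenge vector. Combined with independence of the $r^{(k)}$ across repetitions, the joint acceptance probability factors and you get $\epsilon^t$. One small remark worth making explicit (since you elide it): the statement ``the verifiers accept iff all repetitions accept'' is what makes the product $\prod_k \mathbf{1}[V_k]$ the correct random variable to bound, and the inputs (hence the violated constraints) are identical in all repetitions, so the base soundness bound applies uniformly in $k$. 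The paper itself gives no proof and defers to the literature; your argument is the self-contained version of what those references provide, with no substantive gap.
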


\subsection{Putting it Together: The Norm-Bound Protocol}
\label{subsec:putting-together}

Our final norm-bound protocol is described in Figure \ref{fig:L2-protocol}.
\ifconf We defer the proof to the full version \cite{RothblumOCT23}.
\else
\fi

\begin{figure*}[t]
  \ifconf\thisfloatpagestyle{empty}\else\fi
  \begin{boxedminipage}{\textwidth}
    \small \medskip \noindent

    \underline{\textbf{$L_2$-Bound Protocol}}\\

    \textbf{Common inputs:} Dimension $d \in \Nt$, claimed bound $B \in \Nt$, field size $q \in \Nt$, errors $\errc,\errs \in [0,1]$.  \\

    \textbf{Secret-shared inputs:} A vector $X \in \GF[q]^d$, where each $X_i$ is secret-shared as $[X_i] = (X_i^{(0)}, X_{i}^{(1)})$.

    The client (prover) knows the secret-shared values $X$ and the shares $[X]$.

    The servers (verifiers) each know their own shares (respectively $X^{(0)}$ and $X^{(1)}$). \\

     \textbf{Claim (to be verified):} $\sum_{i=1}^d X_i^2 \leq B$ (where summation is over the integers). \\

    \textbf{The Protocol:}

    \begin{enumerate}

    \item Run the wraparound protocol of Section \ref{subsec:wraparound-protocol}, setting the parameters $r, \eta, \tau$ as in Theorem \ref{thm:norm-bound}.

    The two messages in this subprotocol (the verifier sends the first message) are sent in messages 1 and 2 of the $L_2$-bound protocol. This results in secret shares of values $\{s_k\}_{k \in [r]}$ and alleged shares of bits $\{g_k, u_{k,j}, v_{k,j}\}$.

    \item Run the protocol of Section \ref{subsec:inequality} to verify that $\sum_i X_i^2 \in [0,B] \pmod{q}$.

    The prover's message in this sub-protocol is sent in message 2 of the $L_2$-bound protocol. Results in alleged shares of bits $\{v'_{j'},u'_{j'}\}$.

    \item \label{step:quadratic-check} Use the quadratic constraints protocol of Corollary \ref{corollary:quadratic-repeated}, setting the number of repetitions $t$ as in Theorem \ref{thm:norm-bound}, to verify the following quadratic constraints:
    \begin{enumerate}
        \item $\sum_{i=1}^d X_i^2 = \sum_{j'} v'_{j'} \cdot 2^{j'} \pmod{q}$.

        \item $\forall k \in [r], g_k \cdot s_k = 0$.

        \item the secret-shared values $\{v'_{j'},u'_{j'}\}$ and $\{g_k, u_{k,j}, v_{k,j}\}$ are all bits (i.e. in $\bitset$).

    \end{enumerate}

    The two messages of this sub-protocol (the verifier sends the first message) are sent in messages 3 and 4 of the $L_2$-bound protocol.

    \end{enumerate}

    If the verifiers in any of the sub-protocols executed above reject, then the verifiers in the $L_2$-bound protocol reject immediately. Otherwise, they accept.

  \end{boxedminipage}

  \caption{$L_2$-Bound Protocol}
  \label{fig:L2-protocol}
\end{figure*}

\ifconf
\begin{restatable}{theorem}{thmputtogether}
\label{thm:norm-bound}

    Fix a bound $B$, parameters $r,t \in \Nt, \eta \in [0,1], \tau \in (1/2,1]$ s.t. $\tau \cdot r$ is an integer. Let the field size be $q \geq \max \{ 81 B \cdot \ln(2/\eta), 1000, 3r \}$. The protocol of Figure \ref{fig:L2-protocol} has the following properties:

    \begin{enumerate}
        \item \textbf{Completeness:} If the claim is true and the prover follows the protocol, the verifiers accept with prob. $\geq 1-\errc$, where
        \begin{align*} 
            \errc = 1 - \Bin((\tau\cdot r); r , 1-\eta),
        \end{align*}
        where $\Bin(\ell;r,p)$ denotes the probability that the Binomial distribution with parameters $r$ and $p$ has outcome (number of successes) at least $\ell$. 
        
        Thus, for $\tau \in (1/2,1-\eta)$, $\errc \leq \exp\left(-2\left(1-\eta - \tau \right)^2 \cdot r \right)$. For $\tau \in [1-\eta,1]$, $\errc \leq r \cdot \eta$.

        \item \textbf{Soundness:} If $\sum X_i^2 > B$ (over the integers), the probability that the verifiers accept is at most:
        \begin{align*}
            & \exp\left(-2\left(\tau - \frac{1}{2}\right)^2 \cdot r \right)
            + \\ & \left( \frac{ 2\sqrt{d + (\log(q)\cdot (r+2)/2)}}{q-2\sqrt{d + (\log(q)\cdot (r+2)/2)}} + \frac{\log(q)\cdot (r+2)}{2q} \right)^t.
        \end{align*}

        \item \textbf{Zero-Knowledge}: The protocol  satisfies statistical zero-knowledge: the view of each verifier can be simulated up to statistical distance $\errc$ (see above).
    \end{enumerate}
    The protocol is public-coins, with 4 messages. The message lengths (in bits) are:
    \begin{enumerate}
        \item the first messsage, sent by the verifier, is of length $2dr$.

        \item the second message, sent by the prover, is of length at most $(\frac{r}{2}+2) \cdot \log^2(q)$.

        \item The third message, sent by the verifier, is of length $t\log q$.

        \item The fourth message, sent by the prover, is of length $\left(t \cdot \left( 4\sqrt{d + (\log(q)\cdot (r+2)/2)} + 1 \right) \cdot \log(q)\right)$.

    \end{enumerate}

\end{restatable}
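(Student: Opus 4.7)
The plan is to instantiate the dZK composition lemma (Lemma \ref{lemma:composition}) with the three sub-protocols assembled in Figure \ref{fig:L2-protocol}: the wraparound protocol (Lemma \ref{lemma:wraparound-protocol}), the range-check protocol of Figure \ref{fig:inequality-mod-q-protocol} applied to the quantity $\sum X_i^2 \pmod q$ (Lemma \ref{lemma:range-protocol}), and the $t$-fold repeated quadratic-constraints protocol (Corollary \ref{corollary:quadratic-repeated}) used to simultaneously certify that (a) all alleged bit-shares produced by the first two sub-protocols are indeed bits, (b) $\sum X_i^2 = \sum_{j'} v'_{j'} 2^{j'} \pmod q$, and (c) $g_k \cdot s_k = 0$ for every $k$. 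Each of these is a single quadratic constraint in the shared variables, so the reduction to the quadratic protocol is mechanical.

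For completeness, both the range-check and the quadratic sub-protocols are perfectly complete -- the field-size hypothesis forces $q > B$, so reduction modulo $q$ is a no-op on the honest sum of squares -- and all the completeness error is inherited from the wraparound protocol, yielding the claimed $\errc$ bound and its two Chernoff/union-bound estimates directly from Lemma \ref{lemma:wraparound-protocol}. For soundness, suppose $\sum X_i^2 > B$ over the integers. If $\sum X_i^2 \pmod q$ already lies outside $[0, B]$, then by Lemma \ref{lemma:range-protocol} the output of the range-check sub-protocol must violate at least one of the quadratic constraints in Step~\ref{step:quadratic-check}, so the quadratic protocol rejects except with the quadratic soundness probability. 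Otherwise $\sum X_i^2 \pmod q \in [0, B]$, which combined with the assumption forces wraparound ($\sum X_i^2 \geq q$), and Lemma \ref{lemma:wraparound-protocol} bounds the acceptance probability (conditioned on the bit-shares and equalities being valid) by $\exp(-2(\tau - 1/2)^2 r)$; any pathway that bypasses this must itself violate one of the quadratic constraints, which is again caught by Corollary \ref{corollary:quadratic-repeated}. A union bound over the two failure modes gives the stated soundness expression. Zero-knowledge then follows directly from Lemma \ref{lemma:composition}: the range-check and quadratic sub-protocols are perfectly dZK, so the only statistical-distance contribution is the wraparound protocol's $\errc$.

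The main technical work in the write-up is the parameter bookkeeping. First, one must check that the hypothesis $q \geq \max\{81 B \ln(2/\eta), 1000, 3r\}$ dominates each of the three conditions on $q$ demanded by Lemma \ref{lemma:wraparound-protocol} (setting $\alpha = \sqrt{\ln(2/\eta)}$ and using the $1000$ floor to absorb the $2600\alpha\sqrt B$ regime). Second, one must count the number of shared variables $n$ fed into the quadratic protocol -- $d$ entries of $X$, $r$ bits for $\{g_k\}$, roughly $\log(q) \cdot r$ bits for the wraparound range-check witnesses, and $O(\log q)$ additional bits for the outer range-check on $\sum X_i^2$ -- and verify it is bounded by $d + \log(q)(r+2)/2$, matching the expression that appears inside the soundness bound. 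Finally, the four message lengths follow by summing the message sizes specified in Lemmas \ref{lemma:wraparound-protocol} and \ref{lemma:range-protocol} and Corollary \ref{corollary:quadratic-repeated}, after piggybacking the range-check prover message onto the wraparound prover message (message 2) and placing the two quadratic-protocol messages in messages 3 and 4.
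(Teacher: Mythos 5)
Your proposal is correct and follows essentially the same plan as the paper's own proof: compose the three sub-protocols via Lemma~\ref{lemma:composition}, inherit completeness error entirely from the wraparound protocol, combine the two soundness error terms by a union bound, count the variables and constraints fed into the quadratic-constraint protocol, and sum message lengths. Your case split on whether $\sum X_i^2 \pmod q$ lies in $[0,B]$ is the right way to see the reduction (it is exactly what makes the wraparound promise problem applicable), even though the paper states the soundness step more tersely.

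One bookkeeping claim in your write-up does not hold as stated. You assert that the $1000$ floor in the hypothesis $q \geq \max\{81 B\ln(2/\eta),1000,3r\}$ absorbs the $2600\alpha\sqrt{B} = 2600\sqrt{B\ln(2/\eta)}$ requirement stated in Lemma~\ref{lemma:wraparound-protocol}. That is false whenever $0.15 \lesssim B\ln(2/\eta) \lesssim 1031$: in that range neither $1000$ nor $81B\ln(2/\eta)$ exceeds $2600\sqrt{B\ln(2/\eta)}$. The discrepancy actually originates in the paper itself — the abbreviated version of Proposition~\ref{prop:wraparound-soundness} concludes with the requirement $q \geq \max\{81\alpha^2 B, 100\}$ (which the theorem hypothesis does imply once $\alpha^2 = \ln(2/\eta)$), while the stated Lemma~\ref{lemma:wraparound-protocol} carries the $2600\sqrt{B\ln(2/\eta)}$ bound from a different constant-chasing. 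You should either cite the tighter soundness analysis directly, or note that the theorem's hypothesis on $q$ needs strengthening (or that $B\ln(2/\eta)$ is implicitly assumed large) to literally dominate the lemma's stated requirement. The remaining variable-count and message-length calculations are fine.
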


\else

\begin{theorem}
\label{thm:norm-bound}
    Fix a bound $B$, parameters $r,t \in \Nt, \eta \in [0,1], \tau \in (1/2,1]$ s.t. $\tau \cdot r$ is an integer. Let the field size be $q \geq \max \{  \frac{B \ln(2/\eta) }{4000}, 2600 \sqrt{B \ln(2/\eta)}, 1000dv , 3r \}$. The protocol of Figure \ref{fig:L2-protocol} has the following properties:

    \begin{enumerate}
        \item \textbf{Completeness:} If the claim is true and the prover follows the protocol, the verifiers accept with prob. $\geq 1-\errc$, where
        \begin{align*} 
            \errc = 1 - \Bin((\tau\cdot r); r , 1-\eta),
        \end{align*}
        where $\Bin(\ell;r,p)$ denotes the probability that the Binomial distribution with parameters $r$ and $p$ has outcome (number of successes) at least $\ell$. 
        
        Thus, for $\tau \in (1/2,1-\eta)$, $\errc \leq \exp\left(-2\left(1-\eta - \tau \right)^2 \cdot r \right)$. For $\tau \in [1-\eta,1]$, $\errc \leq r \cdot \eta$.

        \item \textbf{Soundness:} If $\sum X_i^2 > B$ (over the integers), the verifiers accept with probability at most:
        \begin{align*}
            \exp\left(-2\left(\tau - \frac{1}{2}\right)^2 \cdot r \right) + \left( \frac{ 2\sqrt{d + (\log(q)\cdot (r+2)/2)}}{q-2\sqrt{d + (\log(q)\cdot (r+2)/2)}} + \frac{\log(q)\cdot (r+2)}{2q} \right)^t.
        \end{align*}

        \item \textbf{Zero-Knowledge}: The protocol  satisfies statistical zero-knowledge: the view of each verifier can be simulated up to statistical distance $\errc$ (see above).
    \end{enumerate}
    The protocol is public-coins, with 4 messages. The message lengths (in bits) are:
    \begin{enumerate}
        \item the first messsage, sent by the verifier, is of length $(2 \cdot d \cdot r)$.

        \item the second message, sent by the prover, is of length at most $(\frac{r}{2}+2) \cdot \log^2(q)$.

        \item The third message, sent by the verifier, is of length $(t \cdot \log(q))$.

        \item The fourth message, sent by the prover, is of length $\left(t \cdot \left( 4\sqrt{d + (\log(q)\cdot (r+2)/2)} + 1 \right) \cdot \log(q)\right)$.

    \end{enumerate}

    The prover performs $\tilde{O}((d + \log(B \cdot \ln(2/\eta))) \cdot r)$ field operations. The verifiers each perform $\tilde{O}((d + \log(B \cdot \ln(2/\eta))) \cdot r)$ field operations, and exchange $O(\sqrt{d + (\log(B \cdot \ln(2/\eta)) \cdot r )} \log(q))$ bits between themselves.

\end{theorem}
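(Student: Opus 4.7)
The plan is to prove Theorem~\ref{thm:norm-bound} by composing the three sub-protocols and invoking their individual guarantees (Lemma~\ref{lemma:wraparound-protocol}, Lemma~\ref{lemma:range-protocol}, and Corollary~\ref{corollary:quadratic-repeated}), together with the composition lemma (Lemma~\ref{lemma:composition}). The field-size constraint $q \geq \max\{81 B\ln(2/\eta), 1000, 3r\}$ ensures each sub-protocol is within its required parameter range; in particular $q > B$ so there is no wraparound ``by accident'' in the range-check reasoning, and the wraparound-detection protocol's field-size hypothesis is met.

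\textbf{Completeness.} If $\sum_i X_i^2 \leq B$ over the integers then also $\sum_i X_i^2 \in [0,B] \pmod q$ (since $B < q$), so the range-check produces valid bit shares and its linear equality holds. The wraparound protocol aborts with probability at most $\errc = 1 - \Bin(\tau r; r, 1-\eta)$, which by Chernoff is bounded as claimed; conditioned on not aborting, the output conditions $g_k \cdot s_k = 0$ and bitness hold perfectly. The quadratic-constraint step of Corollary~\ref{corollary:quadratic-repeated} then accepts with probability~$1$, because all of the listed constraints hold. Hence the overall completeness error equals $\errc$.

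\textbf{Soundness.} Suppose $\sum_i X_i^2 > B$ over the integers. I will do a case split on whether wraparound is forced. First, by the soundness of Corollary~\ref{corollary:quadratic-repeated}, with all but probability $\bigl(\tfrac{2\sqrt n}{q-\sqrt n} + \tfrac{m}{q}\bigr)^t$ we may assume all quadratic equalities in Step~\ref{step:quadratic-check} hold, where $n = d + (\log(q)\cdot(r+2)/2)$ counts the total number of secret-shared variables ($d$ input coordinates plus the $g_k$'s and all bit shares) and $m$ is the (polynomial) number of constraints. Conditioned on this, bitness of every share is established, and constraint~(a) combined with the linear equality from the range-check forces $\sum_i X_i^2 \equiv V' \pmod q$ with $V' \in [0,B]$. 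In Case~A, where $B < \sum_i X_i^2 < q$, this modular equality is simply false: there is no wraparound, so $\sum_i X_i^2 \bmod q = \sum_i X_i^2 > B$, a contradiction, so acceptance is impossible. In Case~B, where $\sum_i X_i^2 \geq q$, the conditioned event also forces the output condition of the wraparound sub-protocol to hold ($g_k \cdot s_k = 0$ for all $k$ together with all bit conditions), and Lemma~\ref{lemma:wraparound-protocol} bounds the probability of this by $\exp(-2(\tau - 1/2)^2 r)$. Taking the union bound across the two sources of error yields exactly the stated soundness bound.

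\textbf{Zero-knowledge.} The three sub-protocols are composed sequentially, and on any YES instance ($\sum_i X_i^2 \leq B$) the input/output condition of each sub-protocol feeds the input condition of the next, so the hypothesis of Lemma~\ref{lemma:composition} holds. The wraparound protocol contributes statistical distance $\errc$, the range-check protocol is perfect zero-knowledge (its simulator is described in Lemma~\ref{lemma:range-protocol}), and the quadratic-check protocol is perfect zero-knowledge by Corollary~\ref{corollary:quadratic-repeated}. Applying the composition lemma twice yields a simulator whose output is $\errc$-close to the real view of either verifier.

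\textbf{Communication and round accounting.} The four-message structure arises by piggybacking: the verifier's random challenges for wraparound form Message~1; the prover's response plus the single range-check message form Message~2; the verifier's challenge for the quadratic protocol is Message~3; and the prover's closing message is Message~4. Plugging the bit-length of $Z_1,\dots,Z_r$ gives $2dr$; summing the bit shares output by the wraparound and range-check sub-protocols over $r$ repetitions plus the single range check on $\sum_i X_i^2$ bounds Message~2 by $(r/2+2)\log^2 q$; and Corollary~\ref{corollary:quadratic-repeated} with $n = d + (\log(q)\cdot(r+2)/2)$ and $t$ parallel repetitions yields the quoted lengths for Messages~3 and~4. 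The main obstacle in the write-up is the Case~A/Case~B bookkeeping that ties the modular range certificate together with the wraparound certificate to cover every integer value of $\sum_i X_i^2 > B$; once that is done, the rest is an exercise in plugging parameters into the cited sub-protocol guarantees.
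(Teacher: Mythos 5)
Your proposal is essentially the paper's own argument — compose the wraparound, range-check, and quadratic-constraint sub-protocols, feed the field-size bound $q\geq 81 B\ln(2/\eta)$ into the hypotheses of each, and invoke the dZK composition lemma for zero-knowledge — and it is substantially correct. Two points of comparison are worth noting. First, you spell out the Case~A/Case~B soundness split ($B<\sum X_i^2 < q$ handled by the range check having no wraparound, $\sum X_i^2\geq q$ handled by the wraparound protocol's promise-problem soundness) which the paper compresses into a single line ``Completeness and soundness follow directly from the sub-protocols''; making that case analysis explicit is genuinely the right content that the paper elides, and your union-bound reasoning (condition on all quadratic constraints holding, then observe Case~A gives probability zero and Case~B gives $\errs$) is the correct way to justify why the two error terms add rather than multiply. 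Second, you assert the variable count $n = d + \log(q)\cdot(r+2)/2$ without deriving it; the paper spends the bulk of its proof tallying the exact number of secret-shared bits and constraints produced by all $r$ wraparound repetitions plus the single range check, and then uses $q\geq 81 B\ln(2/\eta)$ to show $\lceil\log(4\sqrt{B\ln(2/\eta)}+2)\rceil\leq\log(q)/2$, which is what collapses the count to the quoted $\log(q)\cdot(r+2)/2$ term. You would need to carry out that bookkeeping to turn your proof sketch into a full proof; without it the claimed message lengths and the $n,m$ parameters plugged into Corollary~\ref{corollary:quadratic-repeated} are unjustified, even though the formulas you wrote are in fact the correct upper bounds.
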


\begin{proof}

Completeness and soundness follow directly from the sub-protocols.  For the quadratic constraints protocol that is run Step (\ref{step:quadratic-check}) there are:
\begin{itemize}
    \item The number of constraints is:
    \begin{align*}
        & 1 + r + 2\lceil \log(B + 1) \rceil + r \cdot(2\lceil \log(2\sqrt{B \ln(2/\eta)} + 1)\rceil + 1) \\
        &\leq r \cdot 2 \lceil \log(4\sqrt{B \ln(2/\eta)} + 2)\rceil + 4\lceil \log(\sqrt{B} + 1) \rceil + 1\\
        &\leq 2 \lceil \log(4\sqrt{B \ln(2/\eta)} + 2)\rceil \cdot \left( r + 2 \right)
    \end{align*}

    \item Similarly, the number of variables is:
    \begin{align*}
        & d + 2r + 2\lceil \log(B + 1) \rceil + r \cdot(2\lceil \log(2\sqrt{B \ln(2/\eta)} + 1)\rceil + 1) \\
        &\leq d + r \cdot 2 \lceil \log(4\sqrt{B \ln(2/\eta)} + 2)\rceil + 4\lceil \log(\sqrt{B} + 1)  \rceil \\
        &\leq d + \left( 2 \lceil \log(4\sqrt{B \ln(2/\eta)} + 2)\rceil \cdot \left( r + 2 \right) \right).
    \end{align*}
\end{itemize}
Combining the soundness error of the wraparound and quadratic constraints protocol, the total soundness error is bounded by:
\begin{align*}
\errs+ \left( \frac{ 2\sqrt{d + \left( 2 \lceil \log(4\sqrt{B \ln(2/\eta)} + 2)\rceil \cdot \left( r + 2 \right) \right)}}{q-\sqrt{d + \left( 2 \lceil \log(4\sqrt{B \ln(2/\eta)} + 2)\rceil \cdot \left( r + 2 \right) \right)}} + \frac{2 \lceil \log(4\sqrt{B \ln(2/\eta)} + 2)\rceil \cdot \left( r + 2 \right)}{q} \right)^t.
\end{align*}
Since $q \geq \max \{81B\cdot \ln(2/\eta), 1000, 3r\}$, we have that $\lceil \log(4\sqrt{B \ln(2/\eta)} + 2) \rceil \leq \log(\sqrt{q}/2) \leq \log(q)/2$. The soundness error is thus bounded by:
\begin{align*}
\errs+ \left( \frac{ 2\sqrt{d + (\log(q)\cdot (r+2)/2)}}{q-2\sqrt{d + (\log(q)\cdot (r+2)/2)}} + \frac{\log(q)\cdot (r+2)}{2q} \right)^t.
\end{align*}

The claimed communication complexity follows from the complexity of the different sub-protocols, and the above bounds on the number of variables and constraints in the protocol run in Step (\ref{step:quadratic-check}). In particular:
\begin{itemize}

\item The second message, sent by the prover, is of length at most:
\begin{align*}
\ifconf    & \left(  \left ( \left \lceil \log \left( \sqrt{2 B \cdot \ln(2 / \eta)} + 1 \right) \right\rceil + 2 \right) \cdot r  \right.\\&\,\,\,\,\left.+ \left(  2\lceil \log(B +1) \rceil  \right) \right) \cdot \log (q) \\
\else
    & \left(  \left ( \left \lceil \log \left( \sqrt{2 B \cdot \ln(2 / \eta)} + 1 \right) \right\rceil + 2 \right) \cdot r  + \left(  2\lceil \log(B +1) \rceil  \right) \right) \cdot \log (q) \\
\fi
&\leq \left(  \left( \log(q) \cdot r / 2\right)  + \left(  2 \log(q)  \right) \right) \cdot \log (q) \\
    &= \log^2(q) \cdot(\frac{r}{2}+2).
\end{align*}

\item The fourth message, sent by the prover, is of length at most:
\begin{align*}
 & t \cdot (4\sqrt{d + \left( 2 \lceil \log(4\sqrt{B \ln(2/\eta)} + 2)\rceil \cdot \left( r + 2 \right) \right)}+ 1 ) \cdot \log(q) \\
 &\leq t \cdot \left( 4\sqrt{d + (\log(q)\cdot (r+2)/2)} + 1 \right) \cdot \log(q).
\end{align*}
\end{itemize}

The zero-knowledge property follows directly from the zero-knowledge of all sub-protocols, using dZK composition, see Section~\ref{subsec:composition}.

 The claimed complexities follow by construction. We remark that the quadratic constraints, expressed as coefficient matrices as in the statement of Lemma \ref{lemma:quadratic}, have $O(d + (\log(B \cdot \ln(2/\eta)) \cdot r))$ non-zero coefficients.
\end{proof}
\fi

\ifconf
\else
\section{A Distributed Fiat-Shamir Transform for Removing Interaction}
\label{sec:fiat-shamir}

We use a distributed variant of the Fiat Shamir transform to obtain a non-interactive proof system. This distributed variant was proposed in~\cite{BonehBCGI19}, adapting the Fiat-Shamir transform~\cite{FiatS86}, a technique for eliminating interaction in standalone interactive protocols, to the setting of interactive proofs on distributed data.

\paragraph{Fiat-Shamir in the standalone setting.} The original scheme was shown to be sound in the random oracle model (ROM), when applied to constant-round public-coin protocols \cite{PointchevalS96}. The basic idea behind the transform is to replace each of the (standalone) verifier's random challenges with a challenge obtained by applying  cryptographic hash function $H$ to the preceding transcript (in the random oracle model $H$ is taken to be a truly random function). This eliminates the need for interaction between the prover and the verifier. In terms of zero-knowledge, if the original protocol satisfies honest-verifier zero knowledge, then the transformed protocol is zero-knowledge in the random oracle model. It has also been shown that concrete instantiations of the hash function guarantee zero knowledge under mild ``programmability'' conditions \cite{CanettiCHLRR18,CanettiLW18}.

\paragraph{Distributed Fiat-Shamir.} As mentioned by~\cite{BonehBCGI19}, in the distributed proof setting, even for public-coin protocols where the challenges $r_i$ are indeed random (and public), applying the Fiat-Shamir transform is not straightforward. This is because both the input and the communication transcript are
distributed and cannot be revealed to any single verifier. To get around this difficulty, they propose to let the prover generate each random challenge $r_i$ based on the joint view of the verifiers in previous rounds, by adding blinding values to the view of each verifier. This way, the validity of a challenge can be verified jointly by the verifiers, where each verifier checks the correlation to its private view (being given also the appropriate blinding value).

For an  $m$-round zero-knowledge proof protocol with two verifiers and public challenges $r_1,\ldots, r_m$, the distributed variant of the Fiat-Shamir transform suggested by~\cite{BonehBCGI19} proceeds as follows. The prover derives each random challenge
$r_i$ as the hash of the random challenges $r_{i,0}$ and $r_{i,1}$, where $r_{i,j}$ is obtained by hashing the view of the verifier $V_j$ up to this point. Concretely, let
$H \colon \set{0,1}^*\mapsto \set{0,1}^\kappa$
be a random hash function, where
$\kappa$ is a security parameter. For each round $i\in[m]$ and $j \in\set{0,1}$, the prover $P$ selects the blinding value  $\nu_{i,j}\leftarrow \set{0,1}^{\kappa}$ and sets $r_{i,j}=H(i,j,x_j,\nu_{i,j},\pi_{i,j})$ (to be the part of the challenge that is verifiable by $V_j$). Finally, the prover sets the $i$'th challenge $r_i$ to be $r_{i}=H(i,\bot, r_{i,0},r_{i,1})$.

Note that the blinding values $\nu_{i,j}$ ensure that the hashes
$r_{i,j}$ do not leak any information about $\pi_{i,j}$. Furthermore, if the prover sends each verifier $V_j$ the blinding value $\nu_{i,j}$, then
the verifiers can verify the validity of the challenges by replaying the process in which the prover constructed them.

Finally, we emphasize that the transformed non-interactive proof system maintains the zero-knowledge guarantee of the original interactive protocol. If that protocol was public coins, and the only communication between the verifiers was exchanging a single message at the end (see Definition \ref{def:dZKIP}), then the non-interactive protocol is dZK even against malicious verifiers (each verifier's behavior does not effect the messages sent by the prover or by the other verifier). For concrete instantiations of the protocol, this will be true so long as the hash family is programmable, see \cite{CanettiCHLRR18,CanettiLW18} (this is a very mild condition). Thus, our transformed non-interactive proofs all satisfy zero-knowledge also against a single malicious verifier (we always assume that at least one verifier is honest).
\fi

\section{Differentially Private Secret Sharing}
\label{sec:DP}

We first recall a notion of near-indistinguishability used in
Differential Privacy, and the notion of differential zero knowledge from \cite{Talwar22}:
\begin{definition}[$(\varepsilon,\delta)$-closeness]
Two random variables $P$ and $Q$ are said to be $(\varepsilon,\delta)$-close, denoted by $P \edclose Q$ if for all events $S$, it holds that
$\Pr[P \in S] \leq e^{\varepsilon}\cdot \Pr[Q \in S]+\delta$, and similarly, $Pr[Q \in S] \leq e^{\varepsilon}\cdot \Pr[P \in S]+\delta$.
\end{definition}

\ifconf

\begin{definition}[Differential Zero Knowledge]
We say a protocol $\pi$ is $(\varepsilon, \delta)$-Differentially Zero Knowledge w.r.t.~$L$ if there is \changed{an efficiently samplable}
distribution $Q$ such that for all $x \in L$, the distribution $\pi(x)$ of the protocol’s transcript on input $x$
satisfies $\pi(x)\edclose Q$.
\end{definition}

In 
\ifconf
the full version \cite{RothblumOCT23}
\else
\cref{app:dzk}
\fi, we describe a way for a
client to share a vector $X \in L = \{X \in \mathbb{Z}^d : \|X\|_2^2 \leq B\}$, while preserving differential zero knowledge. In brief, the client will share each $X_i$ by adding (rounded) truncated Gaussian noise of magnitude large enough to guarantee differential privacy.
The client (prover) will sample gaussian noise, truncated to have $\ell_2$ norm at most $\Delta$ and take the ceiling to get $R$. It will then secret-share $X$ as $-R$ and $X+R$. Since both $X$ and $R$ have bounded $\ell_2$ norm, so do the secret shares. The verifiers check that the received secret shares have bounded $\ell_2$ norm, which then implies a bound on the  norm of the sum of any valid secret shares.
For a large enough $q$, there can then be no wraparound. Validating the squared norm modulo $q$ then yields
\ifconf
\else
(proof in~\cref{app:dzk})
\fi:
\begin{restatable}{theorem}{dzknormbound}
  \label{thm:dzk-norm-bound}
  Let $(\eps,\delta) \in (0,1)$ and $B \geq 1$. Set $q > 4\left(\sqrt{B} + \sqrt{d} + \sqrt{dBc_{\eps,\frac{\delta}{2}}\cdot(1+\frac{2\sqrt{\log 8e/\delta}}{\sqrt{d}} + \frac{2\log 8e/\delta}{d})}\right)^2$. Then 
  \ifconf
  the PINE differential ZK protocol (see the full version)
  \else
  the protocol of Figure \ref{fig:L2-protocol-dzk} 
  \fi has the following properties:
  \begin{description}

   \item[Completeness:] If $\sum_{i=1}^d X_i^2 \leq B$ and the prover follows the protocol, the verifiers accept with probability $1$.

    \item[Soundness:] If $\sum X_i^2 > B$ (over the integers), the probability that the verifiers accept is at most:
    \begin{align*}
        \left( \frac{ 2\sqrt{d + 2\log(q)}}{q-2\sqrt{d + 2\log(q)}} + \frac{2\log(q)+1}{q} \right)^t.
    \end{align*}

    \item[Zero-Knowledge:] The protocol  satisfies $(\eps,\delta)$-differential zero knowledge: the view of each verifier can be efficiently simulated up to $(\eps,\delta)$-closeness.
\end{description}
The protocol is public-coins, with 3 messages. The prover sends (in addition to the secret shares of $x$) $4\lceil log_2 q\rceil^2)$ bits and $\left(t \cdot \left( 4\sqrt{d + 2\lceil \log(B + 1) \rceil} + 1 \right) \cdot \lceil\log_2 q\rceil\right)$ bits in rounds 1 and 3 respectively. The verifiers send the second message of length 
$(t \cdot \lceil\log_2 q\rceil)$ bits.



\end{restatable}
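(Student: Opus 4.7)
The plan is to obtain the three properties by composing the truncated Gaussian secret-sharing scheme with the range-check and quadratic-constraint sub-protocols already analyzed. Concretely, the prover samples integer noise $R$ obtained by rounding a Gaussian truncated to $\|R\|_2 \leq \Delta$, and secret-shares $X$ as $(-R,\, X+R)$. The verifiers then use the range-check protocol of~\cref{subsec:inequality} to enforce an integer squared-norm bound on each share, and the quadratic-constraint protocol of~\cref{corollary:quadratic-repeated} to check that $\sum_i X_i^2 \equiv z \pmod{q}$ for some claimed $z \in [0,B]$ (together with bit-validity of the auxiliary shares produced by the range checks).

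For completeness, I would observe that by construction $\|-R\|_2 \leq \Delta$ and $\|X+R\|_2 \leq \sqrt{B}+\Delta$, and by choice of $q$ each share therefore fits (without wraparound) inside the range that the range-check protocol is designed to certify. The claimed mod-$q$ sum of squared entries is exactly $\sum_i X_i^2 \in [0,B]$, so the linear decomposition used inside~\cref{subsec:inequality} is consistent. Since the quadratic protocol has perfect completeness and the range checks never fail, the verifiers accept with probability $1$.

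For soundness, the key observation is that the bound $q > 4\bigl(\sqrt{B}+\sqrt{d}+\Delta\bigr)^2$ forces any pair of shares passing the per-share range checks to satisfy $\bigl\|X^{(0)}+X^{(1)}\bigr\|_2^2 < q$ over the integers, so the mod-$q$ sum of squares coincides with the integer sum of squares (no wraparound). Then, if the verifiers accept, the quadratic-constraint protocol has certified $\sum_i X_i^2 \equiv z \pmod{q}$ for some $z \in [0,B]$, which now reads as an integer identity, contradicting $\sum_i X_i^2 > B$. Hence any accepting execution on a bad input must come from soundness failure of the $t$-fold quadratic constraint protocol, yielding exactly the claimed bound by~\cref{corollary:quadratic-repeated}.

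For differential zero knowledge, the simulator for verifier $\Vc_j$ resamples a fresh truncated Gaussian share $R'$, uses it (together with the known share $X^{(j)}$) to define a simulated second share, and then invokes the honest-verifier simulators of the range-check and quadratic sub-protocols (which are perfect, by~\cref{example:linearEquality} and~\cref{lemma:quadratic}). The remaining content is a Gaussian-mechanism argument: for sensitivity bounded by $\sqrt{B}$, truncated Gaussian noise with standard deviation scaled by $\sqrt{c_{\varepsilon,\delta/2}\cdot B}$ yields an $(\varepsilon,\delta/2)$-indistinguishable share, and the truncation level $\Delta^2 \approx dB\, c_{\varepsilon,\delta/2}\bigl(1+\tfrac{2\sqrt{\log(8e/\delta)}}{\sqrt d}+\tfrac{2\log(8e/\delta)}{d}\bigr)$ is chosen so that the tail-and-rounding event contributes at most an additional $\delta/2$. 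Combining these via a union bound gives $(\varepsilon,\delta)$-closeness between real and simulated views.

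The main obstacle will be the last paragraph: pinning down the constant $c_{\varepsilon,\delta/2}$ so that (i) the continuous Gaussian mechanism gives $(\varepsilon,\delta/2)$-DP at sensitivity $\sqrt{B}$, and (ii) the truncation threshold $\Delta$ and the integer rounding together add at most $\delta/2$ failure probability while still feeding cleanly into the $q$ bound of the theorem. This requires a careful concentration bound on $\|R\|_2$ (explaining the $\sqrt{\log(8e/\delta)}$ correction terms) and checking that the rounding step commutes appropriately with the privacy analysis; the remaining completeness and soundness arguments are then immediate consequences of the sub-protocols' guarantees and the no-wraparound choice of $q$.
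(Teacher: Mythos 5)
Your proposal is correct and follows essentially the same route as the paper: secret-share via rounded truncated Gaussian noise, observe that the per-share $\ell_2$ bounds plus the choice of $q$ preclude wraparound so the mod-$q$ sum of squares equals the integer sum, reduce soundness to the quadratic-constraint protocol of~\cref{corollary:quadratic-repeated}, and argue ZK by combining a Gaussian-mechanism $(\eps,\delta/2)$-closeness bound for the noisy share with a $\chi^2$ tail bound absorbing the truncation event into another $\delta/2$ (this is exactly~\cref{lem:gaussian-dzk} in the paper). Two small inaccuracies worth noting: first, in the protocol the per-share norm bound $\|X^{(j)}\|_2 \leq \sqrt{B}+\sqrt{\Delta}+\sqrt{d}$ is checked directly in the clear by each verifier on its own received share (no sub-protocol or zero-knowledge machinery is needed there), whereas the range-check protocol of~\cref{subsec:inequality} is used for the secret-shared claim $\sum_i X_i^2 \in [0,B] \pmod q$; second, you write $\|R\|_2 \leq \Delta$ and $\|X+R\|_2 \leq \sqrt{B}+\Delta$, but the truncation is at $\ell_2$ norm $\sqrt{\Delta}$ and the post-rounding bound picks up an additional $\sqrt d$, so the correct quantities are $\sqrt{\Delta}$ and $\sqrt{B}+\sqrt{\Delta}+\sqrt{d}$, which is why the $q$ bound has the $\sqrt{d}$ term. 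Neither issue affects the overall correctness of the strategy; to complete the proof you would only need the explicit count of variables ($d + 2\lceil\log(B+1)\rceil$) and constraints ($1 + 2\lceil\log(B+1)\rceil$) fed into~\cref{corollary:quadratic-repeated} to recover the exact soundness formula.
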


\else

\begin{definition}[Differential Zero Knowledge]
We say a protocol $\pi$ is $(\varepsilon, \delta)$-Differentially Zero Knowledge w.r.t.~$L$ if there is an efficiently samplable
distribution $Q$ such that for all $x \in L$, the distribution $\pi(x)$ of the protocol’s transcript on input $x$
satisfies $\pi(x)\edclose Q$.
\end{definition}

We remark that even if the distribution $Q$ in the above definition is {\em not} efficiently samplable, we can get an efficient simulator by paying (at most) a factor of 2 in the privacy parameters: 

\begin{remark}[A computationally bounded simulator]
If $\pi$ is an efficient protocol which is $(\varepsilon, \delta)$-Differentially Zero Knowledge w.r.t.~$L$ but with an {\em inefficient} simulator distribution $Q$, one can get an efficient simulator
distribution $Q'$ such that for all $x \in L$, the distribution $\pi(x)$ of the protocol’s transcript on input $x$
satisfies $\pi(x)\edcloseTwo Q'$. A sample from $Q'$ is obtained by running $\pi$ with some $x_0\in L$ and outputting the transcript. We assume that it is computationally easy to find {\em some} input $x_0 \in L$ (or, alternatively, the simulator can be non-uniform). 
\end{remark}

\begin{figure*}[!t]
  \begin{boxedminipage}{\textwidth}
    \small \medskip \noindent

    \underline{\textbf{Differential Zero Knowledge $L_2$-Bound Protocol}}\\

    \textbf{Common inputs:} Dimension $d \in \Nt$, claimed bound $B \in \Nt$, field size $q \in \Nt$, parameters $\eps,\delta \in [0,1]$.  \\

    \textbf{Prover Input:} A vector $X \in \GF[q]^d$.\\

     \textbf{Claim (to be verified):} $\sum_{i=1}^d X_i^2 \leq B$ (where summation is over the integers). \\

    \textbf{The Protocol:}

    \begin{enumerate}

    \item Set $\sigma = c_{\eps, \frac{\delta}{2}}\cdot \sqrt{B}$, $\Delta = d\sigma^2(1+\frac{2\sqrt{\log 8e/\delta}}{\sqrt{d}} + \frac{2\log 8e/\delta}{d})$. Sample $R \sim \mathcal{N}^{\Delta}(0, \sigma^2 \mathbb{I})$, and define secret shares $X_j^{(0)} = -\ceil{R_j}$, $X_j^{(1)} = X_j + \ceil{R_j}$.
    Send to the verifiers these secret shares. Verifier 0 (resp. Verifier 1) will check that each received secret shares satisfies $|X^{(0)}|_2 \leq \sqrt{B}+\sqrt{\Delta}+\sqrt{d}$ (resp. $|X_j^{(1)}| \leq \sqrt{B}+\sqrt{\Delta}+\sqrt{d}$) and rejects otherwise.

    \item Run the protocol of Section \ref{subsec:inequality} to verify that $\sum_i X_i^2 \in [0,B] \pmod{q}$.

    The prover's message in this sub-protocol is sent in message 1 of the $L_2$-bound protocol. Results in alleged shares of bits $\{v'_{j'},u'_{j'}\}$.

    \item \label{step:quadratic-check-dzk} Use the quadratic constraints protocol of Corollary \ref{corollary:quadratic-repeated}, setting the number of repetitions $t$ as in Theorem \ref{thm:dzk-norm-bound}, to verify the following quadratic constraints:
    \begin{enumerate}
        \item $\sum_{i=1}^d X_i^2 = \sum_{j'} v'_{j'} \cdot 2^{j'} \pmod{q}$.

        \item the secret-shared values $\{v'_{j'},u'_{j'}\}$ are all bits (i.e. in $\bitset$).

    \end{enumerate}

    The two messages of this sub-protocol (the verifier sends the first message) are sent in messages 2 and 3 of the $L_2$-bound protocol.

    \end{enumerate}

    If the verifiers in any of the sub-protocols executed above reject, then the verifiers in the $L_2$-bound protocol reject immediately. Otherwise, they accept.

  \end{boxedminipage}

  \caption{Differential Zero Knowledge $L_2$-Bound Protocol}
  \label{fig:L2-protocol-dzk}
\end{figure*}


\paragraph{Proving the norm with Differential Zero Knowledge.}
We next describe a way for a
client to share a vector $X \in L = \{X \in \mathbb{Z}^d : \|X\|_2^2 \leq B\}$, while preserving differential secrecy. Here, the client will share each $X_i$ by adding (rounded) truncated Gaussian noise\footnote{Alternately, one can also use a truncation of the Discrete Gaussian Mechanism~\cite{canonne2020discrete}, which will give similar bounds.} of magnitude large enough to guarantee differential privacy.

For parameters $\sigma,\Delta$, let $\mathcal{N}^{\Delta}(\mu, \sigma^2 \mathbb{I})$ be the multi-dimensional Gaussian distribution with variance $\sigma^2 \mathbb{I}$, conditioned on the $\ell_2$ norm being at most $\sqrt{\Delta}$.
The following lemma is standard; 
\begin{lemma}
  \label{lem:gaussian-dzk}
    Let $\eps,\delta \in (0,1)$, and let $x \in \mathbb{Z}^d$ satisfy $\|x\|_2^2 \leq B$. Let $\sigma^2 = c_{\eps,\frac{\delta}{2}}^2 B$, where $c_{\eps, \delta} =  \sqrt{2\ln \frac{1.25}{\delta}} / \eps$ and let $\Delta \geq d\sigma^2(1+\frac{2\sqrt{\log 8e/\delta}}{\sqrt{d}} + \frac{2\log 8e/\delta}{d})$.
  Let $P \sim \mathcal{N}^{\Delta}(0, \sigma^2 \mathbb{I})$ and $Q \sim x + \mathcal{N}^{\Delta}(0, \sigma^2 \mathbb{I})$.
  Then $P \edclose Q$. It follows that $\lceil P \rceil \edclose \lceil Q \rceil$.
\end{lemma}

\begin{proof}
  Let $P'$ and $Q'$ denote samples from un-truncated Gaussians $\mathcal{N}(0, \sigma^2 \mathbb{I})$ and $ x + \mathcal{N}(0, \sigma^2 \mathbb{I})$ respectively.  With this $\sigma$, the privacy analysis of the Gaussian Mechanism~\cite{DworkR14} implies that without truncation, the two random variables $P'$ and $Q'$ would be $(\eps,\delta/2)$-close.
  Let $r$ the probability that a sample from $P'$ lies in the $\ell_2$ ball; We choose $\Delta$ so that the probability of truncation $(1-r)$ is at most $\delta/8e$. Indeed recall that the squared $\ell_2$ norm of $P'$ (scaled by $\sigma^2$) is a sample from a chi-square distribution $\chi_d^2$. We will use the following tail bounds for $\chi_k^2$ random variables from Laurent and Massart \cite[Lemma 1 rephrased]{LaurentM2000}:
\begin{theorem}
  \label{thm:chi2_tails}
  Let $Z$ be a $\chi_k^2$ random variable. Then for any $\beta >0$,
  \begin{align*}
    \Pr[\frac{1}{k}Z \geq 1 + 2\sqrt{\beta/k} + 2\beta/k] &\leq \exp(-\beta).
  \end{align*}
  \end{theorem}
  Setting $\beta = \delta/8e$, we conclude that $\Delta = d\sigma^2(1+\frac{2\sqrt{\log 8e/\delta}}{\sqrt{d}} + \frac{2\log 8e/\delta}{d})$ suffices.

  Now for any event $E$, we write
  \begin{align*}
  \Pr_P[E] &\leq r^{-1}\Pr_{P'}[E]\\
  &\leq r^{-1} (e^\eps \cdot \Pr_{Q'}[E] + \delta/2)\\
  &\leq r^{-1} (e^{\eps}\cdot (r\Pr_Q[E] + (1-r)) + \delta/2)\\
  &\leq e^{\eps}\cdot \Pr_Q[E] + r^{-1}((1-r)e^{\eps}+\delta/2)\\
  &\leq e^{\eps}\cdot \Pr_Q[E] + (8/7)((\delta/8e)(e)+\delta/2)\\
  &\leq e^{\eps}\cdot \Pr_Q[E] + \delta.
\end{align*}
Finally, the standard post-processing property of differential privacy implies that applying co-ordinate-wise ceiling to $P$ and $Q$ preserves $(\eps,\delta)$-closeness.
\end{proof}

We note that the parameter $c_{\eps,\frac{\delta}{2}}$ may be replaced by a tighter numerical bound using the Analytical Gaussian Mechanism~\cite{BalleW18} without impacting the result.

The client (prover) will sample truncated gaussian noise and take the ceiling to get $R$. It will then secret-share $X$ as $-R$ and $X+R$. Note that since $\|X\|_2 \leq \sqrt{B}$ and $\|R\|_{2} \leq \sqrt{\Delta}+\sqrt{d}$, it follows that $\max(\|-R\|_2, \|X+R\|_{2}) \leq \Lambda \eqdef \sqrt{B}+\sqrt{\Delta}+\sqrt{d}$.
For any shares $R^{(1)}, R^{(2)}$ with $\|R^{(i)}\|_2 \leq \Lambda$, it is immediate the $\|R^{(1)}+R^{(2)}\|_2^2 \leq  4\Lambda^2$.
Thus for $q > 4 \Lambda^2$, there can be no wraparound, and it suffices to validate that the squared norm modulo $q$ is bounded.

Note that for an honest prover, the secret sharing scheme always succeeds, and satisfies $(\eps,\delta)$-differential zero knowledge. Further, for $q > 4\Lambda^2$, there is no wraparound and thus the protocol inherits the soundness of the ``range check modulo $q$'' protocol. We estimate this lower bound on $q$ in terms of the parameters:
\begin{align*}
  4&\Lambda^2  \leq 4 (\sqrt{B} + \sqrt{\Delta} + \sqrt{d})^2\\
  &= 4\left(\sqrt{B} + \sqrt{d} + \sigma\sqrt{d\cdot(1+\frac{2\sqrt{\log 8e/\delta}}{\sqrt{d}} + \frac{2\log 8e/\delta}{d})}\right)^2\\
  &= 4\left(\sqrt{B} + \sqrt{d} + \sqrt{dBc_{\eps,\frac{\delta}{2}}\cdot(1+\frac{2\sqrt{\log 8e/\delta}}{\sqrt{d}} + \frac{2\log 8e/\delta}{d})}\right)^2\\
\end{align*}

\begin{theorem}
  \label{thm:dzk-norm-bound}
  Let $(\eps,\delta) \in (0,1)$ and $B \geq 1$. Set $q > 4\left(\sqrt{B} + \sqrt{d} + \sqrt{dBc_{\eps,\frac{\delta}{2}}\cdot(1+\frac{2\sqrt{\log 8e/\delta}}{\sqrt{d}} + \frac{2\log 8e/\delta}{d})}\right)^2$. Then the protocol of Figure \ref{fig:L2-protocol-dzk} has the following properties.
  \begin{enumerate}
   \item \textbf{Completeness:} If the claim $\sum_{i=1}^d X_i^2 \leq B$ is true and the prover follows the protocol, the verifiers accept with probability $1$.

    \item \textbf{Soundness:} If $\sum X_i^2 > B$ (over the integers), the probability that the verifiers accept is at most:
    \begin{align*}
        \left( \frac{ 2\sqrt{d + 2\log(q)}}{q-2\sqrt{d + 2\log(q)}} + \frac{2\log(q)+1}{q} \right)^t.
    \end{align*}

    \item \textbf{Zero-Knowledge}: The protocol  satisfies $(\eps,\delta)$-differential zero knowledge: the view of each verifier can be efficiently simulated up to $(\eps,\delta)$-closeness.
\end{enumerate}
The protocol is public-coins, with 3 messages. The message lengths (in bits) are:
\begin{enumerate}
    \item the first messsage, sent by the prover, is of length at most $(2 \cdot d \cdot \lceil\log_2 q\rceil + 4\lceil log_2 q\rceil^2)$.

    \item The second message, sent by the verifier, is of length $(t \cdot \lceil\log_2 q\rceil)$.

    \item The third message, sent by the prover, is of length $\left(t \cdot \left( 4\sqrt{d + 2\lceil \log(B + 1) \rceil} + 1 \right) \cdot \lceil\log_2 q\rceil\right)$.

\end{enumerate}

\end{theorem}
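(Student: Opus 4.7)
The plan is to verify each of the four properties in turn, leveraging the already-established guarantees of the subprotocols (\cref{lemma:range-protocol}, \cref{corollary:quadratic-repeated}, \cref{lem:gaussian-dzk}) together with the choice of $q$ that precludes any wraparound modulo $q$ when summing the squared secret shares.

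\textbf{Completeness.} First I would verify that an honest prover passes the norm check in step 1. By construction $R$ is sampled from a truncated Gaussian, so $\|R\|_2 \leq \sqrt{\Delta}$, whence $\|\lceil R\rceil\|_2 \leq \sqrt{\Delta}+\sqrt{d}$ (the ceiling adds at most $\sqrt{d}$). Thus $\|X^{(0)}\|_2 = \|\lceil R\rceil\|_2 \leq \sqrt{\Delta}+\sqrt{d}$ and $\|X^{(1)}\|_2 \leq \|X\|_2 + \|\lceil R\rceil\|_2 \leq \sqrt{B}+\sqrt{\Delta}+\sqrt{d}$, matching the threshold. Since $\sum X_i^2 \leq B < q$, the value of $\sum X_i^2$ as an integer equals $\sum X_i^2 \pmod q$ and lies in $[0,B]$, so the range-check subprotocol of \cref{subsec:inequality} accepts with probability 1 and produces correct bit-shares. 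The quadratic constraints in step 3 then all hold exactly, so \cref{corollary:quadratic-repeated} yields perfect completeness for that stage.

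\textbf{Soundness.} The key observation is that the norm-check in step 1 forces $\|X^{(0)}\|_2,\|X^{(1)}\|_2 \leq \Lambda = \sqrt{B}+\sqrt{\Delta}+\sqrt{d}$, so by the triangle inequality $\|X^{(0)}+X^{(1)}\|_2^2 \leq 4\Lambda^2 < q$ by the choice of $q$. Hence the integer value $\sum_i X_i^2 = \sum_i (X_i^{(0)} + X_i^{(1)})^2$ (interpreted over $\Z$) is strictly less than $q$, so $\sum X_i^2 \pmod q$ equals $\sum X_i^2$ over $\Z$. If $\sum X_i^2 > B$ over the integers, then either the prover fails the range-check on $[0,B] \pmod q$, or the consistency equality $\sum X_i^2 = \sum_{j'} v'_{j'} 2^{j'} \pmod q$ fails, or the bit-ness constraints fail; in all cases the quadratic-constraints protocol rejects except with the stated probability. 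Counting variables (at most $d + 2\lceil\log(B+1)\rceil \leq d + 2\log q$) and constraints (at most $2\lceil\log(B+1)\rceil+1 \leq 2\log q + 1$) and plugging into the soundness bound of \cref{corollary:quadratic-repeated} with $t$ repetitions gives the claimed expression.

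\textbf{Zero-knowledge.} This is the step I expect to require the most care, and I would handle it by hybrids. By \cref{lem:gaussian-dzk}, for any honest input $X \in L$ the distribution of the share $X^{(0)} = -\lceil R\rceil$ is $(\eps,\delta)$-close to the same distribution computed with $X = 0$, and symmetrically for $X^{(1)}$. Thus, conditioned on the first message, I can simulate each verifier's view of step 1 via the fixed reference distribution of \cref{lem:gaussian-dzk}. The range-check (\cref{lemma:range-protocol}) and quadratic-constraints (\cref{corollary:quadratic-repeated}) subprotocols are perfect strong distributed ZK, so the simulator runs their simulators on the fake share, and the composition lemma (\cref{lemma:composition}) together with the post-processing property of $(\eps,\delta)$-closeness yields the claimed $(\eps,\delta)$-differential ZK of the whole protocol against a single (possibly malicious) verifier.

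\textbf{Communication.} This is a direct bookkeeping step: message 1 contains the two $d$-dimensional shares ($d\lceil\log_2 q\rceil$ bits each, total $2d\lceil\log_2 q\rceil$) plus the prover's message in the range-check protocol, which by \cref{lemma:range-protocol} has length $2\lceil\log(B+1)\rceil\cdot\lceil\log_2 q\rceil \leq 4\lceil\log_2 q\rceil^2$. Messages 2 and 3 are the two messages of the $t$-fold parallel quadratic-constraints protocol, whose lengths I read off directly from \cref{corollary:quadratic-repeated} with $n = d + 2\lceil\log(B+1)\rceil$.
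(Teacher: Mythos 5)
Your proposal is correct and follows essentially the same route as the paper: completeness and soundness reduce to the subprotocols' guarantees (with the norm-check in step 1 preventing wraparound so that the mod-$q$ range check becomes an integer range check), the soundness bound comes from counting variables and constraints and plugging into Corollary~\ref{corollary:quadratic-repeated}, and zero-knowledge rests on Lemma~\ref{lem:gaussian-dzk} for step 1 plus the perfect dZK of the remaining subprotocols. One small imprecision in your ZK argument: you say that the share $X^{(0)}=-\lceil R\rceil$ is ``$(\eps,\delta)$-close to the same distribution computed with $X=0$, and symmetrically for $X^{(1)}$,'' but the two verifiers are not symmetric. Since $R$ is drawn independently of $X$, the share $X^{(0)}=-\lceil R\rceil$ is \emph{identically} distributed regardless of $X$, so verifier 0 gets perfect zero-knowledge for that step without invoking Lemma~\ref{lem:gaussian-dzk}; only verifier 1, whose share $X^{(1)}=X+\lceil R\rceil=\lceil X+R\rceil$ actually depends on $X$, needs the $(\eps,\delta)$-closeness guarantee. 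Your weaker phrasing is still sufficient (since identical implies $(\eps,\delta)$-close), but the paper's presentation makes the asymmetry explicit, which is cleaner.
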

\begin{proof}

  Completeness and soundness follow directly from the sub-protocols. For the quadratic constraints protocol that is run Step (\ref{step:quadratic-check-dzk}) there are:
  \begin{itemize}
      \item The number of constraints is $1 + 2\lceil \log(B + 1) \rceil$.
      \item The number of variables is $d + 2\lceil \log(B + 1) \rceil$.
  \end{itemize}
  Thus from~\cref{lemma:quadratic}, the soundness error is:
  \begin{align*}
\left( \frac{ 2\sqrt{d + 2\lceil \log(B + 1) \rceil}}{q-\sqrt{d + 2\lceil \log(B + 1) \rceil}} + \frac{1 + 2\lceil \log(B + 1) \rceil}{q} \right)^t.
  \end{align*}
  The claimed communication complexity follows from the complexity of the different sub-protocols, and the above bounds on the number of variables and constraints in the protocol run in Step (\ref{step:quadratic-check-dzk}).

  We next argue the zero knowledge property. The simulator for verifier $1$ samples $R \sim \mathcal{N}^{\Delta}(0, \sigma^2 \mathbb{I})$, and outputs $-\ceil{R}$. This gives perfect zero knowledge for this step.
  The simulator for verifier $2$ samples $R \sim \mathcal{N}^{\Delta}(0, \sigma^2 \mathbb{I})$, and outputs $\ceil{R}$. \cref{lem:gaussian-dzk} implies that this step satisfies $(\eps,\delta)$-zero knowledge.
  The zero knowledge property of steps (2) and (3) follows from the ZK property of the respective subprotocols.
\end{proof}
\fi

\remove{

\paragraph{Proving the norm with differential secrecy.}
We next describe a way for a
client to share $\myvec{x}$: Share each $x_i$ by adding
Gaussian noise of magnitude large enough to guarantee differential privacy, but still much smaller than the field size. Specifically, the prover draws
$r_i \longleftarrow \mathcal{N}(0,\sigma^2)$, where
$\sigma = \frac{2}{\varepsilon}\cdot\sqrt{B\ln\left(\frac{1.25}{\delta}\right)}$, and shares $r_i$ and $x_i-r_i$ to the two servers.

Now,  the servers can verify that the following hold:
\begin{enumerate}
    \item The magnitude of each share is at most. This requires no actual verification, as the client would only send as many bits as required for this magnitude (all other bits in will be set to zero by the servers).
    \item The magnitude of the shared value of $\sum_i X_i = B \pmod{q}$. This is done using the above -- Range Check Modulo $q$ Protocol -- as black-box.
\end{enumerate}

We next prove three properties of the above scheme.
\begin{description}
\item[Completeness:]
We use the following tail bound on Normal distribution.
\begin{lemma}[Gaussian tail bound -- Stack Exchange]
Let $r\longleftarrow \mathcal{N}(0,\sigma^2)$ and let $\gamma>0$, then
\begin{align*}
     \Pr\left[\left|r\right| > \sigma\gamma\right]  \leq \frac{ 2e^{-\gamma^2/2}}{\gamma\cdot\sqrt{2\pi}}.
\end{align*}
\end{lemma}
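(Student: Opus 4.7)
The plan is to reduce to the standard normal by scaling, use symmetry to pass from a two-sided to a one-sided tail, and then apply the classical Mills' ratio trick. First I would observe that $r/\sigma \sim \mathcal{N}(0,1)$, so writing $Z \sim \mathcal{N}(0,1)$ we have $\Pr[|r| > \sigma\gamma] = \Pr[|Z| > \gamma]$. By the symmetry of the standard normal density about $0$, this equals $2\Pr[Z > \gamma]$.

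Next I would bound the one-sided tail
\[
\Pr[Z > \gamma] \;=\; \int_{\gamma}^{\infty} \frac{1}{\sqrt{2\pi}}\, e^{-x^2/2}\, dx.
\]
The key observation is that throughout the region of integration we have $x \geq \gamma > 0$, hence $1 \leq x/\gamma$. Inserting this factor inside the integrand yields
\[
\int_{\gamma}^{\infty} \frac{1}{\sqrt{2\pi}}\, e^{-x^2/2}\, dx \;\leq\; \int_{\gamma}^{\infty} \frac{x}{\gamma\sqrt{2\pi}}\, e^{-x^2/2}\, dx,
\]
and the right-hand integrand has an elementary antiderivative, namely $-\frac{1}{\gamma\sqrt{2\pi}} e^{-x^2/2}$. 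Evaluating at the endpoints gives $\frac{1}{\gamma\sqrt{2\pi}} e^{-\gamma^2/2}$.

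Combining the two steps produces the claimed inequality
\[
\Pr\left[|r| > \sigma\gamma\right] \;=\; 2\Pr[Z > \gamma] \;\leq\; \frac{2\, e^{-\gamma^2/2}}{\gamma\sqrt{2\pi}}.
\]
There is no real obstacle to overcome here: the bound and its derivation are completely standard, and the only ``trick'' is the substitution of the constant $1$ by the factor $x/\gamma \geq 1$ inside the integrand in order to turn the Gaussian integral into one whose antiderivative is elementary.
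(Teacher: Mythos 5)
Your proof is correct: the reduction to the standard normal, symmetry, and the Mills'-ratio trick of inserting the factor $x/\gamma \ge 1$ into the integrand give exactly the stated bound. The paper itself states this lemma as a standard fact without proof, so your derivation simply supplies the routine argument the paper omits.
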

Roughly, for $\beta$-completeness, we need $\gamma$ to be such that
$d\cdot e^{-\gamma^2/2} < \beta$. It suffices to take $\gamma> \sqrt{2(\ln d-\ln \beta)}$. In addition, for such a $\gamma$, it suffices to allow shares of length $\log(\sigma\gamma+\sqrt{B})$ (i.e., the maximum size of the noise $r_i$ added to the maximum value of $x_i$).

\item[Soundness:]
To argue about soundness, we can make sure there is no wraparound, and let the soundness parameter $\alpha$ be determined by the external proof of the size of $\sum_i x_i^2$. To guarantee this, we need to set the field size $q$ to be large enough. It suffices to limit the magnitude of each share to
$\alpha \sqrt{B}$, and let $q> t\cdot d \cdot \alpha \sqrt{B}$, where $t$ is the number of servers (in our case $t=2$).

By the arguments above, $\alpha B= \sigma\gamma +\sqrt{B}\geq  \frac{2}{\varepsilon}\cdot\sqrt{B\ln\left(\frac{1.25}{\delta}\right)}\cdot\sqrt{100+\log d}+\sqrt{B}$. That is, we can take $\alpha > 1+\frac{2}{\varepsilon}\cdot\sqrt{\ln\left(\frac{1.25}{\delta}\right)\cdot\left(100+\log d\right)}$
\item[Differentially private ZK:]

\end{description}

The following is a special case of the privacy of the Gaussian mechanism \cite[Thm A.1]{DworkR14}.
\begin{lemma}
Let $\epsilon, d > 0$ and let $x \in \mathbb {R}^d$ satisfy $||x||_2 \leq B$. Let $P \sim \mathcal{N}(0, \sigma^2 \mathbb{I}_d )$ and let $Q \sim
x +  \mathcal{N}(0, \sigma^2 \mathbb{I}_d)$. Then, $P \edclose Q$ if $\sigma > \frac{1}{\varepsilon}\cdot\sqrt{B\cdot\ln\left(\frac{1.25}{\delta}\right)}$.
\end{lemma}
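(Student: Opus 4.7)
The plan is to decompose the theorem into its three guarantees (completeness, soundness, $(\varepsilon,\delta)$-differential zero-knowledge) plus the communication accounting, and to build each on the sub-protocols already established: the range-check modulo $q$ protocol (Lemma \ref{lemma:range-protocol}), the amplified quadratic constraints protocol (Corollary \ref{corollary:quadratic-repeated}), and the Gaussian closeness bound (Lemma \ref{lem:gaussian-dzk}). The central new observation driving the argument is that by picking $\sigma$ and the truncation radius $\Delta$ as in Figure \ref{fig:L2-protocol-dzk}, each secret share has $\ell_2$ norm bounded by $\Lambda \eqdef \sqrt{B}+\sqrt{\Delta}+\sqrt{d}$, so the choice $q > 4\Lambda^2$ rules out any wraparound in the squared-norm computation modulo $q$.

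For completeness, I would first check that for an honest prover with $\|X\|_2 \leq \sqrt{B}$, the sampled $R \sim \mathcal{N}^{\Delta}(0,\sigma^2 \mathbb{I})$ satisfies $\|R\|_2 \leq \sqrt{\Delta}$ by construction of the truncation, so $\|\lceil R \rceil\|_2 \leq \sqrt{\Delta} + \sqrt{d}$ by a triangle inequality on the rounding error. Thus both shares $X^{(0)} = -\lceil R \rceil$ and $X^{(1)} = X + \lceil R \rceil$ have $\ell_2$ norm at most $\Lambda$, so both verifiers' norm checks pass deterministically. Since $\sum X_i^2 \leq B$, the range-check and quadratic-constraints sub-protocols succeed with probability $1$ by their perfect completeness.

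For soundness, I will argue that the norm checks performed by the verifiers force $\sum_i (X_i^{(0)}+X_i^{(1)})^2 \leq (|X^{(0)}|_2 + |X^{(1)}|_2)^2 \leq 4\Lambda^2 < q$, viewing each share entry as an integer in $[-q/2,q/2]$. Hence the integer value of $\sum_i X_i^2$ and its residue modulo $q$ coincide, and the promise problem set up by the range-check sub-protocol becomes genuine: if acceptance forces $\sum_i X_i^2 \in [0,B] \pmod{q}$, then in fact the same holds over the integers. Under the hypothesis $\sum_i X_i^2 > B$, the only way the verifiers can accept is if some constraint used by the range-check (the linear equality or the bit shares) is violated yet the quadratic-constraint sub-protocol still passes; this event is bounded directly by Corollary \ref{corollary:quadratic-repeated} applied to $n = d + 2\lceil\log(B+1)\rceil$ variables and $m = 1 + 2\lceil\log(B+1)\rceil$ constraints, giving exactly the claimed tail.

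For differential zero-knowledge, I would construct simulators view by view. The simulator for verifier $0$ samples its own $R' \sim \mathcal{N}^{\Delta}(0,\sigma^2 \mathbb{I})$ and outputs $-\lceil R' \rceil$; this is distributed identically to the real share, giving a perfect simulation of that step. The simulator for verifier $1$ samples $R'$ the same way and outputs $\lceil R' \rceil$; Lemma \ref{lem:gaussian-dzk} applied to $x = X$ then yields $(\varepsilon,\delta)$-closeness to the real share $X + \lceil R \rceil$. The subsequent steps (range check and quadratic constraints) have perfect honest-verifier dZK, so composition preserves $(\varepsilon,\delta)$-closeness of the overall transcript. I expect the main obstacle here to be purely notational: threading the truncation probability $(1-r)$ through the privacy accounting carefully enough that the $\delta/2$ slack inside Lemma \ref{lem:gaussian-dzk} absorbs it without an extra factor, which is really the content of that lemma. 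Finally, the communication totals are immediate from the sub-protocol statements: the first message carries $d$ share entries plus the range-check proof of length $2\lceil\log_2 q\rceil^2$ (two copies for the $u',v'$ bits), the second is a single field element per repetition $t$, and the third is the quadratic-constraint response of length $(4\sqrt{n}+1)\lceil\log_2 q\rceil$ per repetition, with $n = d + 2\lceil\log(B+1)\rceil$.
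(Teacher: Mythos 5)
There is a genuine gap: your proposal does not prove the statement at hand. The lemma to be shown is a purely probabilistic fact about \emph{un-truncated} Gaussians — that $P \sim \mathcal{N}(0,\sigma^2\mathbb{I}_d)$ and $Q \sim x + \mathcal{N}(0,\sigma^2\mathbb{I}_d)$ are $(\varepsilon,\delta)$-close whenever $\sigma$ is large enough relative to $\|x\|_2$ — i.e.\ the privacy guarantee of the Gaussian mechanism, which the paper simply imports from Dwork–Roth (Thm.~A.1 of \cite{DworkR14}). What you have written instead is a proof sketch of the downstream protocol result (Theorem \ref{thm:dzk-norm-bound}): completeness, soundness via the no-wraparound bound $q > 4\Lambda^2$, zero-knowledge by composing simulators, and the communication accounting. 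None of that engages with the Gaussian densities at all. Worse, the one place where your argument touches the content of the lemma is Lemma \ref{lem:gaussian-dzk}, which you invoke as a black box; but in the paper that lemma's proof is exactly where the un-truncated closeness of $P'$ and $Q'$ is used, so relying on it here is circular with respect to the statement you were asked to prove.

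A correct proof must work with the density ratio. Writing $\mathcal{L}(y) = \ln\bigl(p_Q(y)/p_P(y)\bigr) = \frac{1}{2\sigma^2}\left(2\langle y,x\rangle - \|x\|_2^2\right)$, one observes that under $y \sim Q$ (resp.\ $y \sim P$) the privacy loss $\mathcal{L}$ is a one-dimensional Gaussian with mean $\pm\|x\|_2^2/(2\sigma^2)$ and variance $\|x\|_2^2/\sigma^2$; a Gaussian tail bound then shows that $\Pr[|\mathcal{L}| > \varepsilon] \leq \delta$ for the stated choice of $\sigma$, and conditioning on this event yields $\Pr[Q \in S] \leq e^{\varepsilon}\Pr[P \in S] + \delta$ and the symmetric inequality, i.e.\ $P \edclose Q$. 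This calculation (or an explicit citation standing in for it) is the entire content of the lemma, and it is absent from your proposal.
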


} 

\ifconf
\begin{table*}[!t]
\else
\begin{table}[!t]
\fi
\centering

\begin{tabular}{|c || c | c |c | c|} 
	\hline
	& $d=10^4$ & $d=10^5$ & $d=10^6$ & $d=10^7$ \\ [0.5ex] 
	\hline\hline
	no robustness, $\#$ bits sent & $64 \cdot 10^4$ &  $64 \cdot 10^5$ & $64 \cdot 10^6$ & $64 \cdot 10^7$ \\ 
	\hline
	prior work, overhead \cite{BonehBCGI19,libprio} & $> 1500\%$ & $> 1500\%$ & $> 1500\%$ & $> 1500\%$ \\
	\hline
	PINE, Statistical ZK, overhead &$43.01\%$  & $6.27\%$ & $0.97\%$ & $0.26\%$ \\
	\hline
	PINE, Differential ZK, overhead &$8.92\%$  & $2.86\%$ & $0.63\%$ & $12.76\%$ \\ [1ex] 
	\hline
\end{tabular}

\caption{Communication analysis: our protocols and prior work. Parameters: {\textbf{\color{blue} field size $q \approx 2^{64}$}} for aggregation, $d$-dimensional data, {\textbf{\color{blue}soundness error $2^{-100}$}}, zero-knowledge error $\delta=2^{-50}$. For differential ZK $\epsilon=0.1$.}
\label{table:perf-eval-szk -64}

\begin{tabular}{|c || c | c |c | c|} 
 \hline
  & $d=10^4$ & $d=10^5$ & $d=10^6$ & $d=10^7$ \\ [0.5ex] 
 \hline\hline
 no robustness, $\#$ bits sent & $128 \cdot 10^4$ &  $128 \cdot 10^5$ & $128 \cdot 10^6$ & $128 \cdot 10^7$ \\ 
 \hline
 prior work, overhead \cite{BonehBCGI19,libprio} & $> 1500\%$ & $> 1500\%$ & $> 1500\%$ & $> 1500\%$ \\
 \hline
 PINE, Statistical ZK, $\rho=2^{-50}$, overhead & $22\%$ & $3.18\%$ & $0.49\%$ & $0.13\%$ \\
 \hline
 PINE, Statistical ZK, $\rho=2^{-100}$, overhead &$36\%$  & $4.58\%$ & $0.63\%$ & $0.15\%$ \\
 \hline
 PINE, Differential ZK, $\rho=2^{-50}$, overhead &$4.77\%$  & $1.46\%$ & $0.32\%$ & $0.11\%$ \\
 \hline
 PINE, Differential ZK, $\rho=2^{-100}$, overhead &$4.77\%$  & $1.46\%$ & $0.32\%$ & $0.11\%$ \\ [1ex] 
 \hline
\end{tabular}

\caption{Communication analysis: our protocols and prior work. Parameters: {\textbf{\color{blue} field size $q \approx 2^{128}$}}, $d$-dimensional data, {\textbf{\color{blue} soundness error $\rho=2^{-50}$ and $2^{-100}$}}, zero-knowledge error $\delta=2^{-50}$. For differential ZK $\epsilon=0.1$.}
\label{table:perf-eval-szk-128}

\begin{tabular}{|c || c | c |c | c|} 
 \hline
  & $d=10^4$ & $d=10^5$ & $d=10^6$ & $d=10^7$ \\ [0.5ex] 
 \hline\hline
 no robustness, $\#$ bits sent & $64 \cdot 10^4$ &  $64 \cdot 10^5$ & $64 \cdot 10^6$ & $64 \cdot 10^7$ \\ 
 \hline
 prior work, overhead \cite{BonehBCGI19,libprio} & $> 1500\%$ & $> 1500\%$ & $> 1500\%$ & $> 1500\%$ \\
 \hline
 PINE, Differential ZK, $\epsilon=0.1$, overhead & $4.77\%$ & $1.46\%$ & $0.32\%$ & $12.63\%$ \\
 \hline
 PINE, Differential ZK, $\epsilon=0.01$, overhead &$17.87\%$  & $14.14\%$ & $12.86\%$ & $25.14\%$ \\
 \hline
 PINE, Differential ZK, $\epsilon=0.001$, overhead &$17.87\%$  & $26.83\%$ & $25.40\%$ & $37.66\%$ \\ [1ex] 
 \hline
\end{tabular}

\caption{Communication analysis: {\textbf{\color{blue}Differential ZK}} protocol and prior work. Parameters: field size $q \approx 2^{64}$, $d$-dimensional data, soundness error $\rho=2^{-50}$, zero-knowledge error $\delta=2^{-50}$.}
\label{table:perf-eval-dzk-64}

\ifconf 
\end{table*} 
\else 
\end{table}
\fi

\section{Performance Evaluation}
\label{sec:performance}

We provide a further performance analysis for our protocols in different parameter regimes. As in Table \ref{table:intro-performance}, we analyze performance in terms of the communication overhead, beyond the communication that is needed to simply send secret shares for distributed aggregation (without any robustness to poisoning attacks). We consider aggregating  $d$-dimensional integer vectors of $\ell_2$ norm at most $2^{15}$ with $d \in \{10^4, 10^5, 10^6, 10^7\}$, where the aggregation is performed over secret-shared inputs in a field of size $q=2^{64}$ or $q=2^{128}$ (sending secret shares for the client's data requires $d \cdot \log q$ bits).  The (statistical) zero-knowledge error is set to  $\delta=2^{-50}$ throughout.

In Table \ref{table:perf-eval-szk -64} we analyze PINE's communication overhead for a smaller soundness error $2^{-100}$, fixing all other parameters to be the same as in the results of Table \ref{table:intro-performance} (where the soundness error was $2^{-50}$). The overhead for {\textbf{Statistical ZK}} is larger by a roughly 2x factor (compared with the results in Table \ref{table:intro-performance}): the larger overhead comes from additional repetitions of the sub-protocols. The overhead for {\textbf{Differential ZK}} is, similarly, larger by a roughly 2x factor (compared with the results in Table \ref{table:intro-performance}), except in the high-dimensional regime, where the main bottleneck is the increased field size, and the overhead for smaller soundness error is not much larger than the overhead for soundness error $2^{-50}$.

In Table \ref{table:perf-eval-szk-128}, we analyze PINE's communication overhead when the field is large ($q=2^{128}$) for soundness errors $2^{-50}$ and $2^{-100}$. For {\textbf{Statistical ZK}}, 
comparing with the performance for field size $2^{64}$, the larger field size does not change the overhead for soundness error $2^{-50}$ (vs. Table \ref{table:intro-performance}), but the overhead is slightly smaller for soundness error $2^{-100}$ (vs. Table \ref{table:perf-eval-szk -64}). For {\textbf{Differential ZK}}, the large field size reduces the overhead by a 2x or larger (for high dimensionality) factor (vs. Tables \ref{table:intro-performance} and \ref{table:perf-eval-szk -64}). This is because the main bottleneck in the Differential ZK protocol for high-dimensional data was having a large-enough field. For lower-dimensional data, the larger field size automatically gives a smaller soundness error, which reduces overheads.  Indeed, for a field size this large, soundness $2^{-100}$ comes ``for free'', at no additional overhead (compared with the overhead for soundness error $2^{-50}$).

In Table \ref{table:perf-eval-dzk-64}  we analyze {\textbf{Differential PINE}}'s performance as a function of the privacy parameter $\varepsilon \in  \{0.1,0.01,.0.001\}$. The overhead increases by (at most) a constant multiplicative factor for each order of magnitude improvement (reduction) in the privacy parameter. This is again due to the main ``bottleneck'' being the field size, which needs to grow linearly in $(1/\varepsilon)$ (thus the bit length of a field element grows with $\log(1/\varepsilon)$).

\medskip\noindent{\bf On the precision parameter.} As discussed after the performance analysis in Section \ref{subsec:our-work}, we consider aggregating floating point vectors of Euclidean norm at most 1, and support $b=15$  bits of precision.  Our main focus is on distributed aggregation, where noise will be added to the aggregate before it is revealed to the servers (to guarantee differential privacy). Since we expect noise to be added to the aggregate, there is limited value in increasing the precision for individual contributions.

\medskip\noindent{\bf Further performance evaluation.}We provide further evaluations in 
\ifconf
the full version \cite{RothblumOCT23}. 
\else
\cref{sec:appendix-evaluations}.
\fi
First, we analyze PINE's {\em runtime} overhead for the prover and the verifier, and find that they are improved by 1-2 orders of magnitude compared to the prior work of \cite{BonehBCGI19,libprio}. We also provide more detailed evaluations of statistical PINE's communication overhead for many different choices of soundness and zero-knowledge errors.

\printbibliography

\ifconf
\else
\appendix
\ifconf
\section{Formal description of Range Check mod $q$ Protocol}
\label{app:inequality-mod-q-protocol}

The protocol is in Figure \ref{fig:inequality-mod-q-protocol}.

\begin{figure*}[t]
  \thisfloatpagestyle{empty}
  \begin{boxedminipage}{\textwidth}
    \small \medskip \noindent

    \underline{\textbf{Protocol: Range Check $\pmod{q}$}}\\

    \textbf{Common inputs:} Field size $q \in \Nt$, number of variables $n \in \Nt$, coefficients $\alpha_1,\ldots,\alpha_n \in \GF[q]$ and claimed lower and upper bounds $\beta_1,\beta_2 \in \GF[q]$, viewed as integers in $\{-\lfloor q/2 \rfloor,\ldots, \lfloor q/2 \rfloor \}$, s.t. $\beta_1 \leq \beta_2$ and $q > 3(\beta_2 - \beta_1)+2$.\\

    \textbf{Other inputs:} The prover knows $Q_1,\ldots,Q_n \in \GF[q]$. We do not assume the verifiers have access to these $Q_i$'s. \\

    \textbf{Secret-shared outputs:} Shares $\{[v_j],[u_j]\}_{j=0}^{b-1}$, where $b = \lceil \log (\beta_2-\beta_1+1) \rceil$ (for each $j$, each verifier outputs its respective shares, $(v_j^{(0)},u_j^{(0)})$ or $(v_j^{(1)},u_j^{(1)})$). \\

    \textbf{The Protocol:}

    The prover secret shares:
    \begin{enumerate}

    \item The $b$ bits $(v_j)_{j \in [0,\ldots,b-1]}$ of $V = (\sum_i \alpha_i Q_i) - \beta_1 \pmod{q}$,

    \item The $b$ bits $(u_j)_{j \in [0,\ldots,b-1]}$ of $U = \beta_2 - (\sum_i \alpha_i Q_i) \pmod{q}$.

    \end{enumerate}

    The verifiers verify the linear equality $(\sum_{j=0}^{b-1} v_j \cdot 2^j)+ ( \sum_{j=0}^{b-1} u_j \cdot 2^j) = \beta_2 - \beta_1 \pmod{q}$  (rejecting otherwise).

  \end{boxedminipage}

  \caption{From Range Check $\pmod{q}$ Protocol}
  \label{fig:inequality-mod-q-protocol}
\end{figure*}

\section{Deferred Proofs from \cref{subsec:wraparound-protocol}}
\label{app:wraparound}
\completeness*
\begin{proof} Recall that
$Y_i = Z_i X_i$. Thus, $Y_i$ has expectation 0 and is subgaussian\footnote{For background on subgaussian random variables, see e.g. \cite{Rig15}} with parameter $(X_i/\sqrt{2})$:
\begin{align*}
    \forall \lambda \in \mathbb{R}: \E [e^{\lambda Y_i}] & = \frac{1}{2} + \frac{1}{4} \cdot \left( e^{\lambda X_i} + e^{-\lambda X_i}\right) \\
    & \leq \frac{1}{2} + \frac{1}{2} \cdot \left( e^{\lambda^2 X_i^2 / 2}  \right) \\
    & \leq e^{\lambda^2 X_i^2 / 4}.
\end{align*}

Thus, $Y$, which is a sum of independent subgaussians, is itself subgaussian, with parameter $\sqrt{\sum_i X_i^2/2} \leq \sqrt{B/2}$. By tail bounds for subgaussian RVs:
\begin{align*}
    \Pr[|Y| > \alpha\sqrt{B}] & \leq 2\exp \left( - \frac{2\alpha^2 B}{2B}
    \right) \\
    & \leq 2 \exp(-\alpha^2).
\end{align*}
\end{proof}

\boundclose*
\begin{proof}
For every integer $t$, by Theorem \ref{thm:berry-esseen-Y}:
\begin{align*}
    \Pr &\left[ (Y - t \cdot q) \in [-\alpha\sqrt{B}, \alpha\sqrt{B}]  \right]\\
    & \leq \Pr \left[ \left( \mathcal{N}(0,\sigma^2) - t \cdot q \right) \in  [-\alpha\sqrt{B}, \alpha\sqrt{B}]  \right] +  \frac{1.12 \cdot  \alpha\sqrt{B}}{\sigma} \\
    & \leq \Pr \left[ \mathcal{N}(0,\sigma^2) \in  [-\alpha\sqrt{B}, \alpha\sqrt{B}]  \right] +  \frac{1.12 \cdot  \alpha\sqrt{B}}{\sigma} \\
    &\leq \left( 1.12 + \sqrt{\frac{2}{\pi}} \right) \frac{  \alpha\sqrt{B}}{\sigma}.
\end{align*}
The claim follows by taking a union bound over the possible values of $t$.
\end{proof}
\boundfar*
\begin{proof}

We use a union bound over the probability that $Y$ is too large and the probability that it is too small. To bound the former, Theorem \ref{thm:berry-esseen-Y}, implies:
\begin{align*}
    \Pr &\left[ Y \geq -\alpha\sqrt{B} + t \cdot q \right] \\ & \leq \Pr \left[  \mathcal{N}(0,\sigma^2)  \geq -\alpha\sqrt{B} + t \cdot q \right] + \frac{0.56 \alpha \sqrt{B}}{\sigma} \\
    & \leq \frac{\sigma}{ \sqrt{2 \pi} \cdot (-\alpha\sqrt{B} + t \cdot q) } \cdot \exp \left(-\frac{\left( -\alpha\sqrt{B} + t \cdot q \right)^2}{2\sigma^2} \right) \\&\;\;+ \frac{0.56 \alpha \sqrt{B}}{\sigma} \\
    & \leq \frac{1}{\sqrt{2\pi}} \exp \left(-\frac{\left( -\alpha\sqrt{B} + t \cdot q \right)^2}{2\sigma^2} \right) + \frac{0.56 \alpha \sqrt{B}}{\sigma},
\end{align*}
where the last inequality holds because by the conditions of the claim $\alpha, t \geq 1$ and $q \geq 2\alpha \sqrt{B}$. Thus we have that $-\alpha \sqrt{B} + t \cdot q \geq \alpha \sqrt{B} \geq \sigma$. The symmetric case (bounding the probability that $Y$ is too small) follows similarly.
\end{proof}

\claimunionbound*
\begin{proof}
We write
\begin{align*}
    & \Pr[Y \in [-\alpha \sqrt{B}, \alpha \sqrt{B}] \pmod{q}] \\
    & \leq \sum_{t=-(u-1)}^{u-1} \Pr \left[ Y \in [-\alpha\sqrt{B} + t \cdot q, \alpha\sqrt{B} + t \cdot q] \right] \\ & \,\,\,+ \Pr \left[ \left| Y \right| \geq  -\alpha\sqrt{B} + u \cdot q \right] \\
    & \leq \left( 2u \cdot 1.12 + (2u -1) \cdot \sqrt{\frac{2}{\pi}} \right) \cdot \frac{ \alpha \sqrt{B}}{\sigma} \\ & \,\,\, + \sqrt{\frac{2}{\pi}} \cdot \exp \left(-\frac{\left( -\alpha\sqrt{B} + u \cdot q \right)^2}{2\sigma^2} \right) \\
    & \leq \left( 2u \cdot \left(1.12 + \sqrt{\frac{2}{\pi}} \right) \right) \cdot \frac{ \alpha \sqrt{B}}{\sigma} \\ & \,\,\,+ \sqrt{\frac{2}{\pi}} \cdot \exp \left(-\frac{\left( -\alpha\sqrt{B} + u \cdot q \right)^2}{2\sigma^2} \right) \\
    &\leq \left( \alpha \sqrt{B} \cdot \left(1.12 + \sqrt{\frac{2}{\pi}} \right) \cdot \left( \frac{2 \sqrt{2 \ln(1/\delta)}}{q} +  \frac{2 \alpha \sqrt{B}}{q \cdot \sigma} \right) \right) + \delta  \\
    &\leq \left( \alpha \sqrt{B} \cdot \left(1.12 + \sqrt{\frac{2}{\pi}} \right) \cdot  \left( \frac{2 \sqrt{2 \ln(1/\delta)}}{q} +  \frac{0.02}{\sigma} \right) + \delta \right) \\
    & \leq \left( \alpha \sqrt{B} \cdot \left(1.12 + \sqrt{\frac{2}{\pi}} \right) \cdot \left(  \frac{2 \sqrt{ 2\ln(1/\delta)}  + 0.04 \sqrt{q}}{q} \right) + \delta \right),
\end{align*}
where above we used the fact that $\sigma > \sqrt{q/2}$ and also assumed that $q \geq 100 \alpha\sqrt{B}$ (this will be guaranteed by the parameters set below).
\end{proof}

\section{Deferred Proofs from \cref{subsec:putting-together}}
\label{app:putting-together}
\thmputtogether*

\begin{proof}

Completeness and soundness follow directly from the sub-protocols.  For the quadratic constraints protocol that is run Step (\ref{step:quadratic-check}) there are:
\begin{itemize}
    \item The number of constraints is:
    \begin{align*}
        & 1 + r + 2\lceil \log(B + 1) \rceil + r \cdot(2\lceil \log(2\sqrt{B \ln(2/\eta)} + 1)\rceil + 1) \\
        &\leq r \cdot 2 \lceil \log(4\sqrt{B \ln(2/\eta)} + 2)\rceil + 4\lceil \log(\sqrt{B} + 1) \rceil + 1\\
        &\leq 2 \lceil \log(4\sqrt{B \ln(2/\eta)} + 2)\rceil \cdot \left( r + 2 \right)
    \end{align*}

    \item Similarly, the number of variables is:
    \begin{align*}
        & d + 2r + 2\lceil \log(B + 1) \rceil + r \cdot(2\lceil \log(2\sqrt{B \ln(2/\eta)} + 1)\rceil + 1) \\
        &\leq d + r \cdot 2 \lceil \log(4\sqrt{B \ln(2/\eta)} + 2)\rceil + 4\lceil \log(\sqrt{B} + 1)  \rceil \\
        &\leq d + \left( 2 \lceil \log(4\sqrt{B \ln(2/\eta)} + 2)\rceil \cdot \left( r + 2 \right) \right).
    \end{align*}
\end{itemize}
Combining the soundness error of the wraparound and quadratic constraints protocol, the total soundness error is bounded by:
\begin{align*}
\errs&+ \left( \frac{ 2\sqrt{d + \left( 2 \lceil \log(4\sqrt{B \ln(2/\eta)} + 2)\rceil \cdot \left( r + 2 \right) \right)}}{q-\sqrt{d + \left( 2 \lceil \log(4\sqrt{B \ln(2/\eta)} + 2)\rceil \cdot \left( r + 2 \right) \right)}} \right.\\ & \,\,\,\left.+ \frac{2 \lceil \log(4\sqrt{B \ln(2/\eta)} + 2)\rceil \cdot \left( r + 2 \right)}{q} \right)^t.
\end{align*}
Since $q \geq \max \{81B\cdot \ln(2/\eta), 1000, 3r\}$, we have that $\lceil \log(4\sqrt{B \ln(2/\eta)} + 2) \rceil \leq \log(\sqrt{q}/2) \leq \log(q)/2$. The soundness error is thus bounded by:
\begin{align*}
\errs+ \left( \frac{ 2\sqrt{d + (\log(q)\cdot (r+2)/2)}}{q-2\sqrt{d + (\log(q)\cdot (r+2)/2)}} + \frac{\log(q)\cdot (r+2)}{2q} \right)^t.
\end{align*}

The claimed communication complexity follows from the complexity of the different sub-protocols, and the above bounds on the number of variables and constraints in the protocol run in Step (\ref{step:quadratic-check}). In particular:
\begin{itemize}

\item The second message, sent by the prover, is of length at most:
\begin{align*}
& \left(  \left ( \left \lceil \log \left( \sqrt{2 B \cdot \ln(2 / \eta)} + 1 \right) \right\rceil + 2 \right) \cdot r  \right.\\&\,\,\,\,\left.+ \left(  2\lceil \log(B +1) \rceil  \right) \right) \cdot \log (q) \\
&\leq \left(  \left( \log(q) \cdot r / 2\right)  + \left(  2 \log(q)  \right) \right) \cdot \log (q) \\
    &= \log^2(q) \cdot(\frac{r}{2}+2).
\end{align*}

\item The fourth message, sent by the prover, is of length at most:
\begin{align*}
 & t \cdot (4\sqrt{d + \left( 2 \lceil \log(4\sqrt{B \ln(2/\eta)} + 2)\rceil \cdot \left( r + 2 \right) \right)}+ 1 ) \cdot \log(q) \\
 &\leq t \cdot \left( 4\sqrt{d + (\log(q)\cdot (r+2)/2)} + 1 \right) \cdot \log(q).
\end{align*}
\end{itemize}

\changed{The zero-knowledge property follows directly from the zero-knowledge of all sub-protocols, using composition as explained in Section~\ref{sec:prelims}. We note that for all sub-protocols it is possible to complete any share to an accepting input, by setting the input vector $X$ to the all zero vector $0^d$. Doing this allows us to fix the values of all other variables accordingly. Specifically, for $1\le k\le \tau\cdot r$ the simulator sets  $g_k=1$, and for $\tau\cdot r +1\le k\le r$ sets $g_k = 0$. For each $k\in[r]$, the simulator 
    sets the bits $\{v_{k,j}\}_{j \in [b]}$ to be the Binary representation of the value $2\alpha \sqrt{B}$, and sets the bits $\{u_{k,j}\}_{j \in [b]}$ to all be $0$. It further sets 
 $s_k$ to be $0$. Finally, the simulator sets the bits$\{v'_{j'}\}$ to all be $0$ and the bits $\{u'_{j'}\}$ to the Binary representation of $B$.
 We stress that for all required applications of the quadratic constraints sub-protocol, it is easy to generate satisfying assignment for the constraints.} 
\end{proof}

\section{Differentially Private Secret Sharing}
\label{app:dzk}

We first recall a notion of near-indistinguishability used in
Differential Privacy:
\begin{definition}[$(\varepsilon,\delta)$-closeness]
Two random variables $P$ and $Q$ are said to be $(\varepsilon,\delta)$-close, denoted by $P \edclose Q$ if for all events $S$, it holds that
$\Pr[P \in S] \leq e^{\varepsilon}\cdot \Pr[Q \in S]+\delta$, and similarly, $Pr[Q \in S] \leq e^{\varepsilon}\cdot \Pr[P \in S]+\delta$.
\end{definition}

One can relax the secrecy requirements in cryptography to differential secrecy. Here we adopt the suitable notion for Zero Knowledge from~\cite{Talwar22} with the mild change of requiring the simulator to be efficient.

\begin{definition}[Differential Zero Knowledge]
We say a protocol $\pi$ is $(\varepsilon, \delta)$-Differentially Zero Knowledge w.r.t.~$L$ if there is \changed{an efficiently samplable}
distribution $Q$ such that for all $x \in L$, the distribution $\pi(x)$ of the protocol’s transcript on input $x$
satisfies $\pi(x)\edclose Q$.
\end{definition}

We remark that even if the distribution $Q$ in the above definition is {\em not} efficiently samplable, we can get an efficient simulator by paying (at most) a factor of 2 in the privacy parameters: 

\begin{remark}[A computationally bounded simulator]
If $\pi$ is an efficient protocol which is $(\varepsilon, \delta)$-Differentially Zero Knowledge w.r.t.~$L$ but with an {\em inefficient} simulator distribution $Q$, one can get an efficient simulator
distribution $Q'$ such that for all $x \in L$, the distribution $\pi(x)$ of the protocol’s transcript on input $x$
satisfies $\pi(x)\edcloseTwo Q'$. A sample from $Q'$ is obtained by running $\pi$ with some $x_0\in L$ and outputting the transcript. We assume that it is computationally easy to find {\em some} input $x_0 \in L$ (or, alternatively, the simulator can be non-uniform). 
\end{remark}

\begin{figure*} 
  \begin{boxedminipage}{\textwidth}
    \small \medskip \noindent

    \underline{\textbf{Differential Zero Knowledge $L_2$-Bound Protocol}}\\

    \textbf{Common inputs:} Dimension $d \in \Nt$, claimed bound $B \in \Nt$, field size $q \in \Nt$, parameters $\eps,\delta \in [0,1]$.  \\

    \textbf{Prover Input:} A vector $X \in \GF[q]^d$.\\

     \textbf{Claim (to be verified):} $\sum_{i=1}^d X_i^2 \leq B$ (where summation is over the integers). \\

    \textbf{The Protocol:}

    \begin{enumerate}

    \item Set $\sigma = c_{\eps, \frac{\delta}{2}}\cdot \sqrt{B}$, $\Delta = d\sigma^2(1+\frac{2\sqrt{\log 8e/\delta}}{\sqrt{d}} + \frac{2\log 8e/\delta}{d})$. Sample $R \sim \mathcal{N}^{\Delta}(0, \sigma^2 \mathbb{I})$, and define secret shares $X_j^{(0)} = -\ceil{R_j}$, $X_j^{(1)} = X_j + \ceil{R_j}$.
    Send to the verifiers these secret shares. Verifier 0 (resp. Verifier 1) will check that each received secret shares satisfies $|X^{(0)}|_2 \leq \sqrt{B}+\sqrt{\Delta}+\sqrt{d}$ (resp. $|X_j^{(1)}| \leq \sqrt{B}+\sqrt{\Delta}+\sqrt{d}$) and rejects otherwise.

    \item Run the protocol of Section \ref{subsec:inequality} to verify that $\sum_i X_i^2 \in [0,B] \pmod{q}$.

    The prover's message in this sub-protocol is sent in message 1 of the $L_2$-bound protocol. Results in alleged shares of bits $\{v'_{j'},u'_{j'}\}$.

    \item \label{step:quadratic-check-dzk} Use the quadratic constraints protocol of Corollary \ref{corollary:quadratic-repeated}, setting the number of repetitions $t$ as in Theorem \ref{thm:dzk-norm-bound}, to verify the following quadratic constraints:
    \begin{enumerate}
        \item $\sum_{i=1}^d X_i^2 = \sum_{j'} v'_{j'} \cdot 2^{j'} \pmod{q}$.

        \item the secret-shared values $\{v'_{j'},u'_{j'}\}$ are all bits (i.e. in $\bitset$).

    \end{enumerate}

    The two messages of this sub-protocol (the verifier sends the first message) are sent in messages 2 and 3 of the $L_2$-bound protocol.

    \end{enumerate}

    If the verifiers in any of the sub-protocols executed above reject, then the verifiers in the $L_2$-bound protocol reject immediately. Otherwise, they accept.

  \end{boxedminipage}

  \caption{Differential Zero Knowledge $L_2$-Bound Protocol}
  \label{fig:L2-protocol-dzk}
\end{figure*}

\section{A Distributed Fiat-Shamir Transform for Removing Interaction}
\label{sec:fiat-shamir}

We use a distributed variant of the Fiat Shamir transform to obtain a non-interactive proof system. This distributed variant was proposed in~\cite{BonehBCGI19}, adapting the Fiat-Shamir transform~\cite{FiatS86}, a technique for eliminating interaction in standalone interactive protocols, to the setting of interactive proofs on distributed data.

\paragraph{Fiat-Shamir in the standalone setting.} The original scheme was shown to be sound in the random oracle model (ROM), when applied to constant-round public-coin protocols \cite{PointchevalS96}. The basic idea behind the transform is to replace each of the (standalone) verifier's random challenges with a challenge obtained by applying  cryptographic hash function $H$ to the preceding transcript (in the random oracle model $H$ is taken to be a truly random function). This eliminates the need for interaction between the prover and the verifier. In terms of zero-knowledge, if the original protocol satisfies honest-verifier zero knowledge, then the transformed protocol is zero-knowledge in the random oracle model. It has also been shown that concrete instantiations of the hash function guarantee zero knowledge under mild ``programmability'' conditions \cite{CanettiCHLRR18,CanettiLW18}.

\paragraph{Distributed Fiat-Shamir.} As mentioned by~\cite{BonehBCGI19}, in the distributed proof setting, even for public-coin protocols where the challenges $r_i$ are indeed random (and public), applying the Fiat-Shamir transform is not straightforward. This is because both the input and the communication transcript are
distributed and cannot be revealed to any single verifier. To get around this difficulty, they propose to let the prover generate each random challenge $r_i$ based on the joint view of the verifiers in previous rounds, by adding blinding values to the view of each verifier. This way, the validity of a challenge can be verified jointly by the verifiers, where each verifier checks the correlation to its private view (being given also the appropriate blinding value).

For an  $m$-round zero-knowledge proof protocol with two verifiers and public challenges $r_1,\ldots, r_m$, the distributed variant of the Fiat-Shamir transform suggested by~\cite{BonehBCGI19} proceeds as follows. The prover derives each random challenge
$r_i$ as the hash of the random challenges $r_{i,0}$ and $r_{i,1}$, where $r_{i,j}$ is obtained by hashing the view of the verifier $V_j$ up to this point. Concretely, let
$H \colon \set{0,1}^*\mapsto \set{0,1}^\kappa$
be a random hash function, where
$\kappa$ is a security parameter. For each round $i\in[m]$ and $j \in\set{0,1}$, the prover $P$ selects the blinding value  $\nu_{i,j}\leftarrow \set{0,1}^{\kappa}$ and sets $r_{i,j}=H(i,j,x_j,\nu_{i,j},\pi_{i,j})$ (to be the part of the challenge that is verifiable by $V_j$). Finally, the prover sets the $i$'th challenge $r_i$ to be $r_{i}=H(i,\bot, r_{i,0},r_{i,1})$.

Note that the blinding values $\nu_{i,j}$ ensure that the hashes
$r_{i,j}$ do not leak any information about $\pi_{i,j}$. Furthermore, if the prover sends each verifier $V_j$ the blinding value $\nu_{i,j}$, then
the verifiers can verify the validity of the challenges by replaying the process in which the prover constructed them.

Finally, we emphasize that the transformed non-interactive proof system maintains the zero-knowledge guarantee of the original interactive protocol. If that protocol was public coins, and the only communication between the verifiers was exchanging a single message at the end (see Definition \ref{def:dZKIP}), then the non-interactive protocol is dZK even against malicious verifiers (each verifier's behavior does not effect the messages sent by the prover or by the other verifier). For concrete instantiations of the protocol, this will be true so long as the hash family is programmable, see \cite{CanettiCHLRR18,CanettiLW18} (this is a very mild condition). Thus, our transformed non-interactive proofs all satisfy zero-knowledge also against a single malicious verifier (we always assume that at least one verifier is honest).

\paragraph{Proving the norm with Differential Zero Knowledge.}
We next describe a way for a
client to share a vector $X \in L = \{X \in \mathbb{Z}^d : \|X\|_2^2 \leq B\}$, while preserving differential secrecy. Here, the client will share each $X_i$ by adding (rounded) truncated Gaussian noise\footnote{Alternately, one can also use a truncation of the Discrete Gaussian Mechanism~\cite{canonne2020discrete}, which will give similar bounds.} of magnitude large enough to guarantee differential privacy.

For parameters $\sigma,\Delta$, let $\mathcal{N}^{\Delta}(\mu, \sigma^2 \mathbb{I})$ be the multi-dimensional Gaussian distribution with variance $\sigma^2 \mathbb{I}$, conditioned on the $\ell_2$ norm being at most $\sqrt{\Delta}$.
The following lemma is standard; we provide a proof for completeness. 
\begin{lemma}
  \label{lem:gaussian-dzk}
    Let $\eps,\delta \in (0,1)$, and let $x \in \mathbb{Z}^d$ satisfy $\|x\|_2^2 \leq B$. Let $\sigma^2 = c_{\eps,\frac{\delta}{2}}^2 B$, where $c_{\eps, \delta} =  \sqrt{2\ln \frac{1.25}{\delta}} / \eps$ and let $\Delta \geq d\sigma^2(1+\frac{2\sqrt{\log 8e/\delta}}{\sqrt{d}} + \frac{2\log 8e/\delta}{d})$.
  Let $P \sim \mathcal{N}^{\Delta}(0, \sigma^2 \mathbb{I})$ and $Q \sim x + \mathcal{N}^{\Delta}(0, \sigma^2 \mathbb{I})$.
  Then $P \edclose Q$. It follows that $\lceil P \rceil \edclose \lceil Q \rceil$.
\end{lemma}
\begin{proof}
  Let $P'$ and $Q'$ denote samples from un-truncated Gaussians $\mathcal{N}(0, \sigma^2 \mathbb{I})$ and $ x + \mathcal{N}(0, \sigma^2 \mathbb{I})$ respectively.  With this $\sigma$, the privacy analysis of the Gaussian Mechanism~\cite{DworkR14} implies that without truncation, the two random variables $P'$ and $Q'$ would be $(\eps,\delta/2)$-close.
  Let $r$ the probability that a sample from $P'$ lies in the $\ell_2$ ball; We choose $\Delta$ so that the probability of truncation $(1-r)$ is at most $\delta/8e$. Indeed recall that the squared $\ell_2$ norm of $P'$ (scaled by $\sigma^2$) is a sample from a chi-square distribution $\chi_d^2$. We will use the following tail bounds for $\chi_k^2$ random variables from Laurent and Massart \cite[Lemma 1 rephrased]{LaurentM2000}:
\begin{theorem}
  \label{thm:chi2_tails}
  Let $Z$ be a $\chi_k^2$ random variable. Then for any $\beta >0$,
  \begin{align*}
    \Pr[\frac{1}{k}Z \geq 1 + 2\sqrt{\beta/k} + 2\beta/k] &\leq \exp(-\beta).
  \end{align*}
  \end{theorem}
  Setting $\beta = \delta/8e$, we conclude that $\Delta = d\sigma^2(1+\frac{2\sqrt{\log 8e/\delta}}{\sqrt{d}} + \frac{2\log 8e/\delta}{d})$ suffices.

  Now for any event $E$, we write
  \begin{align*}
  \Pr_P[E] &\leq r^{-1}\Pr_{P'}[E]\\
  &\leq r^{-1} (e^\eps \cdot \Pr_{Q'}[E] + \delta/2)\\
  &\leq r^{-1} (e^{\eps}\cdot (r\Pr_Q[E] + (1-r)) + \delta/2)\\
  &\leq e^{\eps}\cdot \Pr_Q[E] + r^{-1}((1-r)e^{\eps}+\delta/2)\\
  &\leq e^{\eps}\cdot \Pr_Q[E] + (8/7)((\delta/8e)(e)+\delta/2)\\
  &\leq e^{\eps}\cdot \Pr_Q[E] + \delta.
\end{align*}
Finally, the standard post-processing property of differential privacy implies that applying co-ordinate-wise ceiling to $P$ and $Q$ preserves $(\eps,\delta)$-closeness.
\end{proof}
We note that the parameter $c_{\eps,\frac{\delta}{2}}$ may be replaced by a tighter numerical bound using the Analytical Gaussian Mechanism~\cite{BalleW18} without impacting the result.

The client (prover) will sample truncated gaussian noise and take the ceiling to get $R$. It will then secret-share $X$ as $-R$ and $X+R$. Note that since $\|X\|_2 \leq \sqrt{B}$ and $\|R\|_{2} \leq \sqrt{\Delta}+\sqrt{d}$, it follows that $\max(\|-R\|_2, \|X+R\|_{2}) \leq \Lambda \eqdef \sqrt{B}+\sqrt{\Delta}+\sqrt{d}$.
For any shares $R^{(1)}, R^{(2)}$ with $\|R^{(i)}\|_2 \leq \Lambda$, it is immediate the $\|R^{(1)}+R^{(2)}\|_2^2 \leq  4\Lambda^2$.
Thus for $q > 4 \Lambda^2$, there can be no wraparound, and it suffices to validate that the squared norm modulo $q$ is bounded.

Note that for an honest prover, the secret sharing scheme always succeeds, and satisfies $(\eps,\delta)$-differential zero knowledge. Further, for $q > 4\Lambda^2$, there is no wraparound and thus the protocol inherits the soundness of the ``range check modulo $q$'' protocol. We estimate this lower bound on $q$ in terms of the parameters:
\begin{align*}
  4&\Lambda^2  \leq 4 (\sqrt{B} + \sqrt{\Delta} + \sqrt{d})^2\\
  &= 4\left(\sqrt{B} + \sqrt{d} + \sigma\sqrt{d\cdot(1+\frac{2\sqrt{\log 8e/\delta}}{\sqrt{d}} + \frac{2\log 8e/\delta}{d})}\right)^2\\
  &= 4\left(\sqrt{B} + \sqrt{d} + \sqrt{dBc_{\eps,\frac{\delta}{2}}\cdot(1+\frac{2\sqrt{\log 8e/\delta}}{\sqrt{d}} + \frac{2\log 8e/\delta}{d})}\right)^2\\
\end{align*}
\dzknormbound*
\begin{proof}

  Completeness and soundness follow directly from the sub-protocols. For the quadratic constraints protocol that is run Step (\ref{step:quadratic-check-dzk}) there are:
  \begin{itemize}
      \item The number of constraints is $1 + 2\lceil \log(B + 1) \rceil$.
      \item The number of variables is $d + 2\lceil \log(B + 1) \rceil$.
  \end{itemize}
  Thus from~\cref{lemma:quadratic}, the soundness error is:
  \begin{align*}
\left( \frac{ 2\sqrt{d + 2\lceil \log(B + 1) \rceil}}{q-\sqrt{d + 2\lceil \log(B + 1) \rceil}} + \frac{1 + 2\lceil \log(B + 1) \rceil}{q} \right)^t.
  \end{align*}
  The claimed communication complexity follows from the complexity of the different sub-protocols, and the above bounds on the number of variables and constraints in the protocol run in Step (\ref{step:quadratic-check-dzk}).

  We next argue the zero knowledge property. The simulator for verifier $1$ samples $R \sim \mathcal{N}^{\Delta}(0, \sigma^2 \mathbb{I})$, and outputs $-\ceil{R}$. This gives perfect zero knowledge for this step.
  The simulator for verifier $2$ samples $R \sim \mathcal{N}^{\Delta}(0, \sigma^2 \mathbb{I})$, and outputs $\ceil{R}$. \cref{lem:gaussian-dzk} implies that this step satisfies $(\eps,\delta)$-zero knowledge.
  The zero knowledge property of steps (2) and (3) follows from the ZK property of the respective subprotocols.
\end{proof}

\fi

\section{Further Performance Evaluations}
\label{sec:appendix-evaluations}

We provide further performance evaluations for the PINE protocols (statistical and differential ZK). We remark that throughout the appendix, the performance evaluations use a finer-grained search for parameters than the one used for the evaluations in Tables \ref{table:intro-performance}, \ref{table:perf-eval-szk -64}, \ref{table:perf-eval-szk-128}, \ref{table:perf-eval-dzk-64} that appear in the body of the paper.

\paragraph{Runtime evaluation.} While our main focus in this work is on the improved communication complexity of PINE, we also evaluate the runtime overhead for the prover and the verifier. 
Our main finding is that, for a wide range of parameters, the runtimes of both the prover and the verifier are improved by 1-2 orders of magnitude compared to prior work. In more detail: first, we evaluate the runtime in terms of the number of multiplications of field elements performed by statistical and differential PINE, with varying soundness errors and differential privacy parameter (for differential ZK PINE). We follow prior work in focusing on the number of multiplications, as these are the most expensive operation and they dominate the runtime. For a field of size $2^{64}$, we evaluate the Prover's runtime in Table \ref{table:perf-eval-comp-cost-prover-multiplications-comb-errors--200-64} and the verifier's runtime in Table \ref{table:perf-eval-comp-cost-verifier-multiplications-comb-errors--200-64}. For a field of size $2^{128}$, evaluate the Prover's runtime in Table \ref{table:perf-eval-comp-cost-prover-multiplications-comb-errors--200-128} and the verifier's runtime in Table \ref{table:perf-eval-comp-cost-verifier-multiplications-comb-errors--200-128}. We take the prior work of \cite{BonehBCGI19,libprio} as our baseline. Our main finding is that both variants of PINE obtain a 1-2 order of magnitude improvement in the prover's and the verifier's runtime compared to the baseline. These improvements are mostly due to avoiding the need to secret share each bit of each field element. For the most part, Differential PINE has a slightly lower runtime overhead for both the prover and the verifier. 

We also evaluate the total number of field operations performed by PINE: counting {\em both additions and multiplications}. This diverges from prior work, which focused on the number of multiplications (since these are more expensive operations), but we find it prudent because the statistical PINE protocol is ``addition-heavy'': a major step in the statistical PINE protocol, which is repeated many times, is computing sums of the form $\sum Z_i X_i$, where the $X_i$'s are (secret-shared) field elements, but the $Z_i$'s are always in $\{-1,0,1\}$. Thus, this sum is computed using only addition and subtraction operations (with no multiplication of general field elements). These runtime overheads in terms of total field operations are in Tables 
\ref{table:perf-eval-comp-cost-prover-multiplications-and-additions-comb-errors--200-64} (prover runtime, field size $2^{64}$), Table \ref{table:perf-eval-comp-cost-verifier-multiplications-and-additions-comb-errors--200-64} (verifier runtime, field size $2^{64}$), Table \ref{table:perf-eval-comp-cost-prover-multiplications-and-additions-comb-errors--200-128}  (prover runtime, field size $2^{128}$) and Table \ref{table:perf-eval-comp-cost-verifier-multiplications-and-additions-comb-errors--200-128}  (verifier runtime, field size $2^{128}$).

\ifconf
\begin{table*}[!t]
\else
\begin{table}[!t]
\fi

\centering
\begin{tabular}{| c | c || c | c | c | c | c | c | c | c |}
    \hline
    $\rho$ & Protocol & $d=10^4$ & $d=10^5$ & $d=10^6$ & $d=10^7$\\ [0.5ex]
    \hline\hline
    $2^{-50}$ & Prior work & $ 5.66 * 10^{7} $ & $ 8.1 * 10^{8} $ & $ 5.43 * 10^{9} $ & $ 7.62 * 10^{10} $ \\
    & Stat. PINE & $ {1.23\%}^{} $ & $ {0.81\%}^{} $ & $ {0.80\%}^{} $ & $ {0.80\%}^{} $ \\
    & $10^{-1}$-DP PINE & $ {0.79\%}^{} $ & $ {0.78\%}^{} $ & $ {0.79\%}^{*} $ & $ {0.79\%}^{*} $ \\
    & $10^{-2}$-DP PINE & $ {0.79\%}^{*} $ & $ {0.78\%}^{*} $ & $ {0.79\%}^{*} $ & $ {0.79\%}^{**} $ \\
    & $10^{-3}$-DP PINE & $ {0.79\%}^{*} $ & $ {0.78\%}^{**} $ & $ {0.79\%}^{**} $ & $ {0.79\%}^{***} $ \\
    \hline
    $2^{-100}$ & Prior work & $ 8.48 * 10^{7} $ & $ 1.21 * 10^{9} $ & $ 8.14 * 10^{9} $ & $ 1.14 * 10^{11} $ \\
    & Stat. PINE & $ {1.82\%}^{} $ & $ {1.09\%}^{} $ & $ {1.05\%}^{} $ & $ {1.05\%}^{} $ \\
    & $10^{-1}$-DP PINE & $ {1.05\%}^{} $ & $ {1.04\%}^{} $ & $ {1.04\%}^{*} $ & $ {1.05\%}^{*} $ \\
    & $10^{-2}$-DP PINE & $ {1.05\%}^{*} $ & $ {1.04\%}^{*} $ & $ {1.04\%}^{*} $ & $ {1.05\%}^{**} $ \\
    & $10^{-3}$-DP PINE & $ {1.05\%}^{*} $ & $ {1.04\%}^{**} $ & $ {1.04\%}^{**} $ & $ {1.05\%}^{***} $ \\
    \hline
    $2^{-200}$ & Prior work & $ 1.41 * 10^{8} $ & $ 2.02 * 10^{9} $ & $ 1.36 * 10^{10} $ & $ 2.29 * 10^{11} $ \\
    & Stat. PINE & $ {4.08\%}^{} $ & $ {1.44\%}^{} $ & $ {1.28\%}^{} $ & $ {1.05\%}^{} $ \\
    & $10^{-1}$-DP PINE & $ {1.25\%}^{} $ & $ {1.24\%}^{} $ & $ {1.25\%}^{*} $ & $ {1.05\%}^{*} $ \\
    & $10^{-2}$-DP PINE & $ {1.25\%}^{*} $ & $ {1.24\%}^{*} $ & $ {1.25\%}^{*} $ & $ {1.05\%}^{**} $ \\
    & $10^{-3}$-DP PINE & $ {1.25\%}^{*} $ & $ {1.24\%}^{**} $ & $ {1.25\%}^{**} $ & $ {0.79\%}^{***} $ \\ [1ex]
    \hline

\end{tabular}

\caption{{\textbf{\color{blue}Prover's runtime}}. In number of multiplications of field elements. Runtime for prior work (the benchmark) is in absolute numbers. Runtime for our protocols is in percentages compared to the benchmark. Parameters: {\textbf{\color{blue} field size $q \approx 2^{64}$, zero-knowledge error $\delta = 2^{-200}$}}. We note for Differential ZK, larger field size than $2^{64}$ is needed for larger $d$ and smaller $\epsilon$. * means we use field size $q \approx 2^{72}$, ** means we use field size $q \approx 2^{80}$, *** means we use field size $q \approx 2^{88}$.}
\label{table:perf-eval-comp-cost-prover-multiplications-comb-errors--200-64}

\medskip

\centering
\centering
\begin{tabular}{| c | c || c | c | c | c | c | c | c | c |}
    \hline
    $\rho$ & Protocol & $d=10^4$ & $d=10^5$ & $d=10^6$ & $d=10^7$\\ [0.5ex]
    \hline\hline
    $2^{-50}$ & Prior work & $ 6.34 * 10^{6} $ & $ 9.06 * 10^{7} $ & $ 6.07 * 10^{8} $ & $ 8.52 * 10^{9} $ \\
    & Stat. PINE & $ {1.81\%}^{} $ & $ {1.38\%}^{} $ & $ {1.36\%}^{} $ & $ {1.37\%}^{} $ \\
    & $10^{-1}$-DP PINE & $ {1.36\%}^{} $ & $ {1.35\%}^{} $ & $ {1.35\%}^{*} $ & $ {1.37\%}^{*} $ \\
    & $10^{-2}$-DP PINE & $ {1.36\%}^{*} $ & $ {1.35\%}^{*} $ & $ {1.35\%}^{*} $ & $ {1.37\%}^{**} $ \\
    & $10^{-3}$-DP PINE & $ {1.36\%}^{*} $ & $ {1.35\%}^{**} $ & $ {1.35\%}^{**} $ & $ {1.37\%}^{***} $ \\
    \hline
    $2^{-100}$ & Prior work & $ 9.51 * 10^{6} $ & $ 1.36 * 10^{8} $ & $ 9.1 * 10^{8} $ & $ 1.28 * 10^{10} $ \\
    & Stat. PINE & $ {2.62\%}^{} $ & $ {1.86\%}^{} $ & $ {1.81\%}^{} $ & $ {1.82\%}^{} $ \\
    & $10^{-1}$-DP PINE & $ {1.81\%}^{} $ & $ {1.80\%}^{} $ & $ {1.80\%}^{*} $ & $ {1.82\%}^{*} $ \\
    & $10^{-2}$-DP PINE & $ {1.81\%}^{*} $ & $ {1.80\%}^{*} $ & $ {1.80\%}^{*} $ & $ {1.82\%}^{**} $ \\
    & $10^{-3}$-DP PINE & $ {1.81\%}^{*} $ & $ {1.80\%}^{**} $ & $ {1.80\%}^{**} $ & $ {1.82\%}^{***} $ \\
    \hline
    $2^{-200}$ & Prior work & $ 1.59 * 10^{7} $ & $ 2.26 * 10^{8} $ & $ 1.52 * 10^{9} $ & $ 2.56 * 10^{10} $ \\
    & Stat. PINE & $ {5.12\%}^{} $ & $ {2.36\%}^{} $ & $ {2.20\%}^{} $ & $ {1.83\%}^{} $ \\
    & $10^{-1}$-DP PINE & $ {2.17\%}^{} $ & $ {2.16\%}^{} $ & $ {2.17\%}^{*} $ & $ {1.82\%}^{*} $ \\
    & $10^{-2}$-DP PINE & $ {2.17\%}^{*} $ & $ {2.16\%}^{*} $ & $ {2.17\%}^{*} $ & $ {1.82\%}^{**} $ \\
    & $10^{-3}$-DP PINE & $ {2.17\%}^{*} $ & $ {2.16\%}^{**} $ & $ {2.17\%}^{**} $ & $ {1.37\%}^{***} $ \\ [1ex]
    \hline

\end{tabular}

\caption{{\textbf{\color{blue}Verifier's runtime}}. In number of multiplications of field elements. Runtime for prior work (the benchmark) is in absolute numbers. Runtime for our protocols is in percentages compared to the benchmark.  Parameters: {\textbf{\color{blue} field size $q \approx 2^{64}$, zero-knowledge error $\delta = 2^{-200}$}}. We note for Differential ZK, larger field size than $2^{64}$ is needed for larger $d$ and smaller $\epsilon$. * means we use field size $q \approx 2^{72}$, ** means we use field size $q \approx 2^{80}$, *** means we use field size $q \approx 2^{88}$.}
\label{table:perf-eval-comp-cost-verifier-multiplications-comb-errors--200-64}

\ifconf 
\end{table*} 
\else 
\end{table}
\fi

\ifconf
\begin{table*}[!t]
\else
\begin{table}[!t]
\fi

\centering
\begin{tabular}{| c | c || c | c | c | c | c | c | c | c |}
    \hline
    $\rho$ & Protocol & $d=10^4$ & $d=10^5$ & $d=10^6$ & $d=10^7$\\ [0.5ex]
    \hline\hline
    $2^{-50}$ & Prior work & $ 2.83 * 10^{7} $ & $ 4.05 * 10^{8} $ & $ 2.71 * 10^{9} $ & $ 3.81 * 10^{10} $ \\
    & Stat. PINE & $ {2.43\%}^{} $ & $ {1.62\%}^{} $ & $ {1.59\%}^{} $ & $ {1.59\%}^{} $ \\
    & $10^{-1}$-DP PINE & $ {1.59\%}^{} $ & $ {1.57\%}^{} $ & $ {1.58\%}^{} $ & $ {1.59\%}^{} $ \\
    & $10^{-2}$-DP PINE & $ {1.59\%}^{} $ & $ {1.57\%}^{} $ & $ {1.58\%}^{} $ & $ {1.59\%}^{} $ \\
    & $10^{-3}$-DP PINE & $ {1.59\%}^{} $ & $ {1.57\%}^{} $ & $ {1.58\%}^{} $ & $ {1.59\%}^{} $ \\
    \hline
    $2^{-100}$ & Prior work & $ 2.83 * 10^{7} $ & $ 4.05 * 10^{8} $ & $ 2.71 * 10^{9} $ & $ 3.81 * 10^{10} $ \\
    & Stat. PINE & $ {2.75\%}^{} $ & $ {1.65\%}^{} $ & $ {1.60\%}^{} $ & $ {1.59\%}^{} $ \\
    & $10^{-1}$-DP PINE & $ {1.59\%}^{} $ & $ {1.57\%}^{} $ & $ {1.58\%}^{} $ & $ {1.59\%}^{} $ \\
    & $10^{-2}$-DP PINE & $ {1.59\%}^{} $ & $ {1.57\%}^{} $ & $ {1.58\%}^{} $ & $ {1.59\%}^{} $ \\
    & $10^{-3}$-DP PINE & $ {1.59\%}^{} $ & $ {1.57\%}^{} $ & $ {1.58\%}^{} $ & $ {1.59\%}^{} $ \\
    \hline
    $2^{-200}$ & Prior work & $ 5.66 * 10^{7} $ & $ 8.1 * 10^{8} $ & $ 5.43 * 10^{9} $ & $ 7.62 * 10^{10} $ \\
    & Stat. PINE & $ {5.11\%}^{} $ & $ {1.80\%}^{} $ & $ {1.60\%}^{} $ & $ {1.58\%}^{} $ \\
    & $10^{-1}$-DP PINE & $ {1.57\%}^{} $ & $ {1.55\%}^{} $ & $ {1.57\%}^{} $ & $ {1.58\%}^{} $ \\
    & $10^{-2}$-DP PINE & $ {1.57\%}^{} $ & $ {1.55\%}^{} $ & $ {1.57\%}^{} $ & $ {1.58\%}^{} $ \\
    & $10^{-3}$-DP PINE & $ {1.57\%}^{} $ & $ {1.55\%}^{} $ & $ {1.57\%}^{} $ & $ {1.58\%}^{} $ \\ [1ex]
    \hline

\end{tabular}

\caption{{\textbf{\color{blue}Prover's runtime}}. In number of multiplications of field elements. Runtime for prior work (the benchmark) is in absolute numbers. Runtime for our protocols is in percentages compared to the benchmark. Parameters: {\textbf{\color{blue} field size $q \approx 2^{128}$, zero-knowledge error $\delta = 2^{-200}$}}. }
\label{table:perf-eval-comp-cost-prover-multiplications-comb-errors--200-128}

\centering
\begin{tabular}{| c | c || c | c | c | c | c | c | c | c |}
    \hline
    $\rho$ & Protocol & $d=10^4$ & $d=10^5$ & $d=10^6$ & $d=10^7$\\ [0.5ex]
    \hline\hline
    $2^{-50}$ & Prior work & $ 3.17 * 10^{6} $ & $ 4.53 * 10^{7} $ & $ 3.03 * 10^{8} $ & $ 4.26 * 10^{9} $ \\
    & Stat. PINE & $ {3.60\%}^{} $ & $ {2.76\%}^{} $ & $ {2.72\%}^{} $ & $ {2.74\%}^{} $ \\
    & $10^{-1}$-DP PINE & $ {2.72\%}^{} $ & $ {2.70\%}^{} $ & $ {2.71\%}^{} $ & $ {2.74\%}^{} $ \\
    & $10^{-2}$-DP PINE & $ {2.72\%}^{} $ & $ {2.70\%}^{} $ & $ {2.71\%}^{} $ & $ {2.74\%}^{} $ \\
    & $10^{-3}$-DP PINE & $ {2.72\%}^{} $ & $ {2.70\%}^{} $ & $ {2.71\%}^{} $ & $ {2.74\%}^{} $ \\
    \hline
    $2^{-100}$ & Prior work & $ 3.17 * 10^{6} $ & $ 4.53 * 10^{7} $ & $ 3.03 * 10^{8} $ & $ 4.26 * 10^{9} $ \\
    & Stat. PINE & $ {3.93\%}^{} $ & $ {2.78\%}^{} $ & $ {2.72\%}^{} $ & $ {2.74\%}^{} $ \\
    & $10^{-1}$-DP PINE & $ {2.72\%}^{} $ & $ {2.70\%}^{} $ & $ {2.71\%}^{} $ & $ {2.74\%}^{} $ \\
    & $10^{-2}$-DP PINE & $ {2.72\%}^{} $ & $ {2.70\%}^{} $ & $ {2.71\%}^{} $ & $ {2.74\%}^{} $ \\
    & $10^{-3}$-DP PINE & $ {2.72\%}^{} $ & $ {2.70\%}^{} $ & $ {2.71\%}^{} $ & $ {2.74\%}^{} $ \\
    \hline
    $2^{-200}$ & Prior work & $ 6.34 * 10^{6} $ & $ 9.06 * 10^{7} $ & $ 6.07 * 10^{8} $ & $ 8.52 * 10^{9} $ \\
    & Stat. PINE & $ {6.40\%}^{} $ & $ {2.96\%}^{} $ & $ {2.75\%}^{} $ & $ {2.74\%}^{} $ \\
    & $10^{-1}$-DP PINE & $ {2.72\%}^{} $ & $ {2.70\%}^{} $ & $ {2.71\%}^{} $ & $ {2.74\%}^{} $ \\
    & $10^{-2}$-DP PINE & $ {2.72\%}^{} $ & $ {2.70\%}^{} $ & $ {2.71\%}^{} $ & $ {2.74\%}^{} $ \\
    & $10^{-3}$-DP PINE & $ {2.72\%}^{} $ & $ {2.70\%}^{} $ & $ {2.71\%}^{} $ & $ {2.74\%}^{} $ \\ [1ex]
    \hline

\end{tabular}

\caption{{\textbf{\color{blue}Verifier's runtime}}. In number of multiplications of field elements. Runtime for prior work (the benchmark) is in absolute numbers. Runtime for our protocols is in percentages compared to the benchmark. Parameters: {\textbf{\color{blue} field size $q \approx 2^{128}$, zero-knowledge error $\delta = 2^{-200}$}}. }
\label{table:perf-eval-comp-cost-verifier-multiplications-comb-errors--200-128}

\ifconf 
\end{table*} 
\else 
\end{table}
\fi

\ifconf
\begin{table*}[!t]
\else
\begin{table}[!t]
\fi

\centering
\begin{tabular}{| c | c || c | c | c | c | c | c | c | c |}
    \hline
    $\rho$ & Protocol & $d=10^4$ & $d=10^5$ & $d=10^6$ & $d=10^7$\\ [0.5ex]
    \hline\hline
    $2^{-50}$ & Prior work & $ 1.4 * 10^{8} $ & $ 2.05 * 10^{9} $ & $ 1.39 * 10^{10} $ & $ 1.99 * 10^{11} $ \\
    & Stat. PINE & $ {1.60\%}^{} $ & $ {1.08\%}^{} $ & $ {1.21\%}^{} $ & $ {1.08\%}^{} $ \\
    & $10^{-1}$-DP PINE & $ {0.75\%}^{} $ & $ {0.75\%}^{} $ & $ {0.76\%}^{*} $ & $ {0.77\%}^{*} $ \\
    & $10^{-2}$-DP PINE & $ {0.75\%}^{*} $ & $ {0.75\%}^{*} $ & $ {0.76\%}^{*} $ & $ {0.77\%}^{**} $ \\
    & $10^{-3}$-DP PINE & $ {0.75\%}^{*} $ & $ {0.75\%}^{**} $ & $ {0.76\%}^{**} $ & $ {0.77\%}^{***} $ \\
    \hline
    $2^{-100}$ & Prior work & $ 2.1 * 10^{8} $ & $ 3.08 * 10^{9} $ & $ 2.09 * 10^{10} $ & $ 2.98 * 10^{11} $ \\
    & Stat. PINE & $ {2.26\%}^{} $ & $ {1.41\%}^{} $ & $ {1.55\%}^{} $ & $ {1.39\%}^{} $ \\
    & $10^{-1}$-DP PINE & $ {0.99\%}^{} $ & $ {0.99\%}^{} $ & $ {1.00\%}^{*} $ & $ {1.02\%}^{*} $ \\
    & $10^{-2}$-DP PINE & $ {0.99\%}^{*} $ & $ {0.99\%}^{*} $ & $ {1.00\%}^{*} $ & $ {1.02\%}^{**} $ \\
    & $10^{-3}$-DP PINE & $ {0.99\%}^{*} $ & $ {0.99\%}^{**} $ & $ {1.00\%}^{**} $ & $ {1.02\%}^{***} $ \\
    \hline
    $2^{-200}$ & Prior work & $ 3.5 * 10^{8} $ & $ 5.14 * 10^{9} $ & $ 3.48 * 10^{10} $ & $ 5.96 * 10^{11} $ \\
    & Stat. PINE & $ {4.54\%}^{} $ & $ {1.79\%}^{} $ & $ {1.84\%}^{} $ & $ {1.37\%}^{} $ \\
    & $10^{-1}$-DP PINE & $ {1.18\%}^{} $ & $ {1.18\%}^{} $ & $ {1.20\%}^{*} $ & $ {1.01\%}^{*} $ \\
    & $10^{-2}$-DP PINE & $ {1.18\%}^{*} $ & $ {1.18\%}^{*} $ & $ {1.20\%}^{*} $ & $ {1.01\%}^{**} $ \\
    & $10^{-3}$-DP PINE & $ {1.18\%}^{*} $ & $ {1.18\%}^{**} $ & $ {1.20\%}^{**} $ & $ {0.76\%}^{***} $ \\ [1ex]
    \hline

\end{tabular}

\caption{{\textbf{\color{blue} Prover's runtime: number of multiplications and additions of field elements}}. Runtime for prior work (the benchmark) is in absolute numbers. Runtime for our protocols is in percentages compared to the benchmark. Parameters: {\textbf{\color{blue} field size $q \approx 2^{64}$, zero-knowledge error $\delta = 2^{-200}$}}. For Differential ZK, larger field size is needed for larger $d$ and smaller $\epsilon$. * means we use field size $q \approx 2^{72}$, ** means we use field size $q \approx 2^{80}$, *** means we use field size $q \approx 2^{88}$.}
\label{table:perf-eval-comp-cost-prover-multiplications-and-additions-comb-errors--200-64}

\centering
\begin{tabular}{| c | c || c | c | c | c | c | c | c | c |}
    \hline
    $\rho$ & Protocol & $d=10^4$ & $d=10^5$ & $d=10^6$ & $d=10^7$\\ [0.5ex]
    \hline\hline
    $2^{-50}$ & Prior work & $ 1.48 * 10^{7} $ & $ 2.18 * 10^{8} $ & $ 1.48 * 10^{9} $ & $ 2.13 * 10^{10} $ \\
    & Stat. PINE & $ {5.92\%}^{} $ & $ {4.18\%}^{} $ & $ {5.51\%}^{} $ & $ {4.25\%}^{} $ \\
    & $10^{-1}$-DP PINE & $ {1.30\%}^{} $ & $ {1.31\%}^{} $ & $ {1.32\%}^{*} $ & $ {1.34\%}^{*} $ \\
    & $10^{-2}$-DP PINE & $ {1.30\%}^{*} $ & $ {1.31\%}^{*} $ & $ {1.32\%}^{*} $ & $ {1.34\%}^{**} $ \\
    & $10^{-3}$-DP PINE & $ {1.30\%}^{*} $ & $ {1.31\%}^{**} $ & $ {1.32\%}^{**} $ & $ {1.34\%}^{***} $ \\
    \hline
    $2^{-100}$ & Prior work & $ 2.22 * 10^{7} $ & $ 3.27 * 10^{8} $ & $ 2.22 * 10^{9} $ & $ 3.19 * 10^{10} $ \\
    & Stat. PINE & $ {7.59\%}^{} $ & $ {5.24\%}^{} $ & $ {6.85\%}^{} $ & $ {5.32\%}^{} $ \\
    & $10^{-1}$-DP PINE & $ {1.74\%}^{} $ & $ {1.74\%}^{} $ & $ {1.75\%}^{*} $ & $ {1.78\%}^{*} $ \\
    & $10^{-2}$-DP PINE & $ {1.74\%}^{*} $ & $ {1.74\%}^{*} $ & $ {1.75\%}^{*} $ & $ {1.78\%}^{**} $ \\
    & $10^{-3}$-DP PINE & $ {1.74\%}^{*} $ & $ {1.74\%}^{**} $ & $ {1.75\%}^{**} $ & $ {1.78\%}^{***} $ \\
    \hline
    $2^{-200}$ & Prior work & $ 3.7 * 10^{7} $ & $ 5.46 * 10^{8} $ & $ 3.7 * 10^{9} $ & $ 6.39 * 10^{10} $ \\
    & Stat. PINE & $ {10.72\%}^{} $ & $ {6.22\%}^{} $ & $ {7.94\%}^{} $ & $ {5.15\%}^{} $ \\
    & $10^{-1}$-DP PINE & $ {2.08\%}^{} $ & $ {2.09\%}^{} $ & $ {2.11\%}^{*} $ & $ {1.78\%}^{*} $ \\
    & $10^{-2}$-DP PINE & $ {2.08\%}^{*} $ & $ {2.09\%}^{*} $ & $ {2.11\%}^{*} $ & $ {1.78\%}^{**} $ \\
    & $10^{-3}$-DP PINE & $ {2.08\%}^{*} $ & $ {2.09\%}^{**} $ & $ {2.11\%}^{**} $ & $ {1.34\%}^{***} $ \\ [1ex]
    \hline

\end{tabular}

\caption{{\textbf{\color{blue} Verifier's runtime: number of multiplications and additions of field elements}}. Runtime for prior work (the benchmark) is in absolute numbers. Runtime for our protocols is in percentages compared to the benchmark. Parameters: {\textbf{\color{blue} field size $q \approx 2^{64}$, zero-knowledge error $\delta = 2^{-200}$}}. For Differential ZK, larger field size is needed for larger $d$ and smaller $\epsilon$. * means we use field size $q \approx 2^{72}$, ** means we use field size $q \approx 2^{80}$, *** means we use field size $q \approx 2^{88}$.}
\label{table:perf-eval-comp-cost-verifier-multiplications-and-additions-comb-errors--200-64}

\ifconf 
\end{table*} 
\else 
\end{table}
\fi

\ifconf
\begin{table*}[!t]
\else
\begin{table}[!t]
\fi

\centering
\begin{tabular}{| c | c || c | c | c | c | c | c | c | c |}
    \hline
    $\rho$ & Protocol & $d=10^4$ & $d=10^5$ & $d=10^6$ & $d=10^7$\\ [0.5ex]
    \hline\hline
    $2^{-50}$ & Prior work & $ 7.0 * 10^{7} $ & $ 1.03 * 10^{9} $ & $ 6.95 * 10^{9} $ & $ 9.93 * 10^{10} $ \\
    & Stat. PINE & $ {3.16\%}^{} $ & $ {2.14\%}^{} $ & $ {2.40\%}^{} $ & $ {2.15\%}^{} $ \\
    & $10^{-1}$-DP PINE & $ {1.49\%}^{} $ & $ {1.49\%}^{} $ & $ {1.52\%}^{} $ & $ {1.53\%}^{} $ \\
    & $10^{-2}$-DP PINE & $ {1.49\%}^{} $ & $ {1.49\%}^{} $ & $ {1.52\%}^{} $ & $ {1.53\%}^{} $ \\
    & $10^{-3}$-DP PINE & $ {1.49\%}^{} $ & $ {1.49\%}^{} $ & $ {1.52\%}^{} $ & $ {1.53\%}^{} $ \\
    \hline
    $2^{-100}$ & Prior work & $ 7.0 * 10^{7} $ & $ 1.03 * 10^{9} $ & $ 6.95 * 10^{9} $ & $ 9.93 * 10^{10} $ \\
    & Stat. PINE & $ {4.22\%}^{} $ & $ {2.67\%}^{} $ & $ {3.15\%}^{} $ & $ {2.67\%}^{} $ \\
    & $10^{-1}$-DP PINE & $ {1.49\%}^{} $ & $ {1.49\%}^{} $ & $ {1.52\%}^{} $ & $ {1.53\%}^{} $ \\
    & $10^{-2}$-DP PINE & $ {1.49\%}^{} $ & $ {1.49\%}^{} $ & $ {1.52\%}^{} $ & $ {1.53\%}^{} $ \\
    & $10^{-3}$-DP PINE & $ {1.49\%}^{} $ & $ {1.49\%}^{} $ & $ {1.52\%}^{} $ & $ {1.53\%}^{} $ \\
    \hline
    $2^{-200}$ & Prior work & $ 1.4 * 10^{8} $ & $ 2.05 * 10^{9} $ & $ 1.39 * 10^{10} $ & $ 1.99 * 10^{11} $ \\
    & Stat. PINE & $ {6.45\%}^{} $ & $ {2.76\%}^{} $ & $ {3.08\%}^{} $ & $ {2.61\%}^{} $ \\
    & $10^{-1}$-DP PINE & $ {1.48\%}^{} $ & $ {1.48\%}^{} $ & $ {1.50\%}^{} $ & $ {1.52\%}^{} $ \\
    & $10^{-2}$-DP PINE & $ {1.48\%}^{} $ & $ {1.48\%}^{} $ & $ {1.50\%}^{} $ & $ {1.52\%}^{} $ \\
    & $10^{-3}$-DP PINE & $ {1.48\%}^{} $ & $ {1.48\%}^{} $ & $ {1.50\%}^{} $ & $ {1.52\%}^{} $ \\ [1ex]
    \hline

\end{tabular}

\caption{{\textbf{\color{blue} Prover's runtime: number of multiplications and additions of field elements}}. Runtime for prior work (the benchmark) is in absolute numbers. Runtime for our protocols is in percentages compared to the benchmark. Parameters: {\textbf{\color{blue} field size $q \approx 2^{128}$, zero-knowledge error $\delta = 2^{-200}$}}.}
\label{table:perf-eval-comp-cost-prover-multiplications-and-additions-comb-errors--200-128}

\centering
\begin{tabular}{| c | c || c | c | c | c | c | c | c | c |}
    \hline
    $\rho$ & Protocol & $d=10^4$ & $d=10^5$ & $d=10^6$ & $d=10^7$\\ [0.5ex]
    \hline\hline
    $2^{-50}$ & Prior work & $ 7.4 * 10^{6} $ & $ 1.09 * 10^{8} $ & $ 7.41 * 10^{8} $ & $ 1.06 * 10^{10} $ \\
    & Stat. PINE & $ {11.68\%}^{} $ & $ {8.26\%}^{} $ & $ {10.88\%}^{} $ & $ {8.41\%}^{} $ \\
    & $10^{-1}$-DP PINE & $ {2.60\%}^{} $ & $ {2.61\%}^{} $ & $ {2.63\%}^{} $ & $ {2.68\%}^{} $ \\
    & $10^{-2}$-DP PINE & $ {2.60\%}^{} $ & $ {2.61\%}^{} $ & $ {2.63\%}^{} $ & $ {2.68\%}^{} $ \\
    & $10^{-3}$-DP PINE & $ {2.60\%}^{} $ & $ {2.61\%}^{} $ & $ {2.63\%}^{} $ & $ {2.68\%}^{} $ \\
    \hline
    $2^{-100}$ & Prior work & $ 7.4 * 10^{6} $ & $ 1.09 * 10^{8} $ & $ 7.41 * 10^{8} $ & $ 1.06 * 10^{10} $ \\
    & Stat. PINE & $ {19.02\%}^{} $ & $ {13.04\%}^{} $ & $ {17.90\%}^{} $ & $ {13.29\%}^{} $ \\
    & $10^{-1}$-DP PINE & $ {2.60\%}^{} $ & $ {2.61\%}^{} $ & $ {2.63\%}^{} $ & $ {2.68\%}^{} $ \\
    & $10^{-2}$-DP PINE & $ {2.60\%}^{} $ & $ {2.61\%}^{} $ & $ {2.63\%}^{} $ & $ {2.68\%}^{} $ \\
    & $10^{-3}$-DP PINE & $ {2.60\%}^{} $ & $ {2.61\%}^{} $ & $ {2.63\%}^{} $ & $ {2.68\%}^{} $ \\
    \hline
    $2^{-200}$ & Prior work & $ 1.48 * 10^{7} $ & $ 2.18 * 10^{8} $ & $ 1.48 * 10^{9} $ & $ 2.13 * 10^{10} $ \\
    & Stat. PINE & $ {20.67\%}^{} $ & $ {12.70\%}^{} $ & $ {17.18\%}^{} $ & $ {12.77\%}^{} $ \\
    & $10^{-1}$-DP PINE & $ {2.60\%}^{} $ & $ {2.61\%}^{} $ & $ {2.63\%}^{} $ & $ {2.68\%}^{} $ \\
    & $10^{-2}$-DP PINE & $ {2.60\%}^{} $ & $ {2.61\%}^{} $ & $ {2.63\%}^{} $ & $ {2.68\%}^{} $ \\
    & $10^{-3}$-DP PINE & $ {2.60\%}^{} $ & $ {2.61\%}^{} $ & $ {2.63\%}^{} $ & $ {2.68\%}^{} $ \\ [1ex]
    \hline

\end{tabular}

\caption{{\textbf{\color{blue} Verifier's runtime: number of multiplications and additions of field elements}}. Runtime for prior work (the benchmark) is in absolute numbers. Runtime for our protocols is in percentages compared to the benchmark. Parameters: {\textbf{\color{blue} field size $q \approx 2^{128}$, zero-knowledge error $\delta = 2^{-200}$}}.}
\label{table:perf-eval-comp-cost-verifier-multiplications-and-additions-comb-errors--200-128}

\ifconf 
\end{table*} 
\else 
\end{table}
\fi

\paragraph{Further evaluation of communication overhead.} We provide further evaluation of PINE's communication overhead with different combinations of soundness error and statistical zero-knowledge error (the completeness error equals the statistical zero-knowledge error throughout). These evaluations are for statistical PINE, and show the overhead over the naive baseline (no robustness).
Tables \ref{table:perf-eval-comb-errors-64} and \ref{table:perf-eval-comb-errors-128} provide further information beyond what was presented in Tables 1-3 in the introduction.

\ifconf
\begin{table*}[!t]
\else
\begin{table}[!t]
\fi

\centering

\begin{tabular}{| c | c || c | c | c | c |}
    \hline
    $\rho$ & $\delta$ & $d=10^4$ & $d=10^5$ & $d=10^6$ & $d=10^7$ \\ [0.5ex]
    \hline\hline
    $2^{-50}$ & $2^{-50}$ & $17.87\%$ & $2.77\%$ & $0.45\%$ & $0.13\%$\\
    & $2^{-100}$ & $18.96\%$ & $2.90\%$ & $0.46\%$ & $0.13\%$\\
    & $2^{-200}$ & $20.26\%$ & $3.01\%$ & $0.47\%$ & $0.13\%$\\
    \hline
    $2^{-100}$ & $2^{-50}$ & $35.55\%$ & $5.52\%$ & $0.89\%$ & $0.26\%$\\
    & $2^{-100}$ & $36.85\%$ & $5.65\%$ & $0.91\%$ & $0.26\%$\\
    & $2^{-200}$ & $38.19\%$ & $5.79\%$ & $0.92\%$ & $0.26\%$\\
    \hline
    $2^{-200}$ & $2^{-50}$ & $79.71\%$ & $11.91\%$ & $1.87\%$ & $0.52\%$\\
    & $2^{-100}$ & $81.26\%$ & $12.06\%$ & $1.89\%$ & $0.52\%$\\
    & $2^{-200}$ & $84.17\%$ & $12.35\%$ & $1.92\%$ & $0.52\%$\\ [1ex]
    \hline

\end{tabular}
\caption{Statistical PINE Communication: overhead compared to non-robust sharing (see Table \ref{table:intro-performance}), different combinations of soundness error $\rho$ and zero-knowledge error $\delta$. {\textbf{\color{blue} Field size $q \approx 2^{64}$}}.}
\label{table:perf-eval-comb-errors-64}

\begin{tabular}{| c | c || c | c | c | c |}
    \hline
    $\rho$ & $\delta$ & $d=10^4$ & $d=10^5$ & $d=10^6$ & $d=10^7$ \\ [0.5ex]
    \hline\hline
    $2^{-50}$ & $2^{-50}$ & $17.66\%$ & $2.75\%$ & $0.45\%$ & $0.13\%$\\
    & $2^{-100}$ & $18.96\%$ & $2.88\%$ & $0.46\%$ & $0.13\%$\\
    & $2^{-200}$ & $20.03\%$ & $2.99\%$ & $0.47\%$ & $0.13\%$\\
    \hline
    $2^{-100}$ & $2^{-50}$ & $28.69\%$ & $3.85\%$ & $0.56\%$ & $0.14\%$\\
    & $2^{-100}$ & $29.97\%$ & $3.98\%$ & $0.57\%$ & $0.14\%$\\
    & $2^{-200}$ & $31.27\%$ & $4.11\%$ & $0.58\%$ & $0.14\%$\\
    \hline
    $2^{-200}$ & $2^{-50}$ & $61.27\%$ & $8.09\%$ & $1.15\%$ & $0.28\%$\\
    & $2^{-100}$ & $62.78\%$ & $8.25\%$ & $1.17\%$ & $0.28\%$\\
    & $2^{-200}$ & $64.97\%$ & $8.46\%$ & $1.19\%$ & $0.28\%$\\ [1ex]
    \hline

\end{tabular}
\caption{Statistical PINE Communication: overhead compared to non-robust sharing (see Table \ref{table:intro-performance}), different combinations of soundness error $\rho$ and zero-knowledge error $\delta$. {\textbf{\color{blue} Field size $q \approx 2^{128}$}}.}
\label{table:perf-eval-comb-errors-128}

\ifconf 
\end{table*} 
\else 
\end{table}
\fi

\paragraph{Verbose evaluation tables.} To give further details on some of the parameter choices in PINE, we also provide more detailed tables that show the communication overhead for different choices of soundness and zero-knowledge errors (similarly to Tables \ref{table:perf-eval-comb-errors-64} and \ref{table:perf-eval-comb-errors-128}), but also detail the internal choices of parameters such as $t$, the number of repetitions of the quadratic constraints protocol, $r$, the number of repetitions of the wraparound protocol, and $\tau$, the fraction of those repetitions that should succeed for the verifier to accept. These results are detailed in Table \ref{table:perf-eval-comb-errors-64-verbose} (for field size $2^{64}$) and Table \ref{table:perf-eval-comb-errors-128-verbose} (for field size $2^{128}$).

\ifconf
\begin{table*}[!t]
\else
\begin{table}[!t]
\fi

\centering

\begin{tabular}{|c | c | c || c | c | c | c |}
    \hline
    $\rho$ & $\delta$ & $d$ & $t$ & $r$ & $\tau$ & Overhead \\ [0.5ex]
    \hline\hline
    $2^{-50}$ & $2^{-50}$ & $10^4$ & $1$ & $51$ & $1.0000$ & $17.87\%$ \\
    & & $10^5$ & $1$ & $51$ & $1.0000$ & $2.77\%$ \\
    & & $10^6$ & $1$ & $51$ & $1.0000$ & $0.45\%$ \\
    & & $10^7$ & $1$ & $51$ & $1.0000$ & $0.13\%$ \\
    \hline
    & $2^{-100}$ & $10^4$ & $1$ & $56$ & $0.9821$ & $18.96\%$ \\
    & & $10^5$ & $1$ & $57$ & $0.9825$ & $2.90\%$ \\
    & & $10^6$ & $1$ & $57$ & $0.9825$ & $0.46\%$ \\
    & & $10^7$ & $1$ & $57$ & $0.9825$ & $0.13\%$ \\
    \hline
    & $2^{-200}$ & $10^4$ & $1$ & $62$ & $0.9677$ & $20.26\%$ \\
    & & $10^5$ & $1$ & $62$ & $0.9677$ & $3.01\%$ \\
    & & $10^6$ & $1$ & $62$ & $0.9677$ & $0.47\%$ \\
    & & $10^7$ & $1$ & $62$ & $0.9677$ & $0.13\%$ \\
    \hline
    $2^{-100}$ & $2^{-50}$ & $10^4$ & $2$ & $101$ & $1.0000$ & $35.55\%$ \\
    & & $10^5$ & $2$ & $101$ & $1.0000$ & $5.52\%$ \\
    & & $10^6$ & $2$ & $101$ & $1.0000$ & $0.89\%$ \\
    & & $10^7$ & $2$ & $101$ & $1.0000$ & $0.26\%$ \\
    \hline
    & $2^{-100}$ & $10^4$ & $2$ & $107$ & $0.9907$ & $36.85\%$ \\
    & & $10^5$ & $2$ & $107$ & $0.9907$ & $5.65\%$ \\
    & & $10^6$ & $2$ & $107$ & $0.9907$ & $0.91\%$ \\
    & & $10^7$ & $2$ & $108$ & $0.9907$ & $0.26\%$ \\
    \hline
    & $2^{-200}$ & $10^4$ & $2$ & $113$ & $0.9823$ & $38.19\%$ \\
    & & $10^5$ & $2$ & $113$ & $0.9823$ & $5.79\%$ \\
    & & $10^6$ & $2$ & $113$ & $0.9823$ & $0.92\%$ \\
    & & $10^7$ & $2$ & $113$ & $0.9823$ & $0.26\%$ \\
    \hline
    $2^{-200}$ & $2^{-50}$ & $10^4$ & $4$ & $201$ & $1.0000$ & $79.71\%$ \\
    & & $10^5$ & $4$ & $201$ & $1.0000$ & $11.91\%$ \\
    & & $10^6$ & $4$ & $201$ & $1.0000$ & $1.87\%$ \\
    & & $10^7$ & $4$ & $201$ & $1.0000$ & $0.52\%$ \\
    \hline
    & $2^{-100}$ & $10^4$ & $4$ & $208$ & $0.9952$ & $81.26\%$ \\
    & & $10^5$ & $4$ & $208$ & $0.9952$ & $12.06\%$ \\
    & & $10^6$ & $4$ & $208$ & $0.9952$ & $1.89\%$ \\
    & & $10^7$ & $4$ & $208$ & $0.9952$ & $0.52\%$ \\
    \hline
    & $2^{-200}$ & $10^4$ & $4$ & $215$ & $0.9907$ & $84.17\%$ \\
    & & $10^5$ & $4$ & $215$ & $0.9907$ & $12.35\%$ \\
    & & $10^6$ & $4$ & $215$ & $0.9907$ & $1.92\%$ \\
    & & $10^7$ & $4$ & $215$ & $0.9907$ & $0.52\%$ \\
    \hline

\end{tabular}
\caption{Statistical PINE Communication (verbose): overhead compared to non-robust sharing (see Table \ref{table:intro-performance}), different combinations of soundness error $\rho$ and zero-knowledge error $\delta$. Parameters: {\textbf{\color{blue} field size $q \approx 2^{64}$}}, $\alpha = 7.99996948$, $\eta = 2^{-91.33}$.}
\label{table:perf-eval-comb-errors-64-verbose}

\ifconf 
\end{table*} 
\else 
\end{table}
\fi

\ifconf
\begin{table*}[!t]
\else
\begin{table}[!t]
\fi

\centering

\begin{tabular}{|c | c | c || c | c | c | c |}
    \hline
    $\rho$ & $\delta$ & $d$ & $t$ & $r$ & $\tau$ & Overhead \\ [0.5ex]
    \hline\hline
    $2^{-50}$ & $2^{-50}$ & $10^4$ & $1$ & $50$ & $1.0000$ & $17.66\%$ \\
    & & $10^5$ & $1$ & $50$ & $1.0000$ & $2.75\%$ \\
    & & $10^6$ & $1$ & $50$ & $1.0000$ & $0.45\%$ \\
    & & $10^7$ & $1$ & $50$ & $1.0000$ & $0.13\%$ \\
    \hline
    & $2^{-100}$ & $10^4$ & $1$ & $56$ & $0.9821$ & $18.96\%$ \\
    & & $10^5$ & $1$ & $56$ & $0.9821$ & $2.88\%$ \\
    & & $10^6$ & $1$ & $56$ & $0.9821$ & $0.46\%$ \\
    & & $10^7$ & $1$ & $56$ & $0.9821$ & $0.13\%$ \\
    \hline
    & $2^{-200}$ & $10^4$ & $1$ & $61$ & $0.9672$ & $20.03\%$ \\
    & & $10^5$ & $1$ & $61$ & $0.9672$ & $2.99\%$ \\
    & & $10^6$ & $1$ & $61$ & $0.9672$ & $0.47\%$ \\
    & & $10^7$ & $1$ & $61$ & $0.9672$ & $0.13\%$ \\
    \hline
    $2^{-100}$ & $2^{-50}$ & $10^4$ & $1$ & $101$ & $1.0000$ & $28.69\%$ \\
    & & $10^5$ & $1$ & $101$ & $1.0000$ & $3.85\%$ \\
    & & $10^6$ & $1$ & $101$ & $1.0000$ & $0.56\%$ \\
    & & $10^7$ & $1$ & $101$ & $1.0000$ & $0.14\%$ \\
    \hline
    & $2^{-100}$ & $10^4$ & $1$ & $107$ & $0.9907$ & $29.97\%$ \\
    & & $10^5$ & $1$ & $107$ & $0.9907$ & $3.98\%$ \\
    & & $10^6$ & $1$ & $107$ & $0.9907$ & $0.57\%$ \\
    & & $10^7$ & $1$ & $107$ & $0.9907$ & $0.14\%$ \\
    \hline
    & $2^{-200}$ & $10^4$ & $1$ & $113$ & $0.9823$ & $31.27\%$ \\
    & & $10^5$ & $1$ & $113$ & $0.9823$ & $4.11\%$ \\
    & & $10^6$ & $1$ & $113$ & $0.9823$ & $0.58\%$ \\
    & & $10^7$ & $1$ & $113$ & $0.9823$ & $0.14\%$ \\
    \hline
    $2^{-200}$ & $2^{-50}$ & $10^4$ & $2$ & $201$ & $1.0000$ & $61.27\%$ \\
    & & $10^5$ & $2$ & $201$ & $1.0000$ & $8.09\%$ \\
    & & $10^6$ & $2$ & $201$ & $1.0000$ & $1.15\%$ \\
    & & $10^7$ & $2$ & $201$ & $1.0000$ & $0.28\%$ \\
    \hline
    & $2^{-100}$ & $10^4$ & $2$ & $208$ & $0.9952$ & $62.78\%$ \\
    & & $10^5$ & $2$ & $208$ & $0.9952$ & $8.25\%$ \\
    & & $10^6$ & $2$ & $208$ & $0.9952$ & $1.17\%$ \\
    & & $10^7$ & $2$ & $208$ & $0.9952$ & $0.28\%$ \\
    \hline
    & $2^{-200}$ & $10^4$ & $2$ & $215$ & $0.9907$ & $64.97\%$ \\
    & & $10^5$ & $2$ & $215$ & $0.9907$ & $8.46\%$ \\
    & & $10^6$ & $2$ & $215$ & $0.9907$ & $1.19\%$ \\
    & & $10^7$ & $2$ & $215$ & $0.9907$ & $0.28\%$ \\
    \hline

\end{tabular}
\caption{Statistical PINE Communication (verbose): overhead compared to non-robust sharing (see Table \ref{table:intro-performance}), different combinations of soundness error $\rho$ and zero-knowledge error $\delta$. Parameters: {\textbf{\color{blue} field size $q \approx 2^{128}$}}, $\alpha = 7.99996948$, $\eta = 2^{-91.33}$.}
\label{table:perf-eval-comb-errors-128-verbose}

\ifconf 
\end{table*} 
\else 
\end{table}
\fi

\fi
\end{document}